\documentclass[review]{elsarticle}

\usepackage{lineno}
\usepackage{hyperref}
\modulolinenumbers[5]

\journal{Journal}




\bibliographystyle{model2-names.bst}\biboptions{authoryear}







\usepackage{graphicx}
\usepackage{geometry}
\geometry{margin=1in}

\usepackage{amsmath} 
\usepackage{amssymb}
\usepackage{amsthm}
\theoremstyle{plain}
\newtheorem{theorem}{Theorem}
\newtheorem{corollary}[theorem]{Corollary}
\newtheorem{lemma}[theorem]{Lemma}
\newtheorem{proposition}[theorem]{Proposition}

\newtheorem*{theorem*}{Theorem}
\newtheorem*{corollary*}{Corollary}
\newtheorem*{lemma*}{Lemma}
\newtheorem*{proposition*}{Proposition}

\usepackage{bm}
\usepackage{bbm}

\usepackage{subcaption}

\usepackage{mathtools}

\usepackage{multirow}

\usepackage[flushleft]{threeparttable}

\usepackage{tabularx}
\newcolumntype{Y}{>{\centering\arraybackslash}X}
\newcolumntype{Z}{>{\raggedleft\arraybackslash}X}

\newcommand{\T}{\mathsf{T}} 
\date{August 10, 2020}

\begin{document}

\begin{frontmatter}
\title{Centralizing-Unitizing Standardized High-Dimensional Directional Statistics and Its Applications in Finance}


\author[mymainaddress]{Yijian Chuan}
\ead{ychuan@pku.edu.cn}

\author[mymainaddress,mysecondaryaddress]{Lan Wu\corref{mycorrespondingauthor}}
\cortext[mycorrespondingauthor]{Corresponding author}
\ead{lwu@pku.edu.cn}

\address[mymainaddress]{School of Mathematical Sciences, Peking University, Beijing, China}
\address[mysecondaryaddress]{Key Laboratory of Mathematical Economics and Quantitative Finance, Peking University,	Beijing, China}
\begin{abstract}
Cross-sectional ``Information Coefficient'' (IC) is a widely and deeply accepted measure in portfolio management. The paper gives an insight into IC in view of high-dimensional directional statistics: IC is a linear operator on the components of a centralizing-unitizing standardized random vector of next-period cross-sectional returns. Our primary research first clearly defines IC with the high-dimensional directional statistics, discussing its first two moments. We derive the closed-form expressions of the directional statistics' covariance matrix and IC's variance in a homoscedastic condition. Also, we solve the optimization of IC's maximum expectation and minimum variance. Simulation intuitively characterizes the standardized directional statistics and IC's p.d.f.. The empirical analysis of the Chinese stock market uncovers interesting facts about the standardized vectors of cross-sectional returns and helps obtain the time series of the measure in the real market. The paper discovers a potential application of directional statistics in finance, proves explicit results of the projected normal distribution, and reveals IC's nature.
\end{abstract}

\begin{keyword}
high-dimension\sep
directional statistics\sep
cross-sectional returns\sep
Information Coefficient\sep
projected distribution\sep
portfolio optimization
\end{keyword}
\end{frontmatter}

\section{Introduction}

%

Directional statistics concerns a population of unit multivariates in $ L^{2} $ norm, or, equivalently, on a unit hypersphere.
In some application scenarios of high-dimensional statistics, people usually standardize the multivariate data by \textit{centralizing and unitizing} the data taken from general $ \mathbb{R}^{n} $ like 
$ (\bm{z}-\widehat{\mu}) / \|\bm{z}-\widehat{\mu}\| $.
Such centralizing-unitizing standardized multivariate data is a particular direction statistic orthogonal to the vector whose elements are all $1$.
The dimension is often higher than $ 3 $ and even more than $ 100 $, particularly in finance.
There is little literature about the probabilistic characteristics and statistical inference of direction statistics on a high-dimensional sphere, let alone such centralizing-unitizing standardized multivariate concerned in the paper.

This paper's original idea is motivated by questions raised in the industry: ``For a running strategy, what kind of $ \mathrm{IC} $ time series can be considered an invalid strategy? 
Is there a feasible statistical inference method of $ \mathrm{IC} $ data?" 
$ \mathrm{IC} $ is the abbreviation of Information Coefficient of the predicted and actual return values.
The first and fundamental thing is to define $ \mathrm{IC} $ in a probabilistic theoretical viewpoint and to reveal the probabilistic and statistical properties.
In a probabilistic view, we define $ \mathrm{IC} $ as a linear operator on a centralizing-unitizing standardized multivariate.
The $ \mathrm{IC} $ measures the similarity between the series of the predicted and actual values, like a metric function.
In active portfolio management, $ \mathrm{IC} $ can be regarded as the weighted sum of the standardization of next-period cross-sectional returns.
In the rest of this paper, we denote $ \mathrm{IC} $ by $ T_{\bm{\chi}(\bm{Z})}(\bm{\theta}) $, where $ \bm{Z} $ is a random vector on $ \mathbb{R}^{n} $, $ \bm{\chi}(\bm{Z}) $ is the standardization of $ \bm{Z} $ on a hypersphere, and $ \bm{\theta} $ is the predicted values and the weights.
The precise definition is in \eqref{equ:T} of Section~\ref{sec:models}.
The time series of $ \mathrm{IC} $ is an interesting topic in academia and industry.
\citet{coggin1983} thought that it originated from sample errors, and used ``meta-analysis'' to correct it.
After that, \citet{qian2004active} re-distinguished the definition of raw $ \mathrm{IC} $  and risk-adjusted $ \mathrm{IC} $ , regarding $ \mathrm{IC} $'s volatility as ``strategy risk''.
\citet{ye2008ICt} also investigated the impact of $ \mathrm{IC} $'s variation on the performance of investment specifically.
\citet{ding2017flam} systematically developed a stationary econometric model of $ \mathrm{IC} $  for mean-variance portfolios.
The dynamic modeling of realized $ \mathrm{IC} $  is an essential and lasting issue in financial investment research.

Since a pioneer work of \citet{fisher1953}, directional statistics has flourished to a certain degree.
\citet{mardia1972statistics}, \citet{mardia2000directional}, and \citet{ley2017modern, ley2018applied} summarized and developed the main works.
The current development of directional statistics helps us understand $ \mathrm{IC} $'s nature, and provides some important methods and tools.
However, due to $ \mathrm{IC} $'s high-dimensional essence, it is not easy to directly apply the existing results and methods.
A natural idea is first to explore the directional statistical properties of a standardized high-dimensional normal distribution.
Even in a normal distribution background, it is also a challenging problem.
One of the most cutting-edge research concerned in the paper is \citet{presnell2008MRL_PN}, which obtained the closed-form expressions of the mean direction ($ \mathrm{MD} $) and mean resultant length ($ \mathrm{MRL} $) of the projected normal distributed $ \frac{\bm{Z}}{\|\bm{Z}\|} $ in the homoscedastic condition of $ \bm{Z} \sim N \left ( \cdot, \sigma^{2}I \right ) $.

In this paper, we set up models and develop their probabilistic properties.
First, we clarify the standardized random vector and its linear combination. 
We define the centralizing and unitizing standardization as a mapping $ \bm{\chi} $ and generate the standardized random vector $ \bm{\chi}(\bm{Z}) $ on a hypersphere from the pre-standardized $ \bm{Z} $ on $ \mathbb{R}^{n} $, which is different from the projected random vector $ \frac{\bm{Z}}{\|\bm{Z}\|} $.
Given the standardized weight $ \bm{\theta} $, we define the linear combination of the standardized random vector's components $  T_{\bm{\chi}(\bm{Z})}(\bm{\theta}) $, standing for $\mathrm{IC} $ in finance.
Second, we develop their probabilistic properties. 
We propose a representation theorem to connect the standardized random vector $ \bm{\chi}(\bm{Z}) $ with a projected one $ \frac{\bm{Z}^{\prime}}{\|\bm{Z}^{\prime}\|} $.
In a homoscedastic condition,
we express $ \bm{\chi}(\bm{Z}) $'s $ \mathrm{MD} $/$ \mathrm{MRL} $ and covariance matrix explicitly.
We then derive $ T_{\bm{\chi}(\bm{Z})}(\bm{\theta}) $'s expectation and variance from $ \bm{\chi}(\bm{Z}) $'s first two moments and their closed-form expressions in the homoscedastic condition.
Third, we discuss the optimization problems of $ T_{\bm{\chi}(\bm{Z})}(\bm{\theta}) $'s expectation and variance:
$ \mathbb{E} T_{\bm{\chi}(\bm{Z})}(\bm{\theta}) $'s maximum is $ \bm{\chi}(\bm{Z}) $'s $ \mathrm{MD} $,
$ \mathrm{var} \left( T_{\bm{\chi}(\bm{Z})}(\bm{\theta}) \right) $'s minimum is $ \mathrm{cov}\left (\bm{\chi}(\bm{Z}) \right ) $'s second smallest eigenvalue, and the two optimizations share the same solution in the homoscedastic condition.

We carry out the numerical simulation, aiming to supplement the probabilistic properties.
Due to the standardized distribution's inherent complication, we can hardly get an intuitive sense of the theory.
We illustrate the impact of the parameters of $\bm{\mu} $ and $ \Sigma$ on the $ \mathrm{MD} $ intuitively in $ 3 $-dim cases and simulate the p.d.f.'s of $ T_{\bm{\chi}(\bm{Z})}(\bm{\theta}) $ with real market parameters, from which we analyze the impact of the dimension and heteroscedasticity, meeting the previous conclusion in \cite{coggin1983} and \cite{grinold1994alpha}.
We also compare the approximate $ \mathrm{MRL} $ in theory and its true value in simulation.

We implement empirical studies based on the Chinese stock market to explore the high-dimensional statistics in the real market.
For one thing, we assume the returns to be i.i.d., and clarify their statistical properties.
Specifically, we first explore the descriptive statistics of the sample of the standardized vectors $ \left \{ \bm{x}_{t} \right \} $ in a traditional approach in directional statistics, such as the sample $ \mathrm{MD} $/$ \mathrm{MRL} $ and the scatter matrix.
The high- and low-value components of the sample $ \mathrm{MD} $ vary in different patterns, indicating the distinct behaviors between the high- and low-return stocks.
The scatter matrix's eigenvalues are uneven in the sense that the sum of the $ 20 $ largest eigenvalues accounts for more than $ 50\% $ of the sum of the all $ 185 $ ones,
implying the standardized returns vary along with few specific directions. 
Then, we reduce the dimension of the sample $ \left \{\bm{x}_{t} \right \} $, by taking their inner product with a vector $ \widehat{\bm{\iota}} $ such as the sample $ \mathrm{MD} $, and generate the $ 1 $-dim sample $ \left\{\widehat{\bm{\iota}}^{\T}\bm{x}_{t}\right\} $.
We illustrate the descriptive statistics, histograms, and box plots of $ \left\{\widehat{\bm{\iota}}^{\T}\bm{x}_{t}\right\} $.
The market's centralization varies in different time windows, 
and $\left \{ \bm{x}_{t}\right \} $ may be multimodal distributed.
The sample $ \mathrm{MRL} $ of $ \left \{ \bm{x}_{t} \right \} $ and the sample standard deviation of $ \left \{ \bm{\iota}^{\T}\bm{x}_{t} \right \} $ result in opposing conclusions about the variance, which could be interpreted as their different perspectives.
What's more, we analyze the connection between pre- and post-standardized $ \left \{ \bm{z}_{t} \right \} $ and $ \left \{ \bm{x}_{t} \right \} $: The sample correlation coefficient between $ x_{it} $ and $ x_{jt} $ is significantly lower than that between $ z_{it} $ and $ z_{jt} $ on average, 
which, in finance, means that the standardization eliminates the \textit{beta} part of the returns.
For another, we analyze their time-series properties.
In detail, we illustrate the time series of the representative $ z_{it} $ and $ x_{it} $, discovering similar volatility clustering phenomena.
The standardized $ x_{it} $ has more information than the pre-standardized $ z_{it} $.
The time series of the estimate of $ \mathrm{IC} $ is compared with the cross-sectional standard deviation, revealing different market information.

The rest of the paper is organized as follows.
Section~\ref{sec:models} defines the models and deduces the probabilistic properties.
Section~\ref{sec:simulation} applies simulation to supplement and corroborate the theory.
Section~\ref{sec:empirical} implements empirical studies in the Chinese stock market.
Section~\ref{sec:conclusion} concludes.

\section{Models and Theoretical Results}\label{sec:models}

Given the probability space $ (\Omega, \mathcal{F}, \mathbb{P}) $,
directional statistics concerns the random vector $ \bm{X} \in \mathcal{S}^{n-1} := \left\{ \bm{x}\in \mathbb{R}^{n} \big| \|\bm{x}\| =1 \right\} $, where $ \|\bm{x}\| := \sqrt{\sum_{i=1}^{n} x_{i}^{2}} $.
We introduce two basic numerical characteristics:
the mean direction ($ \mathrm{MD} $) and the mean resultant length ($ \mathrm{MRL} $) of $ \bm{X} $:
\begin{align*}
\mathrm{MD}(\bm{X})
:=&
\frac{\mathbb{E}\bm{X}}{\|\mathbb{E}\bm{X}\|}
,
\\
\mathrm{MRL}(\bm{X})
:=&
\|\mathbb{E}\bm{X}\|
,
\end{align*}
where the $\mathrm{MD} $ plays the role of the expectation and the $\mathrm{MRL} $ the variance 
(see \citealp[\S 9.2, p. 163-164]{mardia2000directional} and \citealp[\S 1.3, p. 10-12]{ley2017modern} for more details).
For simplicity, we assume that, in the paper, all the $ \mathrm{MRL} $'s are strictly greater than $ 0 $, so the $ \mathrm{MD} $'s are well-defined.
One of the most common approaches to set up a random vector on $ \mathcal{S}^{n-1} $ relies on the following projection function:
\begin{equation}
\begin{aligned}
\mathbb{R}^{n}\setminus\{\bm{0}\}
\rightarrow&\ 
\mathbb{R}^{n}
,
\\
\bm{z} 
\mapsto&\ 
\frac{\bm{z}}{\|\bm{z}\|}
.
\end{aligned}
\label{equ:DP}
\end{equation}
Then, 
for a random vector $ \bm{Z} \in \mathbb{R}^{n} $,
we have
$ \frac{\bm{Z}}{\|\bm{Z}\|} \in \mathcal{S}^{n-1}$, and call it has a projected distribution.
Particularly, if $ \bm{Z} $ has a multivariate normal distribution,
then $ \frac{\bm{Z}}{\|\bm{Z}\|} $ has a projected normal distribution (PND, see \citealp[\S 1.5, p. 12 and \S 3.5.6, p. 46]{mardia2000directional}).

\subsection{Model Setup and Basic Probabilistic Properties}
In some studies concerning high-dimensional random vectors taking values in $ \mathbb{R}^{n} $, 
people standardize the data cross-sectionally by centralization and unitization, say $ \frac{z - \mu(z)}{\sigma(z)} $.
The standardized random vectors are a particular kind of random vectors in directional statistics.
It is fundamentally different from the function \eqref{equ:DP}
because the \textit{centralization} leads to the natural singularity of its covariance matrix.
This subsection 
discusses the connection between pre-standardized high-dimensional random vectors in $ \mathbb{R}^{n} $ and the standardized direction statistics in $\mathcal{S}^{n-1}$,
and explore the properties of the linear combination of the components of the directional statistics.

\subsubsection{The Standardized Random Vector in Directional Statistics:
	\texorpdfstring{$ \bm{\chi}(\bm{Z}) $}{$ chi(Z) $}
}\label{subsec:chi_Z}

We define the standardization function,
introduce the particular random vector, 
and then clarify its basic properties with $ 2 $-dim cases.

To begin with, we define the \textit{centralizing and unitizing} standardization function $ \bm{\chi} $.
Define the $ n $-dim centering matrix $ P $ as
\begin{align}
P
:=&\ 
I - \frac{1}{n}\bm{1}\bm{1}^{\T}
,
\label{equ:P}
\end{align} 
where $ I $ is an $ n $-dim identity matrix and $ \bm{1} := (1, 1, \cdots, 1)^{\T} \in \mathbb{R}^{n} $.
Given $ \bm{z} \in \mathbb{R}^{n} $, 
$ P\bm{z} = \bm{z} - \overline{z} \bm{1} $ is similar to $ \bm{z} - \mu(\bm{z}) $ subtracting the mean from every component of a vector,
and
$ \frac{1}{\sqrt{n}}\|P\bm{z}\| = \sqrt{\frac{1}{n}\sum_{j=1}^{n} \left(z_{j} - \overline{z}\right)^{2}} $ to $ \sigma(\bm{z}) $,
where $ \overline{z} := \frac{1}{n}\sum_{i=1}^{n}z_{i} $.
Thus, 
$ \frac{P\bm{z}}{\|P\bm{z}\|} $ is the result of the \textit{centralizing and unitizing} standardization process of $ \bm{z} $.
For convenience and conciseness, 
we define the standardization process formally as a multivariate function:
\begin{equation}
\begin{aligned}
\bm{\chi}: 
\mathbb{R}^{n} \setminus \langle \bm{1} \rangle \rightarrow&\ 
\mathbb{R}^{n},
\\
\bm{z}
\mapsto&\ 
\frac{P\bm{z}}{\|P\bm{z}\|},
\end{aligned}
\label{equ:chi}
\end{equation}
where $ \langle \bm{1} \rangle := \left\{ \bm{x}\in\mathbb{R}^{n} \big| \bm{x} = a\bm{1}, a\in \mathbb{R} \right\} $ is excluded in the domain for well-definition. 

Then, we introduce the standardized random vector $ \bm{\chi}(\bm{Z}) $.
Give a random vector $ \bm{Z} \in \mathbb{R}^{n} $, such that its expectation $ \mathbb{E}\bm{Z} =: \bm{\mu} $, and its covariance matrix $ \mathrm{cov}(\bm{Z}) =: \Sigma $ ($ \Sigma $ is positive definite).
We are interested in the random vector
\begin{align}
\bm{\chi}(\bm{Z}) = \frac{P\bm{Z}}{\|P\bm{Z}\|},
\label{equ:chi_Z}
\end{align}
which is the standardized random vector from $ \bm{Z} $ with function \eqref{equ:chi}.

Last, we clarify the basic properties of $ \bm{\chi}(\bm{Z}) $ in the following five points.
First,
the support of the random vector $ \bm{\chi}(\bm{Z}) $ is 
$ \mathcal{S}^{n-1} \cap \langle\bm{1}\rangle^{\bot} $,
where $ \langle\bm{1}\rangle^{\bot} := \left\{
\bm{x} \in \mathbb{R}^{n} | \bm{x}^{\T}\bm{1} = 0
\right\} $,
which is
the intersection of the unit sphere centered at the origin and the hyperplane with normal $ \bm{1} $ passing through the origin.
Thus, 
$ \bm{\chi}(\bm{Z}) $ is a particular random vector in directional statistics
subjecting to $ \bm{1}^{\T} \bm{\chi}(\bm{Z}) = 0 $.

To illustrate the generality and particularity of $ \bm{\chi}(\bm{Z}) $ intuitively,
we show the supports of $ \bm{\chi}(\bm{Z}) $ of $ n =2 $ and $ 3 $ in Figure~\ref{fig:X_marginal_dist}.
In detail,
subfigure~\ref{subfig:X_marginal_dist._2} portrays the $ n = 2 $ case: The dashed blue circle is $ \mathcal{S}^{1} $, and the purple dotted line is the subspace $ \langle\bm{1}\rangle^{\bot} \subset \mathbb{R}^{2} $. 
The support of $ \bm{\chi}(\bm{Z}) $ is the intersection of them
---
a two-point set composed of the orange dot and the green dot:
$
\left\{
\left({\sqrt{2}}/{2}, -{\sqrt{2}}/{2}\right)^{\T}
,
\left(-{\sqrt{2}}/{2}, {\sqrt{2}}/{2}\right)^{\T}
\right\}
$.
Subfigure~\ref{subfig:X_marginal_dist._3} illustrates the $ n = 3 $ case: the light-colored sphere is $ \mathcal{S}^{2} $, and the black plain is the subspace $ \langle\bm{1}\rangle^{\bot} \subset \mathbb{R}^{3} $. 
The support of $ \bm{\chi}(\bm{Z}) $ is the intersection of them
---
the blue circle.

\begin{figure}[!h]
	\centering
	\subcaptionbox{$ n = 2 $.\label{subfig:X_marginal_dist._2}}
	{\includegraphics[width=0.48\textwidth]{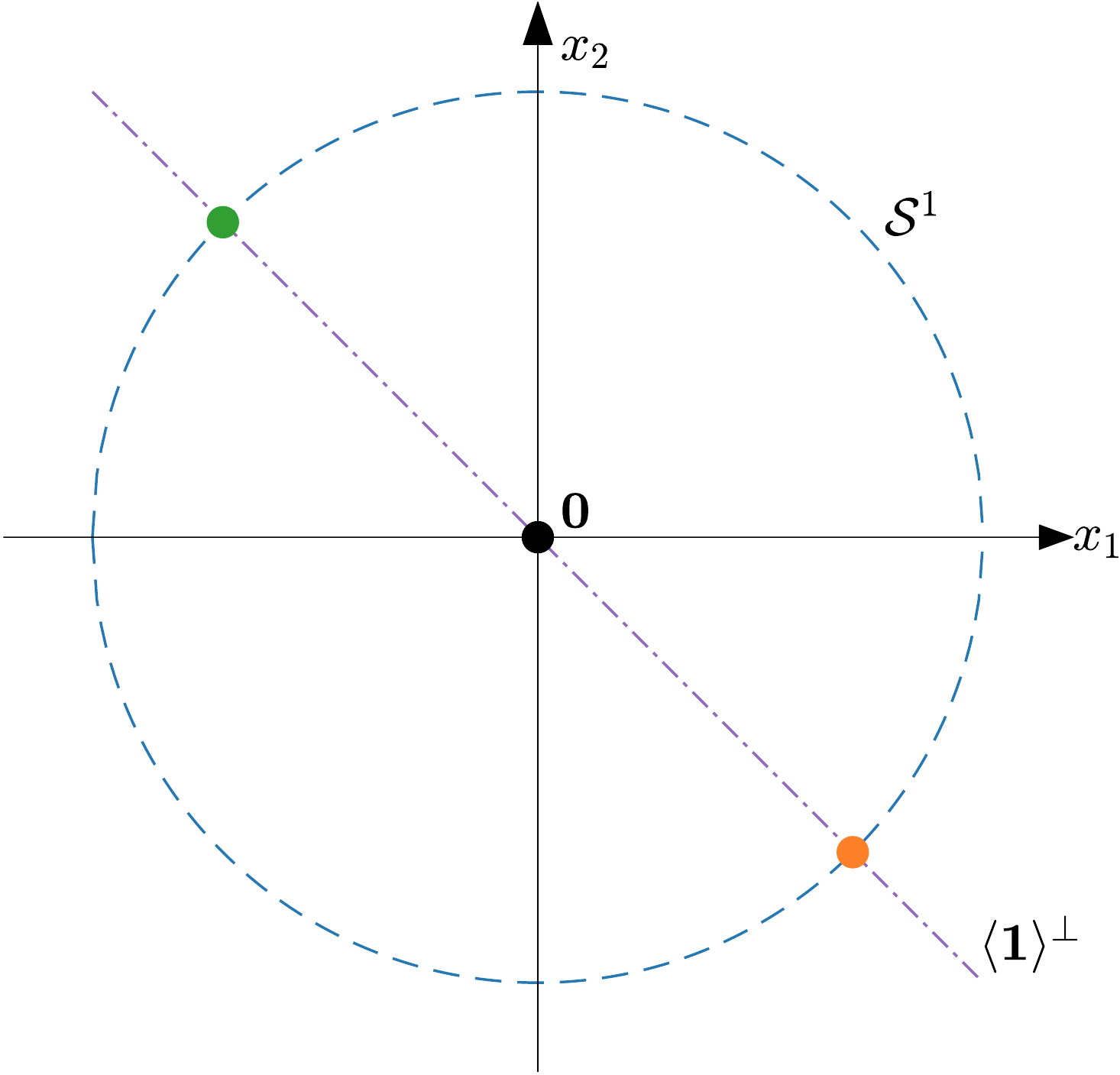}}
	\subcaptionbox{$ n = 3 $.\label{subfig:X_marginal_dist._3}}
	{\includegraphics[width=0.48\textwidth]{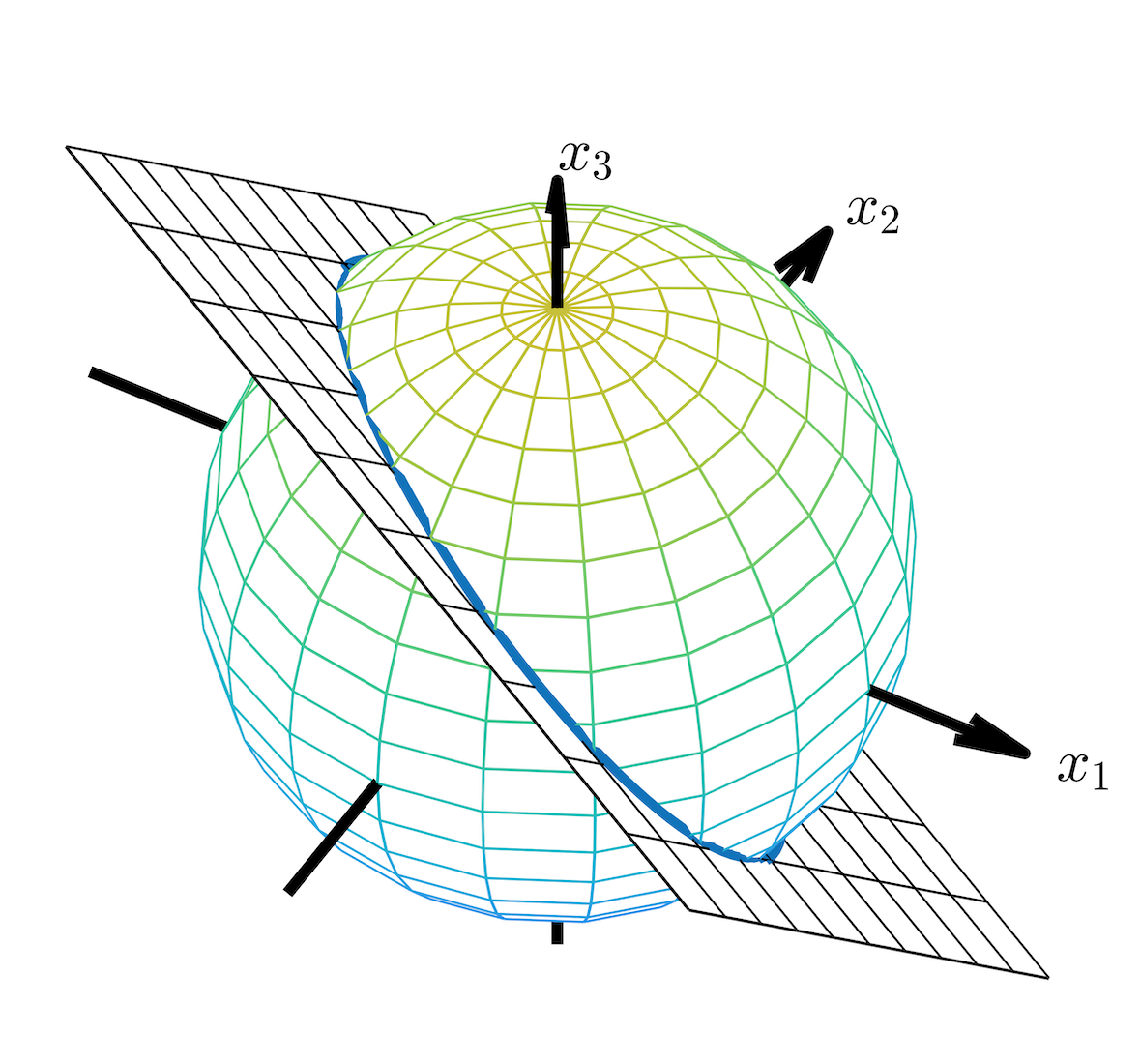}}
	\caption{The support of $ \bm{\chi}(\bm{Z}) $ of $ n = 2 $ and $ 3 $.}
	\label{fig:X_marginal_dist}
\end{figure}

Second, in high-dimensional cases, notably when $ n>10 $, the support area diminishes dramatically.
In detail, 
the surface area of the support
$ \mathcal{S}^{n-1} \cap \langle\bm{1}\rangle^{\bot} $ 
is 
$ f(n) := \frac{2\pi^{\frac{n-2}{2}}}{\Gamma\left(\frac{n-2}{2}\right)} $,
where $ f(10) \approx 32, f(50) \approx 6.6\times 10^{-11}, f(100) \approx 3.7 \times 10^{-37}, f(300) \approx 9.3\times 10^{-185} $.
In other words, the surface area of the support decreases exponentially, as the dimension $ n $ increases.
In finance, the dimension of  $ \bm{\chi}(\bm{Z}) $ is generally higher than $ 50 $. 

Third, the random vector $ \bm{\chi}(\bm{Z}) $ is a compound operator of $ \bm{Z} $.
In detail,
there are two projections implied in $ \bm{\chi}(\bm{Z}) $.
One is the \textit{linear} projection $ P $, which projects $ \bm{Z} $ from $ \mathbb{R}^{n} $ to the subspace $ \langle\bm{1}\rangle^{\bot} $.
$ P $ leads to the degeneration of the random vectors $ P
\bm{Z} $
---
the covariance matrix of $ P\bm{Z} $ must be singular.
The other is the \textit{directional} projection, which projects $ P\bm{Z} $ radically on the sphere $ \mathcal{S}^{n-1} $.
It is also inherently difficult because there is some ``singular nature'' on $ \mathcal{S}^{n-1} $ \citep[p. 166]{mardia2000directional}.

Fourth, 
there is no ready-made result about $ \bm{\chi}(\bm{Z}) $.
Although the transformation is pervasive in academia and industry, the results of the transformed variables are rare.
In the existing literature, models were usually set up on the sphere directly, particularly on cases like the paleomagnetism and wildfire orientation.

Fifth, $ \bm{\chi}(\bm{Z}) $'s basic properties in the $ 2 $-dim cases are given in Proposition~\ref{prop:X_2dim},
helping us have a fundamental and concrete understanding of $ \bm{\chi}(\bm{Z}) $.
\begin{proposition}\label{prop:X_2dim}
	Given a $ 2 $-dim random vector $ \bm{Z} \in \mathbb{R}^{2} $, 
	we have
	\begin{align}
	\bm{\chi}(\bm{Z})
	=&
	\mathrm{sign}
	\{Z_{1} - Z_{2}\}
	\cdot
	\left(\frac{\sqrt{2}}{2}, -\frac{\sqrt{2}}{2}\right)^{\T}
	,
	\label{equ:X_2dim}
	\end{align}
	where $ \mathrm{sign}\{x\} = \mathbbm{1}\{x>0\} - \mathbbm{1}\{x<0\}$. 
	The distribution of $ \bm{\chi}(\bm{Z}) $ is:
	\begin{center}
		\begin{tabularx}{0.5\textwidth}{c|Y|Y}
			\hline
			$ \bm{\chi}(\bm{Z}) $ & $ \left(
			{\sqrt{2}}/{2}, -{\sqrt{2}}/{2}
			\right)^{\T} $ & $ \left(
			-{\sqrt{2}}/{2}, {\sqrt{2}}/{2}
			\right)^{\T} $ \\
			\hline
			$ \mathbb{P} $ & $ \mathbb{P}\{Z_{1} > Z_{2}\} $ & $ 1-\mathbb{P}\{Z_{1} > Z_{2}\} $ \\
			\hline
		\end{tabularx}
	\end{center}
	The $\mathrm{MRL} $ and $\mathrm{MD} $ of $ \bm{\chi}(\bm{Z}) $ are
	\begin{align}
	\mathrm{MRL} \left(\big.\bm{\chi}(\bm{Z})\right) 
	=&\ 
	2
	\cdot
	\left |
	\big.
	\mathbb{P}\left\{
	Z_{1} > Z_{2}
	\right\}
	-
	0.5
	\right |
	,
	\label{equ:MRL_2d}
	\\
	\mathrm{MD} \left(\big.\bm{\chi}(\bm{Z})\right) 
	=&
	\begin{cases}
	\left(
	\frac{\sqrt{2}}{2}, -\frac{\sqrt{2}}{2}
	\right)^{\T},
	&
	\text{if } \mathbb{P}\{Z_{1} > Z_{2}\} > 0.5,
	\\
	\left(
	-\frac{\sqrt{2}}{2}, \frac{\sqrt{2}}{2}
	\right)^{\T},
	&
	\text{if } \mathbb{P}\{Z_{1} > Z_{2}\} < 0.5.
	\end{cases}
	\end{align}
\end{proposition}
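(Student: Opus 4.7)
The plan is to compute $\bm{\chi}(\bm{Z})$ explicitly in the $n=2$ setting by unfolding the definition, and then derive the distribution, $\mathrm{MRL}$, and $\mathrm{MD}$ as immediate consequences.

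First I would plug $n=2$ into the centering matrix to obtain $P = I - \tfrac{1}{2}\bm{1}\bm{1}^{\T} = \tfrac{1}{2}\begin{pmatrix} 1 & -1 \\ -1 & 1 \end{pmatrix}$, so that $P\bm{Z} = \tfrac{Z_{1}-Z_{2}}{2}(1,-1)^{\T}$ and $\|P\bm{Z}\| = \tfrac{|Z_{1}-Z_{2}|}{\sqrt{2}}$. Dividing, every positive scalar factor cancels and one is left with the sign of $Z_{1}-Z_{2}$ multiplying the unit vector $(\sqrt{2}/2, -\sqrt{2}/2)^{\T}$, giving \eqref{equ:X_2dim}. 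The domain restriction $\bm{Z} \notin \langle \bm{1} \rangle$ keeps $Z_{1}\neq Z_{2}$ almost surely, so the sign function is well defined on the support; this is the only subtlety worth flagging.

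Second, because $\bm{\chi}(\bm{Z})$ takes only two antipodal values in $\mathcal{S}^{1}\cap\langle\bm{1}\rangle^{\bot}$, its law is the two-point distribution shown in the table, with masses $\mathbb{P}\{Z_{1}>Z_{2}\}$ and $1-\mathbb{P}\{Z_{1}>Z_{2}\}$. Taking expectation then gives
\begin{equation*}
\mathbb{E}\,\bm{\chi}(\bm{Z})
=\bigl(2\mathbb{P}\{Z_{1}>Z_{2}\}-1\bigr)\cdot\left(\tfrac{\sqrt{2}}{2},-\tfrac{\sqrt{2}}{2}\right)^{\T},
\end{equation*}
whose Euclidean norm is $|2\mathbb{P}\{Z_{1}>Z_{2}\}-1|=2\bigl|\mathbb{P}\{Z_{1}>Z_{2}\}-0.5\bigr|$, establishing \eqref{equ:MRL_2d}. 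Dividing $\mathbb{E}\,\bm{\chi}(\bm{Z})$ by its norm introduces exactly the sign of $2\mathbb{P}\{Z_{1}>Z_{2}\}-1$, yielding the two-case formula for $\mathrm{MD}(\bm{\chi}(\bm{Z}))$; the standing assumption that $\mathrm{MRL}>0$ rules out the degenerate case $\mathbb{P}\{Z_{1}>Z_{2}\}=0.5$.

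There is no real obstacle here: the proof is essentially a direct calculation, and the only mild subtlety is handling the null event $\{Z_{1}=Z_{2}\}$ consistently with the domain of $\bm{\chi}$. The value of the statement is conceptual rather than technical, since it illustrates the antipodal and degenerate behavior of $\bm{\chi}(\bm{Z})$ in the smallest nontrivial dimension.
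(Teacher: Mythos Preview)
Your proof is correct and is exactly the natural direct computation one would expect. The paper does not supply an explicit proof of this proposition---it states the result and immediately discusses its interpretation---so your argument (compute $P\bm{Z}$, take the norm, cancel the positive scalar, then read off the two-point law and its first moment) is precisely the elementary verification the paper leaves to the reader.
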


Proposition~\ref{prop:X_2dim} is a general $ 2 $-dim proposition in the sense of the weak assumption of the distribution of $ \bm{Z} $,
showing an intricate property of $ \bm{\chi}(\bm{Z}) $: 
$ \bm{\chi}(\bm{Z}) $ is affected by only a part of the information in $ \bm{Z} $
---
In the $ 2 $-dim case, it is $ \mathrm{sign}\{Z_{1}-Z_{2}\} $ 
that determines $ \bm{\chi}(\bm{Z}) $ and its distribution and numerical characteristics,
irrelevant to the other information in $ \bm{Z} $.
Besides,
\eqref{equ:MRL_2d} shows that, when $ \mathbb{P}\{Z_{1} > Z_{2}\} = 0.5 $,
the $ \mathrm{MRL} = 0 $.
In directional statistics, it corresponds to the case where $ \bm{X} \in \mathcal{S}^{1} $ is symmetric about the origin.
One of the possible sufficient assumption is that $ \{Z_{i}\}_{i=1, 2} $ is i.i.d.. 
In the paper, we assume that $ \mathbb{P}\{Z_{1} > Z_{2}\} \neq 0.5 $.

Furthermore, we give $ \bm{\chi}(\bm{Z}) $'s properties of the $ 2 $-dim normal distributed $ \bm{Z} $ for a deeper understanding.
If $ \bm{Z} $ has a $ 2 $-dim normal distribution, say $ 
\bm{Z}
\sim
N
\left(
\begin{pmatrix}
\mu_{1} \\
\mu_{2}
\end{pmatrix}
,
\begin{pmatrix}
\sigma_{1}^{2} & \rho\sigma_{1}\sigma_{2}  \\
\rho\sigma_{1}\sigma_{2} & \sigma_{2}^{2} \\
\end{pmatrix}
\right)
,
$
then
\begin{align}
\mathrm{MRL}
\left(\big.\bm{\chi}(\bm{Z})\right)
&=
2
\left[
\Phi
\left(
\frac{\left|\mu_{1} - \mu_{2}\right|}{\sqrt{\sigma_{1}^{2} + \sigma_{2}^{2} - 2\rho\sigma_{1}\sigma_{2}}}
\right)
-
0.5
\right]
,
\label{equ:X_MRL_2dim_normal}
\\
\mathrm{MD}
\left(\big.\bm{\chi}(\bm{Z})\right)
&=
\mathrm{sign}
\left\{
\mu_{1} - \mu_{2}
\right\}
\cdot
\left(\frac{\sqrt{2}}{2}, -\frac{\sqrt{2}}{2}\right)^{\T}
,
\label{equ:X_MD_2dim_normal}
\end{align}
where $ \Phi(x) $ is the cumulative distribution function of the standard normal distribution.
For one thing,
\eqref{equ:X_MRL_2dim_normal} shows that $ \mathrm{MRL}\left(\big.\bm{\chi}(\bm{Z})\right) $ 
increases with the dispersion $ \left| \mu_{1} - \mu_{2} \right| $ ($=\sqrt{2} \Vert P \bm{\mu} \Vert $) 
and the correlation coefficient $ \rho $, 
but decreases with the volatility $ \sigma_{1}, \sigma_{2} $. 
Particularly, when there is no dispersion (i.e., $ \mu_{1} = \mu_{2} $),
we have $ \mathrm{MRL} = 0 $,
implying that $ \bm{\chi}(\bm{Z}) $'s $\mathrm{MD}$ is not well-defined.
For another, 
as implied in \eqref{equ:X_MD_2dim_normal}, $\mathrm{sign} \{\mu_{1} - \mu_{2}\} $ determines $ \mathrm{MD}\left(\big.\bm{\chi}(\bm{Z})\right) $,
because, in the 2-dim case,  $ \mathbb{P}\{Z_{1} > Z_{2}\} > 0.5 \Leftrightarrow \mathrm{sign} \{\mu_{1} - \mu_{2}\} = 1$.
The choice of the two possible vectors is based on whose expectation of the component $ Z_{i} $, say $ \mu_{i} $, $ i = 1, 2 $, is higher.

\subsubsection{The Linear Combination of the Components of 
	\texorpdfstring{$ \bm{\chi}(\bm{Z}) $}{$ chi(Z) $}: \texorpdfstring{$ T_{\bm{\chi}(\bm{Z})}(\bm{\theta}) $}{$ T_{chi(Z)}(theta) $}
}

Instead of the $ n $-dim random vector $ \bm{\chi}(\bm{Z}) $ itself, 
some practical problems emphasize the linear combination of $ \bm{\chi}(\bm{Z}) $'s components. 
This subsection first defines the linear combination, lists its properties in $ 2 $-dim cases,
and interprets them in finance.

To begin with,  
given a deterministic vector $ \bm{\theta} \in \mathbb{R}^{n} $ representing the forecasts/weights, 
we have $ \forall \bm{x}\in \mathcal{S}^{n-1} \cap \langle \bm{1} \rangle ^{\bot}, \forall a \in \mathbb{R}$, 
$ (\bm{\theta} + a\bm{1})^{\T}\bm{x} = \bm{\theta}^{\T} \bm{x} $.
Thus, we restrict the domain of $ \bm{\theta} $ to $ \mathcal{S}^{n-1} \cap \langle \bm{1} \rangle ^{\bot} $, and define function
\begin{equation}
\begin{aligned}
T:
\left(
\mathcal{S}^{n-1} \cap \langle \bm{1} \rangle ^{\bot}
\right)
^{2}
\rightarrow&
\mathbb{R},
\\
(\bm{\theta}, \bm{x})
\mapsto&
\bm{\theta}^{\T}
\bm{x}
,
\end{aligned}
\label{equ:T}
\end{equation}
and denote \eqref{equ:T} by
\begin{align*}
T_{\bm{x}}(\bm{\theta}) := \bm{\theta}^{\T}\bm{x},
\quad
\bm{\theta} \in \mathcal{S}^{n-1} \cap \langle \bm{1} \rangle ^{\bot}.
\end{align*}
For the random vector $ \bm{\chi}(\bm{Z}) \in \mathcal{S}^{n-1} \cap \langle \bm{1} \rangle ^{\bot} $ defined in \eqref{equ:chi_Z}, 
we define a random function as
\begin{align*}
T_{\bm{\chi}(\bm{Z})}(\bm{\theta})
=&\ 
\bm{\theta}^{\T}
\bm{\chi}(\bm{Z})
,
\quad
\bm{\theta} \in \mathcal{S}^{n-1} \cap \langle \bm{1} \rangle ^{\bot}
.
\end{align*}

Then,
we list the expression, distribution, and numerical characteristics of the $ 2 $-dim $ T_{\bm{\chi}(\bm{Z})}(\bm{\theta}) $ in the following four points.
First, it is symmetric with domain and function value. The domain of $ \bm{\theta} $ in $ T_{\bm{\chi}(\bm{Z})}(\bm{\theta}) $ is symmetric
$ \mathcal{S}^{1} \cap \langle\bm{1}\rangle^{\bot} = \left\{ 
\left({\sqrt{2}}/{2}, -{\sqrt{2}}/{2} \right)^{\T}, 
\left(-{\sqrt{2}}/{2}, {\sqrt{2}}/{2} \right)^{\T} 
\right\} $, 
as in Figure~\ref{subfig:X_marginal_dist._2},
so there are only two ways to combine $ X_{1} $ and $ X_{2} $.
The values are also symmetric:
$ T_{\bm{\chi}(\bm{Z})}(-\bm{\theta}) = -T_{\bm{\chi}(\bm{Z})}(\bm{\theta}) $.
Without essential difference,
we denote $ \widetilde{\bm{\theta}} := \left({\sqrt{2}}/{2}, -{\sqrt{2}}/{2} \right)^{\T} $ and only discuss $ T_{\bm{\chi}(\bm{Z}}(\widetilde{\bm{\theta}}) $ in a certain case.
Second, 
\begin{align*}
T_{\bm{\chi}(\bm{Z})}(\bm{\theta})
=
\begin{cases}
1, & \text{if } (\theta_{1} - \theta_{2})(Z_{1}-Z_{2}) > 0,\\
-1, & \text{if } (\theta_{1} - \theta_{2})(Z_{1}-Z_{2}) < 0.
\end{cases}
\end{align*}
Thus, $ T_{\bm{\chi}(\bm{Z})}(\bm{\theta}) $ is a random function of $ \bm{\theta} $, taking value in $ \{-1, 1\} $.
Third, the distribution of $ T_{\bm{\chi}(\bm{Z})}(\widetilde{\bm{\theta}}) $ is 
\begin{center}
	\begin{tabularx}{0.7\textwidth}{c|Y|Y}
		\hline
		$ T_{\bm{\chi}(\bm{Z})} (\widetilde{\bm{\theta}}) $ & $ 1 $ & $ -1 $ \\
		\hline
		$ \mathbb{P} $ & $ \mathbb{P}\{ (Z_{1} - Z_{2}) > 0\} $ & $ 1-\mathbb{P}\{ (Z_{1} - Z_{2}) > 0\} $ \\
		\hline
	\end{tabularx}
\end{center}
which shows that
the distribution of $ T_{\bm{\chi}(\bm{Z})}(\bm{\theta}) $ is a two-point distribution determined by the $ \mathrm{sign}\left\{ Z_{1} - Z_{2} \right\} $, also irrelevant to the other exact values of $ \bm{Z} $.
Fourth, the numerical characteristics of the random variable $ T_{\bm{\chi}(\bm{Z})}(\bm{\theta}) $ are
\begin{align*}
\mathbb{E}T_{\bm{\chi}(\bm{Z})}(\bm{\theta}) 
=&\ 
\begin{cases}
2 \left(\mathbb{P}\{Z_{1} >  Z_{2}\} - 0.5 \right ), & \text{if } \bm{\theta} = 	\left(\frac{\sqrt{2}}{2}, -\frac{\sqrt{2}}{2} \right)^{\T}, \\
2 \left(0.5 - \mathbb{P}\{Z_{1} >  Z_{2}\} \right ), & \text{if } \bm{\theta} = 	\left(-\frac{\sqrt{2}}{2}, \frac{\sqrt{2}}{2} \right)^{\T},
\end{cases}
\\
\mathrm{var}\left(T_{\bm{\chi}(\bm{Z})}(\bm{\theta})\right)
=&\ 
1 - 4 \left(\mathbb{P}\{Z_{1} >  Z_{2}\} - 0.5 \right )^{2}
.
\end{align*}
For the expectation,
$ \mathbb{E} T_{\bm{\chi}(\bm{Z})}(\bm{\theta}) $ is symmetric about $ \bm{\theta} $:
$ \mathbb{E}T_{\bm{\chi}(\bm{Z})}(-\bm{\theta}) = -\mathbb{E}T_{\bm{\chi}(\bm{Z})}(\bm{\theta}) $.
If $ \theta_{1} > \theta_{2} $ and $ \left(\mathbb{P}\{Z_{1} >  Z_{2}\} - 0.5 \right) > 0.5 $, then $ \mathbb{E}T_{\bm{\chi}(\bm{Z})}(\bm{\theta}) > 0 $, and vice versa.
Its implication is entirely intuitive: If the ``order'' of $ \bm{\theta} $ is equal to the ``order'' of $ \bm{Z} $, then the expectation is positive.
For the variance, 
$ \mathrm{var}\left ( T_{\bm{\chi}(\bm{Z})}(\bm{\theta}) \right ) $ is irrelevant to the choice of $ \bm{\theta} $
but only determined by 
$ \mathbb{P}\{Z_{1} >  Z_{2}\} $.

In finance, 
$ T_{\bm{\chi}(\bm{Z})}(\bm{\theta}) $ is interpreted as the Information Coefficient (IC),
a popular and useful performance measure of investment strategies.
IC is
\begin{align}
 \mathrm{IC} 
= 
\frac
{\sum_{i=1}^{n}(\widehat{r}_{i, t} - \overline{\widehat{r}_{t}})
	(r_{i, t+1} - \overline{r_{t+1}})}
{\sqrt{\sum_{i=1}^{n}(\widehat{r}_{i, t} - \overline{\widehat{r}_{t}})^{2}
		\sum_{i=1}^{n}(r_{i, t+1} - \overline{r_{t+1}})^{2}}
} 
=
\bm{\chi}(\widehat{\bm{r}}_{t})^{\T}
\bm{\chi}(\bm{r}_{t+1})
,
\end{align}
where $ r_{i, t+1} $ is the return of stock $ i $ at the next-period time $ (t+1) $, and $ \widehat{r}_{i, t} $ is the forecasts of the return at the current-period time $ t $.
IC is the same as $ T_{\bm{\chi}(\bm{Z})}({\bm{\theta}}) $,
where $ \bm{\theta} $ is the standardized forecasts $ 
\bm{\chi}\left (\widehat{\bm{r}}_{t}\right )
$ 
and $ \bm{Z} $ is the next-period returns $ \bm{r}_{t+1} $.
The randomness in IC is similar to that in $ T_{\bm{\chi}(\bm{Z})}({\bm{\theta}}) $ at time $ t $.
From this point of view, IC is the weighted sum of $\bm{\chi}\left (\bm{r}_{t+1}\right )$
rather than the correlation coefficient of two random variables.

The research on $ T_{\bm{\chi}(\bm{Z})}(\bm{\theta}) $ is challenging.
First, 
$ T_{\bm{\chi}(\bm{Z})}(\bm{\theta}) $'s value is in $ [-1,1] $,
free from the support of the pre-standardized random vector $ \bm{Z} $.
We have found little research on the linear combination of the components of a random vector on $ \mathcal{S}^{n-1} $,
let alone that of the particular random vector $ \bm{\chi}(\bm{Z}) $.
Second, the probabilistic property of $ T_{\bm{\chi}(\bm{Z})}(\bm{\theta}) $ for given  $ \bm{\theta} $ is nontrivial.
As is clarified above, the research on $ \bm{\chi}(\bm{Z}) $ is difficult, so investigating $ T_{\bm{\chi}(\bm{Z})}(\bm{\theta}) $ from $ \bm{\chi}(\bm{Z}) $ is also hard.
Meanwhile, there need to develop tools and techniques to investigate $ T_{\bm{\chi}(\bm{Z})}(\bm{\theta}) $.
 
\subsection{The Probabilistic Properties of \texorpdfstring{$\bm{\chi}(\bm{Z})$ and $ T_{\bm{\chi}(\bm{Z})}(\bm{\theta}) $}{$chi(Z)$ and $ T_{chi(Z)}(theta) $}}

$ \bm{\chi}(\bm{Z}) $ and $ T_{\bm{\chi}(\bm{Z})}(\bm{\theta}) $ are the main research topics in the paper.
This subsection discusses their specific probabilistic properties.
We first extend the $ 2 $-dim case studies on $ \bm{\chi}(\bm{Z}) $ in \S~\ref{subsec:chi_Z} to a more general condition,
exploring the distribution, $\mathrm{MD}$, and $\mathrm{MRL} $ of the random vector $ \bm{\chi}(\bm{Z}) $.
Based on $ \bm{\chi}(\bm{Z}) $, we introduce the distribution and numerical characteristics of $ T_{\bm{\chi}(\bm{Z})}(\bm{\theta}) $.

\subsubsection{The Properties of \texorpdfstring{$ \bm{\chi}(\bm{Z}) $}{$ chi(Z) $}}

The properties of $ \bm{\chi}(\bm{Z}) $ of a general $ n $ ($ n>2 $) dimension are given below.
Specifically, we transform the particular $ \bm{\chi}(\bm{Z}) $ to a general one in directional statistics.
Further, we give closed-form expressions of $ \bm{\chi}(\bm{Z}) $'s numerical characteristics in a relatively general condition.

When $ \bm{Z} $ is a general $ n $-dim random vector,
we can transfer $ \bm{\chi}(\bm{Z}) $ into a particular projected random vector.
Thus, we give a representation theorem,
which transforms $ \bm{\chi}(\bm{Z}) $ into an $ (n-1) $-dim component-uncorrelated random vector $ \bm{\xi}_{n-1} $ (i.e., $ \mathrm{cov}(\bm{\xi}_{n-1}) $ is diagonal).\footnote{
In this paper, $ \bm{x}_{n-1} $ is an $ (n-1) $-dim column vector.
If $ \bm{x} $ exists, then $ \bm{x}_{n-1} $ is the first $ (n-1) $ components of $ \bm{x} $,
i.e., $ \bm{x}_{n-1} = \begin{psmallmatrix}
I_{n-1} & \bm{0}_{n-1}
\end{psmallmatrix} \bm{x} $.
$ \Sigma_{n-1} $ is an $ (n-1)\times (n-1) $ matrix.
If $ \Sigma $ exists, then we use $ \Sigma_{n-1} $ to represent the $ (n-1) $-th order leading principal submatrix of $ \Sigma $,
i.e., 
$ \Sigma_{n-1} = \begin{psmallmatrix}
I_{n-1} & \bm{0}_{n-1}
\end{psmallmatrix} 
\Sigma 
\begin{psmallmatrix}
I_{n-1} & \bm{0}_{n-1}
\end{psmallmatrix}^{\T}
$.
}

\begin{theorem}
	\label{thm:chi_Z_UZ}
	Given an $ n $-dim random vector $ \bm{Z} \in \mathbb{R}^{n} $,
	whose covariance matrix $ \Sigma $ is positive definitive,
	then there exists an $ n $-dim orthogonal matrix $ U $ such that
	\begin{align}
	\bm{\chi}(\bm{Z})
	=&\ 
	U
	\frac{
		\bm{\xi}
	}{
		\left\|
		\bm{\xi}
		\right\|
	}
	=
	U
	\begin{pmatrix}
	{\bm{\xi}_{n-1}}/{\|\bm{\xi}_{n-1}\|}\\
	0
	\end{pmatrix}
	,
	\label{equ:chi_Z_Uxi}
	\end{align}
	and
	the covariance matrix 
	$ \mathrm{cov}\left (\bm{\xi}_{n-1}\right ) =: \Lambda_{n-1} 
	$ 
	is diagonal,
	where
	\begin{align}
	\bm{\xi} 
	:=&\
	\begin{pmatrix}
	I_{n-1} & \\
	& 0
	\end{pmatrix}
	U^{\T}\bm{Z} 
	,
	\label{equ:xi}
	\\
	\bm{\xi}_{n-1} 
	:=&\
	\begin{pmatrix}
	I_{n-1} & \bm{0}_{n-1}
	\end{pmatrix}
	\bm{\xi}
	\label{equ:xi_n-1}
	.
	\\
	U :=&\ VW.
	\label{equ:U}
	\end{align}
	$ V $ is an $ n $-dim orthogonal matrix defined as
	\begin{align}
	V_{ij}
	:=&\ 
	\begin{cases}
	\frac{n-j}{\sqrt{(n-j)(n-j+1)}}, & i = j \neq n; \\
	-\frac{1}{\sqrt{(n-j)(n-j+1)}}, & i > j \neq n; \\
	\frac{1}{\sqrt{n}}, & j = n; \\
	0, & \text{else,}
	\end{cases}
	\label{equ:V_short}
	\end{align}
	and
	$ 
	W 
	:= 
	\begin{psmallmatrix}
	W_{n-1} & \\
	& 1
	\end{psmallmatrix}
	$.
	$ W_{n-1} $ is an $ (n-1) $-dim orthogonal matrix such that
	$
	W_{n-1}^{\T}
	\left[
	\begin{psmallmatrix}
	I_{n-1} & \bm{0}_{n-1}
	\end{psmallmatrix}
	V^{\T}\Sigma V
	\begin{psmallmatrix}
	I_{n-1} \\ \bm{0}_{n-1}^{\T}
	\end{psmallmatrix}
	\right]
	W_{n-1}
	$ is a diagonal matrix.\footnote{
		In other words, $ W_{n-1} $ is the orthogonal matrix such that the covariance matrix of $ \bm{\xi}_{n-1} $ is a diagonal matrix $ \Lambda_{n-1} $.
		$ W $ is the $ n $-dim orthogonal matrix such that the covariance matrix of $ \bm{\xi} $ is a diagonal matrix $ \begin{psmallmatrix}
		\Lambda_{n-1} & \\
		& 0
		\end{psmallmatrix}
		=:
		\Lambda
		$.
	}
\end{theorem}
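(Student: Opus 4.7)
My plan is to exploit the geometric fact that the centering matrix $P$ is the orthogonal projection onto $\langle\bm{1}\rangle^{\bot}$, and to peel the construction $U = VW$ into two orthogonal changes of basis with distinct roles: $V$ aligns its last coordinate axis with $\bm{1}$, which turns $P$ into a coordinate projection that kills exactly one dimension, and $W$ then rotates within the $(n-1)$-dim hyperplane $\langle\bm{1}\rangle^{\bot}$ so as to diagonalize the restricted covariance.

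First I would verify the two structural facts about $V$ encoded in \eqref{equ:V_short}: $V$ is orthogonal and its last column equals $\bm{1}/\sqrt{n}$. The latter is immediate from the definition. Orthogonality reduces to checking unit norm and pairwise orthogonality of the columns, a routine computation using the Helmert-type support pattern (column $j<n$ has $\sqrt{(n-j)/(n-j+1)}$ at coordinate $j$ and $-1/\sqrt{(n-j)(n-j+1)}$ at coordinates $j+1,\dots,n$). These facts give $VV^{\T} = I$ and identify the first $n-1$ columns of $V$ as an orthonormal basis of $\langle\bm{1}\rangle^{\bot}$, so that
\[
P \;=\; V \begin{pmatrix} I_{n-1} & \\ & 0 \end{pmatrix} V^{\T}.
\]
Substituting this into $\bm{\chi}(\bm{Z}) = P\bm{Z}/\|P\bm{Z}\|$ and invoking the isometry of $V$ yields $\bm{\chi}(\bm{Z}) = V\bm{\eta}/\|\bm{\eta}\|$, where $\bm{\eta} := \begin{psmallmatrix} I_{n-1} & \\ & 0 \end{psmallmatrix} V^{\T}\bm{Z}$ has identically zero last component.

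Next I would diagonalize the covariance of the first $n-1$ components of $V^{\T}\bm{Z}$, whose covariance matrix $\begin{psmallmatrix} I_{n-1} & \bm{0}_{n-1}\end{psmallmatrix} V^{\T}\Sigma V \begin{psmallmatrix} I_{n-1} \\ \bm{0}_{n-1}^{\T}\end{psmallmatrix}$ is symmetric and positive definite by the corresponding property of $\Sigma$; the spectral theorem then supplies the orthogonal $W_{n-1}$. Setting $W = \begin{psmallmatrix} W_{n-1} & \\ & 1\end{psmallmatrix}$ and $U := VW$, the block form of $W^{\T}$ commutes with $\begin{psmallmatrix} I_{n-1} & \\ & 0 \end{psmallmatrix}$, whence the $\bm{\xi}$ prescribed by \eqref{equ:xi} satisfies $\bm{\xi} = W^{\T}\bm{\eta}$; consequently $\|\bm{\xi}\| = \|\bm{\eta}\|$, $U\bm{\xi} = VWW^{\T}\bm{\eta} = V\bm{\eta}$, and the diagonality of $\mathrm{cov}(\bm{\xi}_{n-1}) = W_{n-1}^{\T}\bigl(V_{n-1}^{\T}\Sigma V_{n-1}\bigr)W_{n-1} = \Lambda_{n-1}$ is automatic. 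Combining the identifications delivers \eqref{equ:chi_Z_Uxi}.

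The main conceptual step—the only one that is not pure bookkeeping—is recognizing that the two-stage factorization $U = VW$ is the natural one, separating the geometric role of $V$ (removing the degeneracy introduced by $P$) from the statistical role of $W$ (diagonalizing the residual $(n-1)$-dim covariance). The most tedious item is the explicit verification that the matrix in \eqref{equ:V_short} is orthogonal, which I would confine to a self-contained computational lemma so the main argument stays clean. One subtlety worth flagging is that $\bm{\chi}$ is defined only on $\mathbb{R}^{n}\setminus\langle\bm{1}\rangle$, but positive definiteness of $\Sigma$ forces $\mathbb{P}\{\bm{Z}\in\langle\bm{1}\rangle\} = 0$, so $P\bm{Z} \neq \bm{0}$ and every normalization above is well defined almost surely.
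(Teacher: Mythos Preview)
Your proposal is correct and follows essentially the same route as the paper's proof: both hinge on the identity $V^{\T}PV = \begin{psmallmatrix} I_{n-1} & \\ & 0 \end{psmallmatrix}$ (equivalently your $P = V\begin{psmallmatrix} I_{n-1} & \\ & 0 \end{psmallmatrix}V^{\T}$), the commutation of the block-diagonal $W$ with that coordinate projection, and the spectral theorem to produce $W_{n-1}$. One small caveat on your closing remark: positive definiteness of $\Sigma$ alone does not force $\mathbb{P}\{\bm{Z}\in\langle\bm{1}\rangle\}=0$ without some continuity assumption on the law of $\bm{Z}$ (a mixture placing an atom on $\langle\bm{1}\rangle$ can still have full-rank covariance), though the paper itself does not address this point.
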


Theorem~\ref{thm:chi_Z_UZ} 
transforms
$ \bm{\chi}(\bm{Z}) $ 
into a standardized multivariate distribution, say the projected distribution.
One of the direct applications of Theorem~\ref{thm:chi_Z_UZ} is the concise expression of numerical characteristics of $ \bm{\chi}(\bm{Z}) $: Corollary~\ref{cor:numerical_characteristics_xi}.
\begin{corollary}\label{cor:numerical_characteristics_xi}
	The numerical characteristics of $ \bm{\chi}(\bm{Z}) $ are
	\begin{align}
	\mathrm{MD}\left(\big.\bm{\chi}(\bm{Z})\right)
	=&\ 
	U
	\left[
	\big.
	\mathrm{MD}
	\left(
	{
		\bm{\xi}
	}/{
		\left\|
		\bm{\xi}
		\right\|
	}
	\right)
	\right]
	=
	U
	\begin{pmatrix}
	\mathrm{MD}
	\left(
	{
		\bm{\xi}_{n-1}
	}/{
		\left\|
		\bm{\xi}_{n-1}
		\right\|
	}
	\right)
	\\
	0
	\end{pmatrix}
	,
	\label{equ:MD_chi_Z}
	\\
	\mathrm{MRL}\left(\big.\bm{\chi}(\bm{Z})\right)
	=&\ 
	\mathrm{MRL}
	\left(
	{
		\bm{\xi}
	}/{
		\left\|
		\bm{\xi}
		\right\|
	}
	\right)
	=
	\mathrm{MRL}
	\left(
	{
		\bm{\xi}_{n-1}
	}/{
		\left\|
		\bm{\xi}_{n-1}
		\right\|
	}
	\right)
	,
	\label{equ:MRL_chi_Z}
	\end{align}
	where
	$ U $,
	$ \bm{\xi} $,
	and $ \bm{\xi}_{n-1} $
	are in Theorem~\ref{thm:chi_Z_UZ}.
\end{corollary}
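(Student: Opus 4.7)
The plan is to derive both identities directly from the representation in Theorem~\ref{thm:chi_Z_UZ} by exploiting two elementary facts: linearity of the expectation and the norm-preserving property of orthogonal transformations. There is essentially no technical obstacle; the corollary is a routine consequence of the representation, so my main concern is to present the reductions cleanly rather than to discover a nontrivial argument.

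First I would unwind the definition of $\mathrm{MRL}$. Starting from $\bm{\chi}(\bm{Z}) = U\,\bm{\xi}/\|\bm{\xi}\|$ and taking expectations, linearity yields $\mathbb{E}\bm{\chi}(\bm{Z}) = U\,\mathbb{E}[\bm{\xi}/\|\bm{\xi}\|]$. Because $U$ is orthogonal, $\|U\bm{v}\| = \|\bm{v}\|$ for every $\bm{v}\in\mathbb{R}^{n}$, so
\begin{equation*}
\mathrm{MRL}\bigl(\bm{\chi}(\bm{Z})\bigr)
= \bigl\|\mathbb{E}\bm{\chi}(\bm{Z})\bigr\|
= \bigl\|U\,\mathbb{E}[\bm{\xi}/\|\bm{\xi}\|]\bigr\|
= \bigl\|\mathbb{E}[\bm{\xi}/\|\bm{\xi}\|]\bigr\|
= \mathrm{MRL}\bigl(\bm{\xi}/\|\bm{\xi}\|\bigr).
\end{equation*}
The first equality in \eqref{equ:MRL_chi_Z} is thus immediate. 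For the second equality I would invoke the structural fact, already built into Theorem~\ref{thm:chi_Z_UZ}, that $\bm{\xi}$ has a zero $n$-th component; hence $\|\bm{\xi}\|=\|\bm{\xi}_{n-1}\|$ and
\begin{equation*}
\frac{\bm{\xi}}{\|\bm{\xi}\|}
= \begin{pmatrix} \bm{\xi}_{n-1}/\|\bm{\xi}_{n-1}\| \\ 0 \end{pmatrix},
\end{equation*}
so the Euclidean norm of the expectation coincides with that of the expectation of the first $(n-1)$ components, giving the second equality.

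For the $\mathrm{MD}$ identity I would repeat the same two ingredients. Using $\mathbb{E}\bm{\chi}(\bm{Z}) = U\,\mathbb{E}[\bm{\xi}/\|\bm{\xi}\|]$ and the orthogonality of $U$,
\begin{equation*}
\mathrm{MD}\bigl(\bm{\chi}(\bm{Z})\bigr)
= \frac{U\,\mathbb{E}[\bm{\xi}/\|\bm{\xi}\|]}{\bigl\|U\,\mathbb{E}[\bm{\xi}/\|\bm{\xi}\|]\bigr\|}
= U\cdot\frac{\mathbb{E}[\bm{\xi}/\|\bm{\xi}\|]}{\bigl\|\mathbb{E}[\bm{\xi}/\|\bm{\xi}\|]\bigr\|}
= U\,\mathrm{MD}\bigl(\bm{\xi}/\|\bm{\xi}\|\bigr),
\end{equation*}
which is the first equality in \eqref{equ:MD_chi_Z}. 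Substituting the block form of $\bm{\xi}/\|\bm{\xi}\|$ from the previous paragraph, the last component of $\mathbb{E}[\bm{\xi}/\|\bm{\xi}\|]$ is $0$, so $\mathrm{MD}(\bm{\xi}/\|\bm{\xi}\|)$ inherits the same block structure with an $(n-1)$-block equal to $\mathrm{MD}(\bm{\xi}_{n-1}/\|\bm{\xi}_{n-1}\|)$ and a trailing zero, yielding the second equality.

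The one point I would be careful about is the implicit assumption, inherited from the blanket hypothesis in the paper, that the relevant $\mathrm{MRL}$ is strictly positive so that the $\mathrm{MD}$ is well defined; since $U$ is orthogonal the positivity of $\mathrm{MRL}(\bm{\chi}(\bm{Z}))$ is equivalent to that of $\mathrm{MRL}(\bm{\xi}/\|\bm{\xi}\|)$, so the well-definedness carries through both sides of \eqref{equ:MD_chi_Z} simultaneously and no additional assumption is needed.
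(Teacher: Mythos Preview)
Your proposal is correct and matches the paper's treatment: the paper states this corollary as a direct consequence of Theorem~\ref{thm:chi_Z_UZ} without a separate proof, and your argument---linearity of expectation, norm-preservation under the orthogonal $U$, and the block structure $\bm{\xi}/\|\bm{\xi}\| = \begin{psmallmatrix}\bm{\xi}_{n-1}/\|\bm{\xi}_{n-1}\|\\0\end{psmallmatrix}$---is exactly the routine unpacking the paper leaves implicit.
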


If $ \bm{Z} \sim N(\cdot, \Sigma) $ has a multivariate normal distribution, 
then $ \bm{\xi}_{n-1} \sim N(\cdot, \Lambda_{n-1}) $ is a component-independent random vector of multivariate normal distribution.
So Theorem~\ref{thm:chi_Z_UZ} transforms $ \bm{\chi}(\bm{Z}) $
into $ \bm{\xi}_{n-1}/\|\bm{\xi}_{n-1}\| $, 
which is a random vector of projected component-independent normal distribution.
$ \bm{\xi}_{n-1} $ has some basic and standard distribution 
due to its diagonal covariance matrix.
Thus, we could obtain $ \bm{\chi}(\bm{Z}) $'s numerical characteristics from a projected component-independent normal distributed $ \bm{\xi}_{n-1}/\|\bm{\xi}_{n-1}\| $.

Unfortunately, there are very few results of the closed-form expressions of the $\mathrm{MD} $ and $\mathrm{MRL} $ of a projected multivariate normal distribution, even in the independent case.
The most cutting-edge research that we have known about the closed-form expressions of the $\mathrm{MD} $ and $\mathrm{MRL} $ of a projected normal distribution is \citet{presnell2008MRL_PN}.
It gives the closed-form expressions in a particular case, where the covariance matrix of $ \bm{\xi}_{n-1} $ is a scalar matrix $ \bm{\xi}_{n-1} \sim N(\cdot, \lambda^{2} I_{n-1}) $, 
listed as Proposition~\ref{prop:presnell} in the paper.

In Theorem~\ref{thm:MD_MRL_ndim}, we obtain closed-form expressions of the $\mathrm{MD}$ and $\mathrm{MRL}$ with a relative general dependent case,
in which the components of the pre-standardized normal-distributed random vector $ \bm{Z} $ is homogeneous.

\begin{theorem}\label{thm:MD_MRL_ndim}
	Given $ \bm{Z} \sim N(\bm{\mu}, \sigma^{2} \Xi), \ \sigma>0 $, where $ \Xi $ is a correlation coefficient matrix with off-diagonal elements equaling to $ \rho \in (-\frac{1}{n}, 1) $ , i.e.,
	\begin{align}
	\Xi
	&:=
	\begin{pmatrix}
	1 & \rho & \rho & \cdots & \rho \\
	\rho & 1 & \rho & \cdots & \rho \\
	\rho & \rho & 1 & \cdots & \rho \\
	\vdots & \vdots & \vdots & \ddots & \vdots \\
	\rho & \rho & \rho & \cdots & 1 \\
	\end{pmatrix}
	,
	\label{equ:Xi}
	\end{align}
	we have
	\begin{align}
	\mathrm{MD}\left(\big.\bm{\chi}(\bm{Z})\right)
	&=
	\bm{\chi}(\bm{\mu})
	,
	\label{equ:MD_chi_Z_Xi}
	\\
	\mathrm{MRL}\left(\big.\bm{\chi}(\bm{Z})\right)
	&=
	\varrho_{n-1}\left(\frac{\Vert P\bm{\mu}\Vert}{\sigma \sqrt{1 - \rho}}\right)
	,
	\label{equ:MRL_chi_Z_Xi}	
	\end{align}
	where $ \varrho_{n-1} $ is defined in \eqref{equ:rho}.
\end{theorem}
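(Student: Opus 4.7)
The strategy is to feed the equicorrelation assumption into the representation Theorem~\ref{thm:chi_Z_UZ} and show that it collapses to exactly the isotropic projected-normal situation handled by Proposition~\ref{prop:presnell}. The key observation is spectral: the equicorrelation matrix $\Xi$ has the one-dimensional eigenspace $\langle\bm{1}\rangle$ with eigenvalue $1+(n-1)\rho$, and its orthogonal complement $\langle\bm{1}\rangle^{\bot}$ is an eigenspace with eigenvalue $1-\rho$ (the constraint $\rho>-\tfrac{1}{n}$ just guarantees positive definiteness). Because the matrix $V$ defined in \eqref{equ:V_short} is engineered so that its last column is $\bm{1}/\sqrt{n}$ and its first $n-1$ columns form an orthonormal basis of $\langle\bm{1}\rangle^{\bot}$, I would first verify that $V^{\T}\Xi V$ has the block form $\mathrm{diag}\bigl((1-\rho)I_{n-1},\, 1+(n-1)\rho\bigr)$, so that its leading $(n-1)\times(n-1)$ submatrix is already a scalar matrix $(1-\rho)I_{n-1}$.

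This means the further rotation $W_{n-1}$ in Theorem~\ref{thm:chi_Z_UZ} can be chosen as the identity, and $\bm{\xi}_{n-1}\sim N(\bm{\nu}_{n-1},\,\sigma^{2}(1-\rho)I_{n-1})$, where $\bm{\nu}_{n-1}$ is the first $n-1$ components of $V^{\T}\bm{\mu}$. Since $V$ orthogonally maps $\langle\bm{1}\rangle^{\bot}$ onto $\mathrm{span}(e_{1},\dots,e_{n-1})$, norms are preserved: $\|\bm{\nu}_{n-1}\|=\|P\bm{\mu}\|$. Now apply Proposition~\ref{prop:presnell} to $\bm{\xi}_{n-1}/\|\bm{\xi}_{n-1}\|$ with scale $\lambda=\sigma\sqrt{1-\rho}$: its MRL equals $\varrho_{n-1}(\|\bm{\nu}_{n-1}\|/\lambda)=\varrho_{n-1}(\|P\bm{\mu}\|/(\sigma\sqrt{1-\rho}))$, and by \eqref{equ:MRL_chi_Z} this is $\mathrm{MRL}(\bm{\chi}(\bm{Z}))$, proving \eqref{equ:MRL_chi_Z_Xi}.

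For the mean direction \eqref{equ:MD_chi_Z_Xi}, I would use the isotropy result from Proposition~\ref{prop:presnell} stating that the MD of $\bm{\xi}_{n-1}/\|\bm{\xi}_{n-1}\|$ is $\bm{\nu}_{n-1}/\|\bm{\nu}_{n-1}\|$. Substituting into \eqref{equ:MD_chi_Z} gives
\begin{align*}
\mathrm{MD}(\bm{\chi}(\bm{Z}))
=U\begin{pmatrix}\bm{\nu}_{n-1}/\|\bm{\nu}_{n-1}\|\\ 0\end{pmatrix}
=V\begin{pmatrix}\bm{\nu}_{n-1}/\|\bm{\nu}_{n-1}\|\\ 0\end{pmatrix},
\end{align*}
using $W_{n-1}=I_{n-1}$ so $U=V$. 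The final step is to recognize this as $\bm{\chi}(\bm{\mu})$: the vector $V\bigl(\bm{\nu}_{n-1}^{\T},0\bigr)^{\T}$ is, by construction of $V$, the unique unit vector in $\langle\bm{1}\rangle^{\bot}$ pointing along $P\bm{\mu}$, i.e.\ $P\bm{\mu}/\|P\bm{\mu}\|=\bm{\chi}(\bm{\mu})$.

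\textbf{Main obstacle.} The only delicate point is the bookkeeping around the matrix $V$: one must carefully verify both that $V^{\T}\Xi V$ really has the advertised block-diagonal form (using only that $\bm{1}$ is an eigenvector of $\Xi$ and $V$ sends $\langle\bm{1}\rangle^{\bot}$ to $\mathrm{span}(e_1,\dots,e_{n-1})$) and that the $\bm{\nu}_{n-1}$ produced from $V^{\T}\bm{\mu}$ is truly the coordinate representation of $P\bm{\mu}$ in that basis, so that $V(\bm{\nu}_{n-1}^{\T},0)^{\T}=P\bm{\mu}$. Once these two linear-algebraic identifications are in place, everything else is a direct invocation of the representation theorem and Presnell's isotropic formula.
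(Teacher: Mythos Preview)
Your proposal is correct and follows essentially the same route as the paper: reduce via Theorem~\ref{thm:chi_Z_UZ} with $W_{n-1}=I_{n-1}$ (so $U=V$), use the spectral fact $V^{\T}\Xi V=\mathrm{diag}\bigl((1-\rho)I_{n-1},\,1+(n-1)\rho\bigr)$ to get $\bm{\xi}_{n-1}\sim N(\bm{\nu}_{n-1},\sigma^{2}(1-\rho)I_{n-1})$, then invoke Proposition~\ref{prop:presnell} and translate back via Corollary~\ref{cor:numerical_characteristics_xi}. The only slip is cosmetic: in your last paragraph you write ``$V(\bm{\nu}_{n-1}^{\T},0)^{\T}$ is the unique unit vector\ldots'', whereas that vector equals $P\bm{\mu}$ itself and it is the normalized version from your display that equals $\bm{\chi}(\bm{\mu})$.
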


We give some interpretations of Theorem~\ref{thm:MD_MRL_ndim} here.
First, $ \sigma^{2} \Xi $ is a covariance matrix widely used in statistical modeling, 
such as $n$ responds' covariance matrix with the single-factor model.
$ \sigma^{2} \Xi $ entails the special independent conditions: If $ \rho = 0 $, then $ \sigma^{2} \Xi = \sigma^{2} I_{n} $.
Second, under this specific situation, 
\eqref{equ:MD_chi_Z_Xi} implies an exchangeability of operators $ \mathbb{E} $ and $ \bm{\chi} $, that
\begin{align}
\mathbb{E}
\big.
\bm{\chi}
(\bm{Z})
\propto
\bm{\chi}
\left(
\mathbb{E}
\bm{Z}
\right)
=
\bm{\chi}
(\bm{\mu})
.
\end{align}
$ \mathrm{MD}\left(\big.\bm{\chi}(\bm{Z})\right) 
$ only depends on $ \bm{\mu} $, irrelevant to the correlation coefficient between $ \bm{Z} $'s components in the case of homogeneous distribution.
Third, $ \mathrm{MRL}\left(\big.\bm{\chi}(\bm{Z})\right) $ increases with the dispersion of $ \bm\mu $ (i.e., $ \|P\bm{\mu}\| $) and the correlation coefficient $ \rho $, while decreases with volatility $ \sigma $.
The properties are similar to that in the $ 2 $-dim case.

\begin{proposition}[Eq(6) in \cite{presnell2008MRL_PN}]\label{prop:presnell}
	Given an $ n $-dim random vector $ \bm{\xi} \sim N(\bm{\nu}, \lambda^{2}I), \bm{\nu}\neq\bm{0}, \lambda>0 $, for the projected normal distribution $ \frac{\bm{\xi}}{\Vert \bm{\xi} \Vert} $, we have
	\begin{align}
	\mathrm{MD}\left(\frac{\bm{\xi}}{\Vert \bm{\xi} \Vert}\right) =&\ \frac{\bm{\nu}}{{\Vert \bm{\nu} \Vert}},
	\\
	\mathrm{MRL}\left(\frac{\bm{\xi}}{\Vert \bm{\xi} \Vert}\right) =&\ \varrho_{n}\left(\frac{{\Vert \bm{\nu} \Vert}}{\lambda}\right)
	,
	\end{align}
	where 
	\begin{align}
	\varrho_{n}\left(x\right) :=
	\frac{\Gamma \left(\frac{n+1}{2}\right)}{\sqrt{2} \Gamma\left(\frac{n+2}{2}\right)} 
	x
	M
	\left(
	\frac{1}{2},
	\frac{n+2}{2},
	-\frac{1}{2}
	x^{2}
	\right)
	\label{equ:rho}
	\end{align} and $ M(\cdot, \cdot, \cdot) $ is the confluent hypergeometric function of the first kind.\footnote{$ M $ is a solution of a confluent hypergeometric equation, which can be written as
		$
		M(a, b, z) = \sum_{n=0}^{\infty} \frac{a^{(n)}z^{n}}{b^{(n)}n!},
		$
		where $ a^{(0)} = 1, a^{(n)} = a(a+1)(a+2)\cdots(a+n-1) $.
		When $ b > a >0 $, it  can be represented as an integral:
		\begin{align*}
		\frac{\Gamma(b-a) \Gamma(a)}{\Gamma(b)} M(a, b, z)
		&=
		\int_{0}^{1}e^{zt} t^{a-1} (1-t)^{(b-a-1)} \mathrm{d} t.
		\end{align*}
		For detailed properties, one can refer to \citet[p. 503, \S 13]{AS1964handbook}.
	}
\end{proposition}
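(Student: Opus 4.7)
The plan is to prove the MD formula by symmetry and the MRL formula by reducing to a one-dimensional integral that matches the Euler integral representation of the confluent hypergeometric function $M$. For the MD, I would observe that the law $N(\bm{\nu},\lambda^{2}I)$ is invariant under every orthogonal transformation fixing $\bm{\nu}$, i.e.\ under the stabilizer $O(n-1)\subset O(n)$. Because the map $\bm{\xi}\mapsto \bm{\xi}/\|\bm{\xi}\|$ is equivariant and expectation commutes with linear maps, $\mathbb{E}[\bm{\xi}/\|\bm{\xi}\|]$ must also be fixed by this stabilizer, hence proportional to $\bm{\nu}$; normalizing yields $\mathrm{MD}=\bm{\nu}/\|\bm{\nu}\|$.

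For the MRL, set $\lambda=1$ by scaling and, by rotating, take $\bm{\nu}=(\kappa,0,\ldots,0)^{\T}$ with $\kappa:=\|\bm{\nu}\|/\lambda$. Then $\mathrm{MRL}=\mathbb{E}[\xi_{1}/\|\bm{\xi}\|]$ with $\xi_{1}\sim N(\kappa,1)$ independent of $R^{2}:=\xi_{2}^{2}+\cdots+\xi_{n}^{2}\sim\chi^{2}_{n-1}$. The key device is to linearize the denominator via the gamma identity $1/\sqrt{x}=\pi^{-1/2}\int_{0}^{\infty}t^{-1/2}e^{-tx}\,dt$ and apply Fubini:
\[
\mathrm{MRL}=\frac{1}{\sqrt{\pi}}\int_{0}^{\infty}t^{-1/2}\,\mathbb{E}[\xi_{1}e^{-t\xi_{1}^{2}}]\,\mathbb{E}[e^{-tR^{2}}]\,dt.
\]
Completing the square gives $\mathbb{E}[\xi_{1}e^{-t\xi_{1}^{2}}]=\kappa(1+2t)^{-3/2}\exp(-\kappa^{2}t/(1+2t))$, while the $\chi^{2}_{n-1}$ moment generating function gives $\mathbb{E}[e^{-tR^{2}}]=(1+2t)^{-(n-1)/2}$, so
\[
\mathrm{MRL}=\frac{\kappa}{\sqrt{\pi}}\int_{0}^{\infty}t^{-1/2}(1+2t)^{-(n+2)/2}\exp\!\left(-\frac{\kappa^{2}t}{1+2t}\right)dt.
\]

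The final step is to normalize this into the standard form of $M$. The substitution $u=t/(1+2t)$, which maps $(0,\infty)$ onto $(0,1/2)$ and satisfies $1+2t=1/(1-2u)$ and $dt=du/(1-2u)^{2}$, followed by $s=2u$, converts the display into
\[
\mathrm{MRL}=\frac{\kappa}{\sqrt{2\pi}}\int_{0}^{1}s^{-1/2}(1-s)^{(n-1)/2}e^{-\kappa^{2}s/2}\,ds,
\]
which is exactly the Euler integral $\int_{0}^{1}e^{zs}s^{a-1}(1-s)^{b-a-1}ds$ representing $M(a,b,z)$ at $a=1/2$, $b=(n+2)/2$, $z=-\kappa^{2}/2$. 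Using $\int_{0}^{1}e^{zs}s^{a-1}(1-s)^{b-a-1}ds=\Gamma(a)\Gamma(b-a)M(a,b,z)/\Gamma(b)$ together with $\Gamma(1/2)=\sqrt{\pi}$ identifies the prefactor and produces $\mathrm{MRL}=\kappa\,\Gamma((n+1)/2)/(\sqrt{2}\,\Gamma((n+2)/2))\cdot M(1/2,(n+2)/2,-\kappa^{2}/2)=\varrho_{n}(\kappa)$; undoing the rescaling recovers $\varrho_{n}(\|\bm{\nu}\|/\lambda)$. The main obstacle is purely technical: keeping precise track of the exponents of $(1+2t)$ and $(1-2u)$ through the substitution chain, since a single arithmetic slip yields $M$ with the wrong second parameter. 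I would double-check the crucial identity $1/2+(n+2)/2-2=(n-1)/2$ (for the exponent of $1-2u$ after combining the factors $(1-2u)^{(n+2)/2}$, $(1-2u)^{1/2}$ from $t^{-1/2}$, and $(1-2u)^{-2}$ from the Jacobian) before claiming the match to $(1/2,(n+2)/2,\cdot)$.
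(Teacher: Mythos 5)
Your derivation is correct, but note that the paper does not actually prove this proposition: it is imported verbatim as Eq.~(6) of \citet{presnell2008MRL_PN}, and the only internal analogue is the appendix proof of Theorem~\ref{thm:covariance_matrix}, which handles the second moment $\mathbb{E}\,\xi_{1}^{2}/\|\bm{\xi}\|^{2}$ by writing $\xi_{1}^{2}$ and $\xi_{2}^{2}+\cdots+\xi_{n}^{2}$ as non-central and central $\chi^{2}$ variables, passing through a beta-type change of variables, modified Bessel functions $I_{\alpha}$, and several Kummer identities from \citet{AS1964handbook}. Your route is genuinely different and more elementary: you get the mean direction from equivariance under the stabilizer $O(n-1)$ of $\bm{\nu}$ (with the sign of the proportionality constant settled a posteriori by the manifestly positive integral you obtain for the first component), and you get the mean resultant length by linearizing $\|\bm{\xi}\|^{-1}$ with the Gamma identity $x^{-1/2}=\pi^{-1/2}\int_{0}^{\infty}t^{-1/2}e^{-tx}\,\mathrm{d}t$, factoring via independence into two closed-form Laplace-type expectations, and landing directly on the Euler integral representation of $M$ already recorded in the paper's footnote; I verified the exponent bookkeeping ($3/2+(n-1)/2=(n+2)/2$, and $(n+2)/2+1/2-2=(n-1)/2$ after the substitution $u=t/(1+2t)$) and the final identification with $\varrho_{n}$ in \eqref{equ:rho}. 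What the Bessel-function route buys is uniformity: the same machinery extends to $\mathbb{E}\,\xi_{1}^{2}/\|\bm{\xi}\|^{2}$ and hence to the covariance matrix in Theorem~\ref{thm:covariance_matrix}. What your route buys is a short, self-contained proof needing only the $\chi^{2}$ moment generating function, Fubini (justified since $\mathbb{E}\,|\xi_{1}|/\|\bm{\xi}\|\le 1$), and the single integral representation of $M$ that the paper already quotes; it would also extend to the second moment by using $x^{-1}=\int_{0}^{\infty}e^{-tx}\,\mathrm{d}t$ in place of the half-power identity.
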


\subsubsection{The Properties of \texorpdfstring{$ T_{\bm{\chi}(\bm{Z})}({\bm{\theta}}) $}{$ T_{chi(Z)}({{theta}}) $}}

$ T_{\bm{\chi}(\bm{Z})}({\bm{\theta}}) $ is a random function. 
Given $ \bm{\theta} \in \mathcal{S}^{n-1} \cap \langle \bm{1} \rangle ^{\bot} $, 
then $ T_{\bm{\chi}(\bm{Z})}({\bm{\theta}}) $ is a random variable. 
The expectation and variance of $ T_{\bm{\chi}(\bm{Z})}({\bm{\theta}}) $ are discussed.

First, the expectation is
\begin{align}
\mathbb{E}T_{\bm{\chi}(\bm{Z})}(\bm{\theta})
=&\ 
\mathrm{MRL} \left(\big.\bm{\chi}(\bm{Z})\right)
\cdot
\bm{\theta}^{\T}
\mathrm{MD} \left(\big.\bm{\chi}(\bm{Z})\right)
.
\label{equ:E_T}
\end{align}
$ \bm{\theta}^{\T} 	\mathrm{MD} \left(\big.\bm{\chi}(\bm{Z})\right) $ spans on $ [-1, 1] $
---
If $ \bm{\theta} = 	\mathrm{MD} \left(\big.\bm{\chi}(\bm{Z})\right)  $, 
then $ \bm{\theta}^{\T} \mathrm{MD} \left(\big.\bm{\chi}(\bm{Z})\right) $ reaches the maximum $ 1 $, 
and if $ \bm{\theta} = -\mathrm{MD} \left(\big.\bm{\chi}(\bm{Z})\right) $, then the minimum $ -1 $.
Therefore, 
$ \mathrm{MRL} \left(\big.\bm{\chi}(\bm{Z})\right)  $ is a strict bound of $ \mathbb{E} T_{\bm{\chi}(\bm{Z})}(\bm{\theta}) $, i.e.,
\begin{align}
\sup_{\bm{\theta} \in \mathcal{S}^{n-1}\cap \langle \bm{1} \rangle ^{\bot}}
\left| 
\mathbb{E} T_{\bm{\chi}(\bm{Z})}(\bm{\theta}) 
\right|
=&\ 
\mathrm{MRL} \left(\big.\bm{\chi}(\bm{Z})\right) 
.
\label{equ:E_T_bound}
\end{align}

When $ \bm{Z} \sim N(\bm{\mu}, \sigma^{2} \Xi) $, i.e., the particular condition in Theorem~\ref{thm:MD_MRL_ndim},
we have
\begin{align}
\mathbb{E}T_{\bm{\chi}(\bm{Z})}(\bm{\theta})
=&\ 
\varrho_{n-1}\left(\frac{\Vert P\bm{\mu}\Vert}{\sigma \sqrt{1 - \rho}}\right)
\cdot
\bm{\theta}^{\T}
\bm{\chi}(\bm{\mu})
.
\end{align}
So when $ \bm{\theta} = \bm{\chi}(\bm{\mu}) $, $ \mathbb{E}T_{\bm{\chi}(\bm{Z})}(\bm{\theta}) $ reaches the maximum $ \varrho_{n-1}\left(\frac{\Vert P\bm{\mu}\Vert}{\sigma \sqrt{1 - \rho}}\right) $,
and when $ \bm{\theta} = -\bm{\chi}(\bm{\mu}) $, $ \mathbb{E}T_{\bm{\chi}(\bm{Z})}(\bm{\theta}) $ reaches the minimum $ -\varrho_{n-1}\left(\frac{\Vert P\bm{\mu}\Vert}{\sigma \sqrt{1 - \rho}}\right) $.

Second,
the variance is
\begin{align}
\mathrm{var}\left(T_{\bm{\chi}(\bm{Z})}(\bm{\theta})\right)
=&\ 
\bm{\theta}^{\T}
\mathrm{cov}\left(\big.\bm{\chi}(\bm{Z})\right)
\bm{\theta}
.
\label{equ:T_var}
\end{align}
By \eqref{equ:chi_Z_Uxi} in Theorem~\ref{thm:chi_Z_UZ}, we have
$
\mathrm{cov}\left(\big.\bm{\chi}(\bm{Z})\right)
=
U 
\begin{pmatrix}
\mathrm{cov}\left(\frac{\bm{\xi}_{n-1}}{\left\|\bm{\xi}_{n-1}\right\|}\right) & \\
& 0
\end{pmatrix}
U^{\T}
$,
where $ \bm{\xi}_{n-1} $ defined in \eqref{equ:xi_n-1} is a positive definite diagonal covariance matrix $ \Lambda_{n-1} $.
Consequently,
\begin{align}
\mathrm{var}\left(T_{\bm{\chi}(\bm{Z})}(\bm{\theta})\right)
=&\ 
\left(U^{\T}\bm{\theta}\right)^{\T}
\begin{pmatrix}
\mathrm{cov}\left(\frac{\bm{\xi}_{n-1}}{\left\|\bm{\xi}_{n-1}\right\|}\right)
 & \\
& 0
\end{pmatrix}
\left(U^{\T}\bm{\theta}\right)
.
\label{equ:var_T_Utheta_cov}
\end{align}
If we figure out the covariance matrix of the particular projected component-uncorrelated distribution $ \frac{\bm{\xi}_{n-1}}{\|\bm{\xi}_{n-1}\|} $, then
$ T_{\bm{\chi}(\bm{Z})}(\bm{\theta}) $'s variance can be expressed.

However, solving the covariance matrix of the projected distribution $ \mathrm{cov}\left(\frac{\bm{\xi}_{n-1}}{\left\|\bm{\xi}_{n-1}\right\|}\right) $ is not an easy work.
The following result (see \cite[Eq(9.2.12), p. 166]{mardia2000directional}) is one of the relative works for $ \bm{X} \in \mathcal{S}^{n-1} $:
\begin{align}
\mathrm{trace}\left[ \mathrm{cov}\left(\bm{X}\right) \right]
= 1 - \left[\mathrm{MRL}\left(\bm{X}\right)\right]^{2}
.
\label{equ:cov_X_trace}
\end{align}

We obtain a closed-form expression of $ \mathrm{cov}\left(\frac{\bm{\xi}_{n-1}}{\left\|\bm{\xi}_{n-1}\right\|}\right) $, where $ \bm{\xi}_{n-1} $ has a homoscedastic normal distribution.

\begin{theorem}
	\label{thm:covariance_matrix}
	Given an $ n $-dim random vector $ \bm{\xi} \sim N(\bm{\nu}, \lambda^{2}I), \lambda>0, $  $ \bm{\nu} = (\|\bm{\nu}\|, 0, 0,\cdots, 0)^{\T} $,
	then
	\begin{align}
	\mathrm{cov}\left(\frac{\bm{\xi}}{\|\bm{\xi}\|}\right)
	=&\ 
	\begin{pmatrix}
	f_{n}\left(\frac{\|\bm{\nu}\|}{\lambda}\right)  &  \\
	& g_{n}\left(\frac{\|\bm{\nu}\|}{\lambda}\right)\cdot I_{n-1}   \\
	\end{pmatrix}
	,
	\label{equ:cov_xi}
	\end{align}
	where
	\begin{align}
	f_{n}(x)
	:=&\ 
	1
	-
	\frac{n-1}{n}
	M\left(1, \frac{n}{2} + 1,  -\frac{x^{2}}{2}\right)
	-
	\left[\varrho_{n}(x)\right]^{2},
	\label{equ:f}
	\\
	g_{n}(x)
	:=&\ 
	\frac{1}{n}
	M
	\left(
	1,
	\frac{n}{2} + 1,
	-\frac{x^{2}}{2}
	\right),
	\label{equ:g}
	\end{align}
	in which $ \varrho_{n} $ is defined in \eqref{equ:rho}.
\end{theorem}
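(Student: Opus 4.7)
My plan is to exploit the rotational symmetry of $\bm{\xi}$ about the $\bm{e}_1$-axis, which holds because $\bm{\nu}$ is aligned with $\bm{e}_1$ and the covariance is $\lambda^2 I$. Set $\bm{Y}:=\bm{\xi}/\|\bm{\xi}\|$. For every $i\ge 2$, the reflection $\xi_i\mapsto-\xi_i$ (with the other components fixed) preserves the law of $\bm{\xi}$, which immediately yields $\mathbb{E}[Y_i]=0$, $\mathbb{E}[Y_1 Y_i]=0$, and $\mathbb{E}[Y_i Y_j]=0$ whenever $i\ne j$ and at least one index exceeds $1$. Combined with the exchangeability of $\xi_2,\dots,\xi_n$, this shows $\mathrm{cov}(\bm{Y})$ is diagonal and its last $n-1$ diagonal entries share a common value; call it $g_n(x)=\mathbb{E}[\xi_2^2/\|\bm{\xi}\|^2]$, where $x:=\|\bm{\nu}\|/\lambda$. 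Using $\|\bm{Y}\|^2=1$, the top-left entry is $\mathbb{E}[Y_1^2]=1-(n-1)g_n(x)$; subtracting $\mathbb{E}[Y_1]^2=\varrho_n(x)^2$ from Proposition~\ref{prop:presnell} will then yield the stated $f_n(x)$. So everything reduces to computing the single scalar $g_n(x)$.

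To obtain $g_n(x)$ I would use the Laplace-transform identity $1/s=\int_0^\infty e^{-ts}\,dt$ with $s=\|\bm{\xi}\|^2$ and swap expectation and $t$-integral via Fubini-Tonelli (all integrands positive), getting $g_n(x)=\int_0^\infty\mathbb{E}[\xi_2^2 e^{-t\|\bm{\xi}\|^2}]\,dt$. Completing the square on $-t\|\bm{\xi}\|^2-\|\bm{\xi}-\bm{\nu}\|^2/(2\lambda^2)$ turns the inner expectation into a Gaussian second moment with shifted mean along $\bm{e}_1$ and precision $2(t+1/(2\lambda^2))$. After the cosmetic substitutions $s=2\lambda^2 t$ and then $u=s/(1+s)$, I expect the formula to collapse to
\begin{align*}
g_n(x)=\tfrac{1}{2}\int_0^1 (1-u)^{n/2-1} e^{-x^2 u/2}\,du.
\end{align*}
Recognizing this as the Euler integral representation of $M(a,b,z)$ with $a=1$, $b=n/2+1$, $z=-x^2/2$, together with $B(1,n/2)=2/n$, gives $g_n(x)=\tfrac{1}{n}M(1,n/2+1,-x^2/2)$, matching \eqref{equ:g}.

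I expect the main obstacle to be the bookkeeping in the Gaussian Laplace-transform step — keeping the prefactor $(1+s)^{-n/2}$, the modified-mean contribution $e^{-x^2 s/(2(1+s))}$, and the correct power of $(1-u)$ in line after the rational substitution — and then identifying the resulting integral against the Euler representation with exactly the right parameters $(a,b)$. The symmetry reductions, the Fubini interchange, and the final $f_n$ computation are conceptually clean; the analytic identification of the hypergeometric integral is the delicate part. As a sanity check I would verify that $f_n(x)+(n-1)g_n(x)=1-\varrho_n(x)^2$ agrees with the trace identity \eqref{equ:cov_X_trace}.
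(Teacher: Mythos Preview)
Your proposal is correct and takes a genuinely different route from the paper's proof. The symmetry reductions in your first paragraph match the paper exactly, but the core analytic computation diverges.

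The paper targets $\mathbb{E}[\xi_1^2/\|\bm{\xi}\|^2]$ directly. It writes this as $\mathbb{E}[X/(X+Y)]$ with $X=\xi_1^2\sim\chi^2(1,\|\bm{\nu}\|^2/\lambda^2)$ noncentral and $Y=\sum_{i\ge 2}\xi_i^2\sim\chi^2(n-1)$ central, passes to polar-type coordinates $(r,t)=(x+y,\,x/(x+y))$, and then chains together four identities from Abramowitz--Stegun: an integral representation for the modified Bessel function $I_\alpha$, an integral of $e^{-x^2/2}x^\beta I_\alpha$ yielding a Kummer function, the Kummer transformation $M(a,b,z)=e^z M(b-a,b,-z)$, and finally a contiguous relation to collapse two $M$'s into one. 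Your approach instead targets the simpler quantity $g_n(x)=\mathbb{E}[\xi_2^2/\|\bm{\xi}\|^2]$, uses the Laplace identity $s^{-1}=\int_0^\infty e^{-ts}\,dt$, completes the square in the Gaussian exponent, and after two elementary substitutions lands directly on the Euler integral for $M(1,n/2+1,-x^2/2)$. You invoke a single special-function identity where the paper uses four, and you bypass Bessel functions entirely.

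What your approach buys is brevity and transparency: the Laplace trick keeps everything Gaussian and the Euler representation is the most elementary face of Kummer's function. What the paper's approach buys is a closer structural parallel to the Presnell--Rumcheva derivation of $\varrho_n$ (which it explicitly cites as inspiration), so a reader who has internalized that computation sees the second-moment calculation as a natural continuation. Both are valid; yours is shorter.
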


The proof is enlightened by \cite{presnell2008MRL_PN}.
In the proof, the diagonal of the covariance matrix is easy to derive, but the closed-form expression of each element is a nontrivial result. 
In essence, we give the expression of $E(T^{2})$ refer to \cite{presnell2008MRL_PN}.

\begin{proof}[The skeleton of the proof of Theorem~\ref{thm:covariance_matrix}]

We first show $ \mathrm{cov}\left(\frac{\bm{\xi}}{\|\bm{\xi}\|}\right) $ is a diagonal matrix like \eqref{equ:cov_xi}, and then prove its closed-form expressions.

To begin with,
$ \mathrm{cov}\left(\frac{\bm{\xi}}{\|\bm{\xi}\|}\right) $ is a diagonal matrix of the form of \eqref{equ:cov_xi} --- $ \mathrm{cov}\left(\frac{\bm{\xi}}{\|\bm{\xi}\|}\right) = \begin{psmallmatrix}
\mathrm{var}\left(\frac{\xi_{1}}{\|\bm{\xi}\|}\right)  & \\
& \mathrm{var}\left(\frac{\xi_{2}}{\|\bm{\xi}\|}\right)  I_{n-1}
\end{psmallmatrix} $.
The diagonal is deduced by symmetry and independence.

Furthermore, 
we need to prove
\begin{align}
\mathrm{var}\left(\frac{\xi_{1}}{\|\bm{\xi}\|}\right)
=&\  
\mathbb{E} \frac{\xi_{1}^{2}}{\|\bm{\xi}\|^{2}} - \left(\mathbb{E} \frac{\xi_{1}}{\|\bm{\xi}\|}\right)^{2}
,
\label{equ:var_xi_1_normed}
\\
\mathrm{var}\left(\frac{\xi_{2}}{\|\bm{\xi}\|}\right)
=&\
\frac{1}{n-1}
\left(
1 - \mathbb{E} \frac{\xi_{1}^{2}}{\|\bm{\xi}\|^{2}}
\right)
,
\label{equ:var_xi_2_normed}
\end{align}
in which $ \mathbb{E} \frac{\xi_{1}}{\|\bm{\xi}\|} = \varrho_{n}\left(\frac{\|\bm{\nu}\|}{\lambda}\right) $.

Last, 
the closed-form expression of $ \mathbb{E} \frac{\xi_{1}^{2}}{\|\bm{\xi}\|^{2}} = 1
-
\frac{n-1}{n}
M\left(1, \frac{n}{2} + 1,  -\frac{\|\bm{\nu}\|^{2}}{2\lambda^{2}}\right)
$ is solved in three steps.
First, we give the simplified integral expression of $ \mathbb{E} \frac{\xi_{1}^{2}}{\|\bm{\xi}\|^{2}} $.
\begin{align*}
\mathbb{E} \frac{\xi_{1}^{2}}{\|\bm{\xi}\|^{2}}
=&\ 
\frac{{e}^{-\frac{\|\bm{\nu}\|^{2}}{2\lambda^{2}}}}{\sqrt{\pi} 2^{\frac{n}{2}} \Gamma\left(\frac{n-1}{2}\right)}
\int_{0}^{\infty}
r^{\frac{n-2}{2}}
e^{-\frac{r}{2}}
{\rm d}r
\int_{0}^{1} 
t^{\frac{1}{2}}
(1-t)^{\frac{n-3}{2}}
\cosh\left (\frac{\|\bm{\nu}\|}{\lambda}\sqrt{rt}\right )
{\rm d}t
.
\end{align*}
Second, the integral can be simplified by modified Bessel function $ I_{\alpha}(\cdot) $.
\begin{align*}
\mathbb{E} \frac{\xi_{1}^{2}}{\|\bm{\xi}\|^{2}}
=&\ 
\frac{e^{-\frac{\|\bm{\nu}\|^{2}}{2\lambda^{2}}}}{2\left (\frac{\|\bm{\nu}\|}{\lambda}\right )^{\frac{n}{2}}}
\int_{0}^{\infty}
e^{-\frac{r}{2}}
r^{\frac{n}{4}-1}
\left[
I_{\frac{n}{2}}\left (\frac{\|\bm{\nu}\|}{\lambda} \sqrt{r}\right )
+
\frac{\|\bm{\nu}\|}{\lambda} \sqrt{r}
I_{\frac{n+2}{2}}\left (\frac{\|\bm{\nu}\|}{\lambda} \sqrt{r}\right )
\right]
\mathrm{d}r
.
\end{align*}
Third, it can be concisely expressed by confluent hypergeometric function. 
\begin{align*}
\mathbb{E} \frac{\xi_{1}^{2}}{\|\bm{\xi}\|^{2}}
=&\ 
1
-
\frac{n-1}{n}
M\left(1, \frac{n}{2} + 1,  -\frac{\|\bm{\nu}\|^{2}}{2\lambda^{2}}\right)
.
\end{align*}

Consequently, the closed-form expressions of
\eqref{equ:var_xi_1_normed} and \eqref{equ:var_xi_2_normed} are solved.
We define them as $ f_{n} $ and $ g_{n} $, and thus we get \eqref{equ:f} and \eqref{equ:g}.
\end{proof}

When $ \bm{Z} \sim N(\bm{\mu}, \sigma^{2} \Xi) $, we deduce a closed-form expression of
$ \mathrm{var}\left(T_{\bm{\chi}(\bm{Z})}(\bm{\theta})\right)$.

\begin{theorem}\label{thm:cov_X_Z}
	Given $ \bm{Z} \sim N(\bm{\mu}, \sigma^{2} \Xi), \ \sigma>0 $, where $ \Xi $ is defined in \eqref{equ:Xi},
	then
	\begin{align*}
	\mathrm{var}\left(T_{\bm{\chi}(\bm{Z})}(\bm{\theta})\right)
	=&\
	\left[
	f_{n-1}\left(\frac{\|P\bm{\mu}\|}{\sigma\sqrt{1-\rho}}\right)
	-
	g_{n-1}\left(\frac{\|P\bm{\mu}\|}{\sigma\sqrt{1-\rho}}\right)
	\right]\cdot
	\left(
	\bm{\chi}(\bm{\mu})^{\T}\bm{\theta}
	\right)^{2}
	+
	g_{n-1}\left(\frac{\|P\bm{\mu}\|}{\sigma\sqrt{1-\rho}}\right)
	,
	\end{align*}
	where $ f_{n-1} $ and $ g_{n-1} $ is defined in \eqref{equ:f} and \eqref{equ:g}.
\end{theorem}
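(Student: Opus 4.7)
\textbf{Proof plan for Theorem~\ref{thm:cov_X_Z}.}  The plan is to combine equation~\eqref{equ:T_var}, the unitary representation from Theorem~\ref{thm:chi_Z_UZ}, Corollary~\ref{cor:numerical_characteristics_xi}, Theorem~\ref{thm:MD_MRL_ndim}, and the closed-form covariance in Theorem~\ref{thm:covariance_matrix}, after reducing the homoscedastic structure to a scalar-variance normal vector whose mean can be aligned with $e_{1}$ by a free orthogonal rotation.

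First I would exploit the special shape of $\Xi=(1-\rho)I+\rho\bm{1}\bm{1}^{\T}$.  Because the last column of $V$ is $\bm{1}/\sqrt{n}$, so that $V^{\T}\bm{1}=\sqrt{n}\,e_{n}$, a direct calculation gives $V^{\T}\sigma^{2}\Xi V=\mathrm{diag}(\sigma^{2}(1-\rho),\ldots,\sigma^{2}(1-\rho),\sigma^{2}(1+(n-1)\rho))$.  Hence the $(n-1)$-th leading principal submatrix is already diagonal, so one may take $W_{n-1}=I_{n-1}$, $U=V$, and obtain $\bm{\xi}_{n-1}\sim N(\bm{\nu}_{n-1},\sigma^{2}(1-\rho)I_{n-1})$ with $\bm{\nu}_{n-1}:=\begin{psmallmatrix}I_{n-1}&\bm{0}_{n-1}\end{psmallmatrix}V^{\T}\bm{\mu}$, and $\|\bm{\nu}_{n-1}\|=\|P\bm{\mu}\|$ follows from $V\begin{psmallmatrix}I_{n-1}&\\&0\end{psmallmatrix}V^{\T}=P$.

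Next I would apply Theorem~\ref{thm:covariance_matrix}.  Since the covariance of $\bm{\xi}_{n-1}$ is scalar, pick any $(n-1)$-dim orthogonal $Q$ with $Q\bm{\nu}_{n-1}=\|\bm{\nu}_{n-1}\|e_{1}$ and set $\bm{\eta}:=Q\bm{\xi}_{n-1}$.  Theorem~\ref{thm:covariance_matrix} (with dimension $n-1$ and $\lambda=\sigma\sqrt{1-\rho}$) yields $\mathrm{cov}(\bm{\eta}/\|\bm{\eta}\|)=g_{n-1}(x)I_{n-1}+(f_{n-1}(x)-g_{n-1}(x))e_{1}e_{1}^{\T}$ with $x:=\|P\bm{\mu}\|/(\sigma\sqrt{1-\rho})$; conjugating by $Q^{\T}$ turns $e_{1}e_{1}^{\T}$ into $\bm{\nu}_{n-1}\bm{\nu}_{n-1}^{\T}/\|\bm{\nu}_{n-1}\|^{2}$, so the choice of $Q$ drops out.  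Lifting back through $U$ rests on two identities: $U\begin{psmallmatrix}I_{n-1}&\\&0\end{psmallmatrix}U^{\T}=I-\tfrac{1}{n}\bm{1}\bm{1}^{\T}=P$ (because $Ue_{n}=\bm{1}/\sqrt{n}$), and $U\begin{psmallmatrix}\bm{\nu}_{n-1}/\|\bm{\nu}_{n-1}\|\\0\end{psmallmatrix}=\mathrm{MD}(\bm{\chi}(\bm{Z}))=\bm{\chi}(\bm{\mu})$, which follows from Corollary~\ref{cor:numerical_characteristics_xi} together with \eqref{equ:MD_chi_Z_Xi}.  These give the rank-one decomposition $\mathrm{cov}(\bm{\chi}(\bm{Z}))=g_{n-1}(x)P+(f_{n-1}(x)-g_{n-1}(x))\bm{\chi}(\bm{\mu})\bm{\chi}(\bm{\mu})^{\T}$.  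Substituting into \eqref{equ:T_var} and using $\bm{\theta}^{\T}P\bm{\theta}=\|\bm{\theta}\|^{2}-\tfrac{1}{n}(\bm{1}^{\T}\bm{\theta})^{2}=1$ for $\bm{\theta}\in\mathcal{S}^{n-1}\cap\langle\bm{1}\rangle^{\bot}$ then produces the stated formula.

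The main obstacle I anticipate is the bookkeeping linking the $e_{1}$-aligned computation of Theorem~\ref{thm:covariance_matrix} back to the original coordinates: one must verify that, after conjugation by $Q^{\T}$ and embedding by $U$, the rank-one piece really collapses to $\bm{\chi}(\bm{\mu})\bm{\chi}(\bm{\mu})^{\T}$ rather than some $Q$-dependent direction.  Once that cancellation is confirmed via Corollary~\ref{cor:numerical_characteristics_xi} and \eqref{equ:MD_chi_Z_Xi}, what remains is routine linear algebra together with the identity $\bm{\theta}^{\T}P\bm{\theta}=1$ on the support.
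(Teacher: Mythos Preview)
Your proposal is correct and follows essentially the same route as the paper: diagonalize $V^{\T}\Sigma V$ to the scalar block $\sigma^{2}(1-\rho)I_{n-1}$, rotate the mean onto $e_{1}$ via an orthogonal $Q_{n-1}$, apply Theorem~\ref{thm:covariance_matrix} in dimension $n-1$, and then undo the rotations, identifying the $e_{1}$ direction with $\bm{\chi}(\bm{\mu})$. The paper packages the rotation step as a separate lemma and works with the quadratic form $\bm{\delta}_{n-1}^{\T}\,\mathrm{cov}(\cdot)\,\bm{\delta}_{n-1}$ throughout, whereas you first assemble the full matrix identity $\mathrm{cov}(\bm{\chi}(\bm{Z}))=g_{n-1}(x)P+(f_{n-1}(x)-g_{n-1}(x))\bm{\chi}(\bm{\mu})\bm{\chi}(\bm{\mu})^{\T}$ and then evaluate at $\bm{\theta}$; this is a cosmetic difference only, and your intermediate covariance formula is a clean bonus.
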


Theorem~\ref{thm:cov_X_Z} implies that, 
given $ \bm{Z} \sim N(\bm{\mu}, \sigma^{2} \Xi) $, 
the extremums of
$  \mathrm{var}\left(T_{\bm{\chi}(\bm{Z})}(\bm{\theta})\right) $
are
$ 	f_{n-1}\left(\frac{\|P\bm{\mu}\|}{\sigma\sqrt{1-\rho}}\right)
$ 
or 
$
g_{n-1}\left(\frac{\|P\bm{\mu}\|}{\sigma\sqrt{1-\rho}}\right) 
$, 
and they are reached when  $ \bm{\theta} = \pm \bm{\chi}(\bm{\mu}) $ or
$ \bm{\theta}^{\T}\bm{\chi}(\bm{\mu}) = 0 $.
We discuss the details of the optimization of $ T_{\bm{\chi}(\bm{Z})}(\bm{\theta}) $'s expectation and variance in the next subsection.

\subsection{The Optimization of \texorpdfstring{$ T_{\bm{\chi}(\bm{Z})}(\bm{\theta}) $}{$ T_{chi(Z)}({theta}) $}}

In some applications of $ T_{\bm{\chi}(\bm{Z})}(\bm{\theta}) $, 
we concern about the maximum of the expectation or the minimum of the variance of $ T_{\bm{\chi}(\bm{Z})}(\bm{\theta}) $ subjecting to $ \bm{\theta} $.
Mathematically, it is an optimization problem of $ \mathbb{E}T_{\bm{\chi}(\bm{Z})}(\bm{\theta}) $
or $ \mathrm{var}\left (T_{\bm{\chi}(\bm{Z})}(\bm{\theta})\right ) $.
This subsection specifies the optimization and connects it with above $ \bm{\chi}(\bm{Z}) $.

\subsubsection{The Maximization of the Expectation and the Minimization of the Variance}
\label{subsec:max_e_min_v}

First, we give a general result: $ \mathrm{MD}\left(\big.\bm{\chi}(\bm{Z})\right) $ maximizes the expectation.
Proposition~\ref{prop:ET} connects the optimization problem with the $\mathrm{MD}$ and $\mathrm{MRL}$ of $ \bm{\chi}(\bm{Z}) $.

\begin{proposition} \label{prop:ET}
	Given an $ n $-dim random vector $ \bm{Z} \in \mathbb{R}^{n} $ such that $ \mathbb{E}\bm{\chi}(\bm{Z}) \neq \bm{0} $,
	then
	the solution to the optimization problem
	\begin{align*}
	\max_{\bm{\theta} \in \mathcal{S}^{n-1} \cap \langle\bm{1}\rangle^{\bot}}
	\mathbb{E}T_{\bm{\chi}(\bm{Z})}(\bm{\theta})
	\end{align*}
	uniquely exists. The unique solution $ \bm{\theta}^{*} $ is the $\mathrm{MD} $ of $ \bm{\chi}(\bm{Z}) $, i.e.,
$
	\bm{\theta}^{*}
	=
	\mathrm{MD}\left(\big.\bm{\chi}(\bm{Z})\right)
	.
$
	The value of the maximum is the $ \mathrm{MRL} $ of $ \bm{\chi}(\bm{Z}) $, i.e.,
$
	\mathbb{E}T_{\bm{\chi}(\bm{Z})}(\bm{\theta}^{*})
	=
	\mathrm{MRL}\left(\big.\bm{\chi}(\bm{Z})\right)
	.
$
\end{proposition}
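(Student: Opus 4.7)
The plan is to reduce this optimization to a one-line Cauchy--Schwarz argument using the factorization of $\mathbb{E}T_{\bm{\chi}(\bm{Z})}(\bm{\theta})$ already recorded in \eqref{equ:E_T}. First I would rewrite the objective as
\begin{align*}
\mathbb{E}T_{\bm{\chi}(\bm{Z})}(\bm{\theta})
=
\mathrm{MRL}\left(\big.\bm{\chi}(\bm{Z})\right) \cdot \bm{\theta}^{\T} \mathrm{MD}\left(\big.\bm{\chi}(\bm{Z})\right).
\end{align*}
The hypothesis $\mathbb{E}\bm{\chi}(\bm{Z}) \neq \bm{0}$ guarantees $\mathrm{MRL}(\bm{\chi}(\bm{Z})) = \|\mathbb{E}\bm{\chi}(\bm{Z})\| > 0$, a strictly positive constant independent of $\bm{\theta}$, so the problem reduces to maximizing the linear functional $\bm{\theta} \mapsto \bm{\theta}^{\T}\mathrm{MD}(\bm{\chi}(\bm{Z}))$ over the feasible set $\mathcal{S}^{n-1}\cap\langle\bm{1}\rangle^{\bot}$.

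Next I would verify that $\mathrm{MD}(\bm{\chi}(\bm{Z}))$ itself lies in this feasible set, so that it is a legitimate candidate. Because $\bm{\chi}(\bm{Z}) \in \mathcal{S}^{n-1}\cap\langle\bm{1}\rangle^{\bot}$ almost surely, exchanging the linear functional $\bm{1}^{\T}(\cdot)$ with the expectation gives $\bm{1}^{\T}\mathbb{E}\bm{\chi}(\bm{Z}) = \mathbb{E}[\bm{1}^{\T}\bm{\chi}(\bm{Z})] = 0$, so $\mathbb{E}\bm{\chi}(\bm{Z})\in\langle\bm{1}\rangle^{\bot}$, and normalizing by its (positive) length shows $\mathrm{MD}(\bm{\chi}(\bm{Z}))$ is a unit vector in $\langle\bm{1}\rangle^{\bot}$. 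Feasibility settled, Cauchy--Schwarz yields
\begin{align*}
\bm{\theta}^{\T}\mathrm{MD}\left(\big.\bm{\chi}(\bm{Z})\right)
\leq
\|\bm{\theta}\| \cdot \|\mathrm{MD}(\bm{\chi}(\bm{Z}))\|
=
1,
\end{align*}
with equality if and only if $\bm{\theta}$ is a non-negative scalar multiple of $\mathrm{MD}(\bm{\chi}(\bm{Z}))$; since both sides are unit vectors, this forces $\bm{\theta}=\mathrm{MD}(\bm{\chi}(\bm{Z}))$. Multiplying back by the strictly positive factor $\mathrm{MRL}(\bm{\chi}(\bm{Z}))$ recovers the announced maximum value and unique maximizer.

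Candidly, there is no substantive obstacle here; the proposition is essentially a corollary of \eqref{equ:E_T}. The only point requiring care is the dual role played by the assumption $\mathbb{E}\bm{\chi}(\bm{Z})\neq\bm{0}$: it simultaneously ensures that $\mathrm{MD}(\bm{\chi}(\bm{Z}))$ is well-defined and that the scalar factor $\mathrm{MRL}(\bm{\chi}(\bm{Z}))$ is strictly positive, which is precisely what upgrades the Cauchy--Schwarz bound from a mere inequality to a uniqueness statement about the maximizer.
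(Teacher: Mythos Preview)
Your proposal is correct and matches the paper's approach exactly: the paper simply states that the proposition ``is a direct corollary of \eqref{equ:E_T}'' without further detail, and your Cauchy--Schwarz argument is precisely the elaboration of that corollary. Your explicit verification that $\mathrm{MD}(\bm{\chi}(\bm{Z}))$ lies in the feasible set $\mathcal{S}^{n-1}\cap\langle\bm{1}\rangle^{\bot}$ is a worthwhile addition that the paper leaves implicit.
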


The proposition is a direct corollary of \eqref{equ:E_T}.
However, in finance, $ \bm{\theta}^{*} $ and $ \mathbb{E}T_{\bm{\chi}(\bm{Z})}(\bm{\theta}^{*}) $ are the core topics, 
whose properties are of paramount importance.
Specifically,
in active portfolio management,
a strategy of a high average of IC
is equivalent to
maximizing $ \mathbb{E}T_{\bm{\chi}(\bm{Z})}(\bm{\theta}) $.
If the joint distribution of the standardized returns of assets within a given pool were known, 
the maximum of IC and the optimal portfolio should be equivalent to 
the $\mathrm{MRL}$ and $\mathrm{MD}$ of the standardized returns of assets within the pool.

Second, we give the minimization of the variance of $ T_{\bm{\chi}(\bm{Z})}(\bm{\theta}) $ in the following proposition.

\begin{proposition}\label{prop:variance_TX}
	Given an $ n $-dim random vector $ \bm{Z} \in \mathbb{R}^{n} $ whose covariance matrix is positive definitive,
	then
	the minimum value of the optimization problem
	\begin{align}
	\min_{\bm{\theta} \in \mathcal{S}^{n-1} \cap \langle\bm{1}\rangle^{\bot}}
	\mathrm{var}\left (T_{\bm{\chi}(\bm{Z})}(\bm{\theta})\right ) 
	\label{equ:min_var_T}
	\end{align}
	is the second smallest eigenvalue of $ \mathrm{cov}\left( \bm{\chi}(\bm{Z}) \right) $.
	The solutions to the optimization problem are the corresponding unitized eigenvectors.
\end{proposition}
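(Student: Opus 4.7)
The plan is to reduce the problem to a constrained Rayleigh-quotient minimization and then invoke the Courant--Fischer characterization. First I would rewrite the objective using the identity established in the preceding subsection, namely $\mathrm{var}(T_{\bm{\chi}(\bm{Z})}(\bm{\theta})) = \bm{\theta}^{\T}\,\mathrm{cov}(\bm{\chi}(\bm{Z}))\,\bm{\theta}$ from \eqref{equ:T_var}. Denote $C := \mathrm{cov}(\bm{\chi}(\bm{Z}))$; the optimization \eqref{equ:min_var_T} then becomes the minimization of the quadratic form $\bm{\theta}^{\T} C \bm{\theta}$ over unit vectors $\bm{\theta}$ satisfying the linear constraint $\bm{1}^{\T}\bm{\theta} = 0$.

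Next I would exploit the structural property that forces $\bm{1}$ to be the eigenvector of $C$ associated with its smallest eigenvalue. Since $\bm{\chi}(\bm{Z}) \in \mathcal{S}^{n-1}\cap\langle\bm{1}\rangle^{\bot}$ almost surely, $\bm{1}^{\T}\bm{\chi}(\bm{Z}) = 0$, which yields $C\bm{1} = \bm{0}$. Because $C$ is positive semidefinite (any covariance matrix is), $0$ is the smallest eigenvalue of $C$, with $\bm{1}/\sqrt{n}$ as a corresponding unitized eigenvector. By the spectral theorem for the real symmetric matrix $C$, the remaining $n-1$ eigenvectors can be chosen to form an orthonormal basis of $\langle\bm{1}\rangle^{\bot}$, with associated eigenvalues $0 < \lambda_{2} \leq \lambda_{3} \leq \cdots \leq \lambda_{n}$ (where strict positivity or at least nonnegativity holds; ties are permitted). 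Note the positive-definiteness of $\mathrm{cov}(\bm{Z})$ is used only to ensure $\bm{\chi}(\bm{Z})$ is well-defined and the problem is nondegenerate.

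Finally I would apply the Courant--Fischer min-max theorem restricted to the invariant subspace $\langle\bm{1}\rangle^{\bot}$: minimizing $\bm{\theta}^{\T} C\bm{\theta}$ over unit $\bm{\theta}$ orthogonal to the eigenvector $\bm{1}/\sqrt{n}$ of the smallest eigenvalue equals the second smallest eigenvalue $\lambda_{2}$, and the minimum is attained precisely on the set of unitized eigenvectors of $C$ associated with $\lambda_{2}$ (intersected with $\langle\bm{1}\rangle^{\bot}$, which is automatic since those eigenvectors already lie there). This yields both claims of the proposition.

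I do not expect any serious obstacle: the only subtlety is the clean identification of $\bm{1}$ with the zero-eigenspace of $C$, which follows immediately from the constraint $\bm{1}^{\T}\bm{\chi}(\bm{Z})=0$. Care should be taken to note that the eigenvectors of $\lambda_{2}$ automatically satisfy the linear constraint (by orthogonality among eigenspaces of a symmetric matrix), so the constrained minimizer coincides with the unconstrained Rayleigh minimizer on $\langle\bm{1}\rangle^{\bot}$; this makes the application of Courant--Fischer direct and avoids Lagrange-multiplier bookkeeping.
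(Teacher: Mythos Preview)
Your argument is correct and matches the paper's own reasoning. The discussion following Proposition~\ref{prop:variance_TX} in the paper gives exactly your line of thought: the unconstrained Rayleigh minimum over $\mathcal{S}^{n-1}$ is the smallest eigenvalue $0$ with eigenvector $\pm\bm{1}/\sqrt{n}$, and since this vector is excluded from $\mathcal{S}^{n-1}\cap\langle\bm{1}\rangle^{\bot}$ one is pushed to the second smallest eigenvalue and its eigenvectors.

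The only difference is presentational. The paper's more formal argument (in the appendix) routes through Theorem~\ref{thm:chi_Z_UZ}: it uses the explicit orthogonal matrix $U$ to write $\mathrm{cov}(\bm{\chi}(\bm{Z})) = U\begin{psmallmatrix}\mathrm{cov}(\bm{\xi}_{n-1}/\|\bm{\xi}_{n-1}\|) & \\ & 0\end{psmallmatrix}U^{\T}$, observes that the constraint $\bm{\theta}\in\langle\bm{1}\rangle^{\bot}$ becomes $\|\bm{\delta}_{n-1}\|=1$ after the change of variables $\bm{\delta}_{n-1}=\begin{psmallmatrix}I_{n-1}&\bm{0}\end{psmallmatrix}U^{\T}\bm{\theta}$, and then minimizes the resulting $(n-1)$-dimensional Rayleigh quotient. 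You instead observe $C\bm{1}=\bm{0}$ directly from $\bm{1}^{\T}\bm{\chi}(\bm{Z})=0$ and invoke Courant--Fischer without any coordinate change. Your route is slightly more self-contained (it does not need the representation theorem), while the paper's route makes the link to the $(n-1)$-dimensional projected distribution $\bm{\xi}_{n-1}/\|\bm{\xi}_{n-1}\|$ explicit, which is useful elsewhere in the paper. Mathematically the two are equivalent.
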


We interpret Proposition~\ref{prop:variance_TX}.
First, the uniqueness of the solution relies on the distribution of $ \bm{Z} $.
For instance, in the case $ \bm{Z} \sim N(\bm{1}, I) $,
every $ \bm{\theta} \in \mathcal{S}^{n-1}\cap \langle\bm{1}\rangle^{\bot} $ minimizes the variance.
Second, the eigenvectors corresponding to the smallest eigenvalue are excluded from the domain,
so the minimum value to the optimization problem of \eqref{equ:min_var_T} is the second smallest eigenvalue.
To be more specific, 
different from \eqref{equ:min_var_T},
the minimum value of the optimization problem
\begin{align}
\min_{\bm{\theta} \in \mathcal{S}^{n-1}}
\mathrm{var}\left (T_{\bm{\chi}(\bm{Z})}(\bm{\theta})\right ) 
\label{equ:min_var_T_general}
\end{align}
is the smallest eigenvalue of 
$ \mathrm{cov}\left(\big.\bm{\chi}(\bm{Z}) \right) $, 
i.e., $ 0 $.
The solution is the corresponding unitized eigenvectors $ \pm \frac{1}{\sqrt{n}}\bm{1} $.
However, 
$ \pm \frac{1}{\sqrt{n}}\bm{1} \notin \mathcal{S}^{n-1} \cap \langle\bm{1}\rangle^{\bot} $.
Therefore, for \eqref{equ:min_var_T}, 
we can only take the second smallest eigenvalue
and its corresponding unitized eigenvector.

\subsubsection{In the Homoscedastic World}

When $ \bm{Z} $ is homoscedastic, i.e., $ \bm{Z} \sim N(\bm{\mu}, \sigma^{2} \Xi), \sigma>0 $, 
$ \bm{\chi}(\bm{\mu}) $ is the mean-variance solution.
\begin{corollary}
	Given $ \bm{Z} \sim N(\bm{\mu}, \sigma^{2} \Xi), \sigma>0 $, and $ \Xi $ defined as \eqref{equ:Xi},
	$ \forall \lambda \in [0, \infty] $, the solution to 
	\begin{align*}
	 \max_{\bm{\theta} \in \mathcal{S}^{n-1} \cap \langle\bm{1}\rangle^{\bot}}
	\mathbb{E}T_{\bm{\chi}(\bm{Z})}(\bm{\theta}) - \lambda \cdot \mathrm{var}\left (T_{\bm{\chi}(\bm{Z})}(\bm{\theta})\right ) 
	\end{align*}
	is irrelevant to $ \lambda $:
	\begin{align*}
	\bm{\theta}^{*}
	&=
	\bm{\chi}(\bm{\mu})
	.
	\end{align*}
	The maximum value is
	\begin{align*}
	\varrho_{n-1}\left(\frac{\Vert P \bm{\mu}\Vert}{\sigma \sqrt{1 - \rho}}\right)
	-
	\lambda \cdot
	f_{n-1}\left(\frac{\|P\bm{\mu}\|}{\sigma\sqrt{1-\rho}}\right) 
	,
	\end{align*}
	where $ \varrho_{n-1} $ is defined in \eqref{equ:rho} and $ f_{n-1} $ is defined in \eqref{equ:f}.
\end{corollary}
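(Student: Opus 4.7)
The plan is to combine the closed-form formulas of Theorems~\ref{thm:MD_MRL_ndim} and~\ref{thm:cov_X_Z} to reduce the constrained optimization to a one-dimensional quadratic on $[-1,1]$. Write $y := \|P\bm{\mu}\|/(\sigma\sqrt{1-\rho})$ and $u := \bm{\theta}^{\T}\bm{\chi}(\bm{\mu})$. By~\eqref{equ:E_T} and Theorem~\ref{thm:MD_MRL_ndim}, $\mathbb{E}T_{\bm{\chi}(\bm{Z})}(\bm{\theta}) = \varrho_{n-1}(y)\,u$; by Theorem~\ref{thm:cov_X_Z}, $\mathrm{var}(T_{\bm{\chi}(\bm{Z})}(\bm{\theta})) = (f_{n-1}(y)-g_{n-1}(y))u^{2} + g_{n-1}(y)$. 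The full objective therefore collapses to
\[
h(u) \;=\; \varrho_{n-1}(y)\,u \;-\; \lambda\bigl[(f_{n-1}(y)-g_{n-1}(y))u^{2} + g_{n-1}(y)\bigr].
\]
Because $\bm{\theta},\bm{\chi}(\bm{\mu})\in\mathcal{S}^{n-1}\cap\langle\bm{1}\rangle^{\bot}$ are unit vectors of the same subspace, Cauchy--Schwarz forces $u\in[-1,1]$, with $u=1$ attained uniquely at $\bm{\theta}=\bm{\chi}(\bm{\mu})$ and every intermediate value realized by some admissible $\bm{\theta}$. So it suffices to maximize $h$ on $[-1,1]$.

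The decisive step, and the main obstacle I anticipate, is to verify $f_{n-1}(y)\leq g_{n-1}(y)$ for every $y\geq 0$. Unpacking~\eqref{equ:f} and~\eqref{equ:g}, this is equivalent to
\[
M\!\left(1,\tfrac{n+1}{2},-\tfrac{y^{2}}{2}\right) + \varrho_{n-1}(y)^{2} \;\geq\; 1,
\]
an inequality that saturates at both $y=0$ and $y\to\infty$. Two routes are natural: (i) a direct confluent-hypergeometric manipulation, leveraging the derivative rule $\partial_{z}M(a,b,z)=(a/b)M(a+1,b+1,z)$ together with sign tracking of the resulting series; or (ii) a probabilistic route via Theorem~\ref{thm:covariance_matrix}: by the rotational symmetry of $\bm{\xi}/\|\bm{\xi}\|$ about its mean direction, the aligned second moment equals $f_{n-1}+\varrho_{n-1}^{2}$ and each transverse second moment equals $g_{n-1}$, and the identity $\sum_{j}Y_{j}^{2}=1$ combined with a Jensen-type lower bound on $\varrho_{n-1}^{2}$ should deliver the claim.

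Granted $f_{n-1}\leq g_{n-1}$, the coefficient $-\lambda(f_{n-1}-g_{n-1})$ of $u^{2}$ in $h$ is nonnegative, so $h$ is convex on $[-1,1]$ (affine when $\lambda=0$ or the inequality is tight) and its supremum lies at an endpoint. Since $\bm{\mu}\notin\langle\bm{1}\rangle$ and $\rho<1$, we have $\varrho_{n-1}(y)>0$, whence $h(1)>h(-1)$ and the unique maximizer is $\bm{\theta}^{*}=\bm{\chi}(\bm{\mu})$, independent of $\lambda\in[0,\infty]$. Substituting $u=1$ yields $h(1)=\varrho_{n-1}(y)-\lambda f_{n-1}(y)$, the claimed maximum value.
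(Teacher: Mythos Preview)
Your reduction to the one-variable function $h(u)=\varrho_{n-1}(y)\,u-\lambda\bigl[(f_{n-1}-g_{n-1})u^{2}+g_{n-1}\bigr]$ via Theorems~\ref{thm:MD_MRL_ndim} and~\ref{thm:cov_X_Z} is exactly the paper's argument: it simply observes that $\bm{\theta}=\bm{\chi}(\bm{\mu})$ simultaneously maximizes $\mathbb{E}T_{\bm{\chi}(\bm{Z})}(\bm{\theta})$ (the $\lambda=0$ case, from~\eqref{equ:E_T}) and minimizes $\mathrm{var}\bigl(T_{\bm{\chi}(\bm{Z})}(\bm{\theta})\bigr)$ (the $\lambda=\infty$ case, from Theorem~\ref{thm:cov_X_Z}), hence maximizes every convex combination. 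You are in fact more careful than the paper, which asserts without proof that the variance minimum equals $f_{n-1}$---i.e., tacitly uses $f_{n-1}\le g_{n-1}$---whereas you correctly isolate this inequality as the only nontrivial step and sketch a viable probabilistic route through the trace identity $(f_{n-1}+\varrho_{n-1}^{2})+(n-2)\,g_{n-1}=1$.
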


It implies that in the homoscedastic world, the same solution applies to two different optimizations: the expectation's maximization and the variance's minimization.
In detail,
there are two particular instances of the corollary: $ \lambda = 0 $ and $ \lambda = \infty $.
When $ \lambda = 0 $, it is the application of \eqref{equ:E_T}.
The optimization problem is 
$ 
\max_{\bm{\theta} \in \mathcal{S}^{n-1} \cap \langle\bm{1}\rangle^{\bot}}
\mathbb{E}T_{\bm{\chi}(\bm{Z})}(\bm{\theta})
$,
and the maximum is $ \varrho_{n-1}\left(\frac{\Vert P \bm{\mu}\Vert}{\sigma \sqrt{1 - \rho}}\right) $,
which is reached at $ \bm{\theta}=\bm{\chi}(\bm{\mu}) $.
When $ \lambda = \infty $, it is the corollary of Theorem~\ref{thm:cov_X_Z}.
The optimization problem is equivalent to
$ 
\min_{\bm{\theta} \in \mathcal{S}^{n-1} \cap \langle\bm{1}\rangle^{\bot}}
\mathrm{var}\left (T_{\bm{\chi}(\bm{Z})}(\bm{\theta})\right ) 
$.
The minimum is
$ 
f_{n-1}\left(\frac{\|P\bm{\mu}\|}{\sigma\sqrt{1-\rho}}\right)  
$,
which can be reached at $ \bm{\theta}=\pm \bm{\chi}(\bm{\mu}) $.

\section{Simulation}\label{sec:simulation}

In this section, we aim to get a better understanding of $ \mathrm{MD} \left(\big.\bm{\chi}(\bm{Z})\right) $ and figure out the distribution of $ T_{\bm{\chi}(\bm{Z})}(\bm{\theta}) $ given $ \bm{\theta} = \bm{\chi}(\bm{\mu}) $ or $ \mathrm{MD}\left (\big. \bm{\chi}(\bm{Z}) \right ) $.
We begin the analysis with an intuitive $ 3 $-dim background and then consider a realistic financial market environment with $ 10 $-dim and $ 50 $-dim.

\subsection{The Numerical Analysis of $ 3 $-Dim \texorpdfstring{$ \mathrm{MD}\left(\big.\bm{\chi}(\bm{Z})\right) $}{$ \mathrm{MD}\left({chi}({Z})\right) $} }
\label{subsec:simulation_w*}

In this subsection, we examine how $ \mathrm{MD}\left(\big.\bm{\chi}(\bm{Z})\right) $ varies with respect to the distribution changes on $ \bm{Z} \sim N(\bm{\mu}_{3}, \Sigma_{3}) $,
where
\begin{align*}
\bm{\mu}_{3}
=&\
10^{-4}\times
\begin{pmatrix}
4.60 \\
7.83 \\			
14.78 
\end{pmatrix}
,
&
\Sigma_{3}
=&\
10^{-4}\times
\begin{pmatrix}
3.67  & 2.26  & 0.98  \\
& 6.60  & 0.96   \\
&       & 5.72 \\
\end{pmatrix}
.
\end{align*}
We simulate the samples of $ \bm{Z} $ for changed
$ \bm{\mu} $ and $ \Sigma $ separately. 

First, we focus on the impact of the heteroscedasticity of $ \Sigma_{3} $ on $ \mathrm{MD}\left(\big.\bm{\chi}(\bm{Z})\right) $.
Figure~\ref{fig:3-stock_simulation} shows the simulated results, where subfigure (\subref{subfig:3-stock_simulation_b}) is a top view of (\subref{subfig:3-stock_simulation_a}).
In detail, 
the red point is $ \bm{\chi}\left (\bm{\mu}_{3}\right ) $, 
which, in the above homoscedastic condition, equals to the $ \mathrm{MD} $. 
The numbers are all $ \mathrm{MD}\left(\big.\bm{\chi}(\bm{Z})\right) $'s of different parameters.
The number $ k $ is the simulated $ \mathrm{MD}\left(\big.\bm{\chi}(\bm{Z})\right) $ from $ \bm{Z} \sim N(\bm{\mu}_{3}, \Sigma_{3}^{(k)}) $ 
in which $ \Sigma_{3}^{(k)} := \mathrm{diag}(k,1,1) \Sigma_{3} \mathrm{diag}(k,1,1) $,
in other words,
the standard deviation of the first component $ \sigma_{1} $ is amplified by $ k $.

\begin{figure}[!h]
	\centering
	\subcaptionbox{\label{subfig:3-stock_simulation_a}}
	{\includegraphics[width=0.48\textwidth]{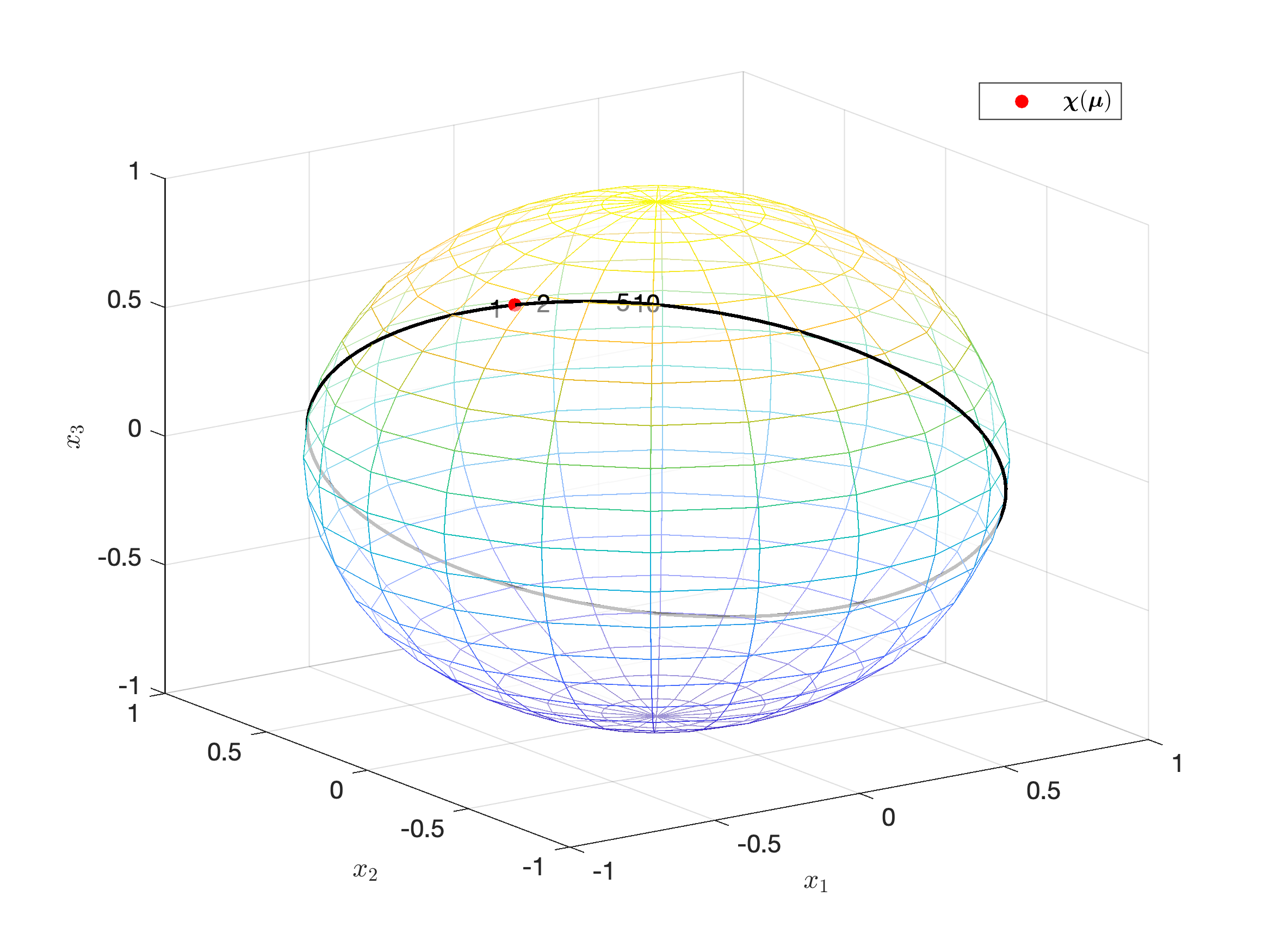}}
	\subcaptionbox{\label{subfig:3-stock_simulation_b}}
	{\includegraphics[width=0.48\textwidth]{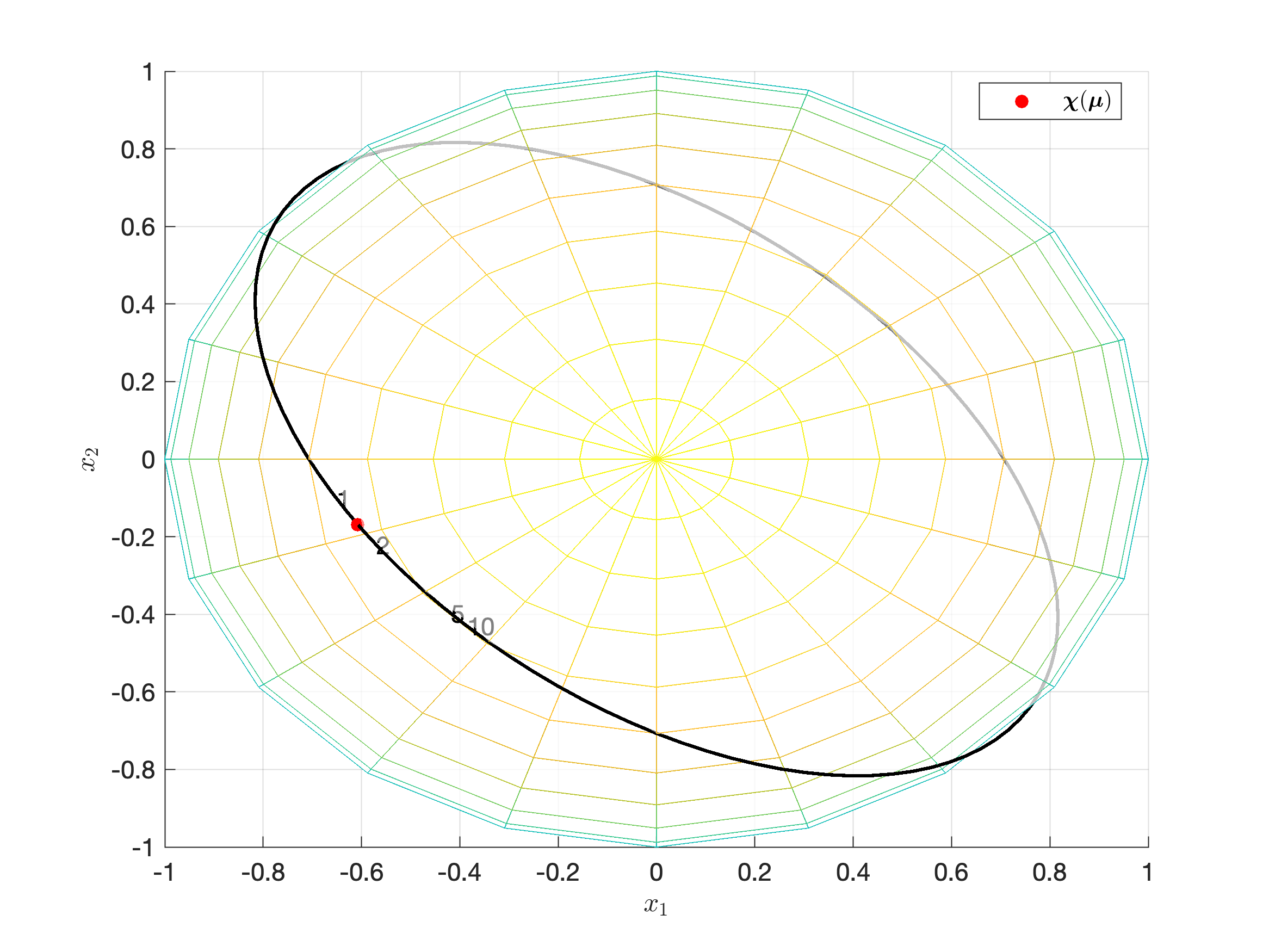}}
	\caption{3-stock simulation on the change of $ \sigma_{1} $ --- (\subref{subfig:3-stock_simulation_b}) is a top view of (\subref{subfig:3-stock_simulation_a}).}
	\label{fig:3-stock_simulation}
\end{figure}

Second, we illustrate the impact of $ \bm{\mu}_{3} $. Figure~\ref{fig:3-stock_simulation_mu} shows the simulated results.
In detail, the red numbers are the $ \bm{\chi}(\bm{\mu}) $'s and the black numbers are the $ \mathrm{MD}\left (\big. \bm{\chi}(\bm{Z}) \right ) $,
where number $ k $'s are calculated and simulated from $ \bm{Z} \sim N\left (\bm{\mu}_{3}^{(k)}, \Sigma_{3}\right ) $ in which 
$ \bm{\mu}_{3}^{(k)} := \left(
k\mu_{1}, \mu_{2}, \mu_{3}
\right)^{\T} $, $ k = 0.1, 1, 2, 5, 10 $.
In other words, we amply the first component $ \mu_{1} $ by $ k $.

\begin{figure}[!h]
	\centering
	\subcaptionbox{\label{subfig:3-stock_simulation_mu_a}}
	{\includegraphics[width=0.48\textwidth]{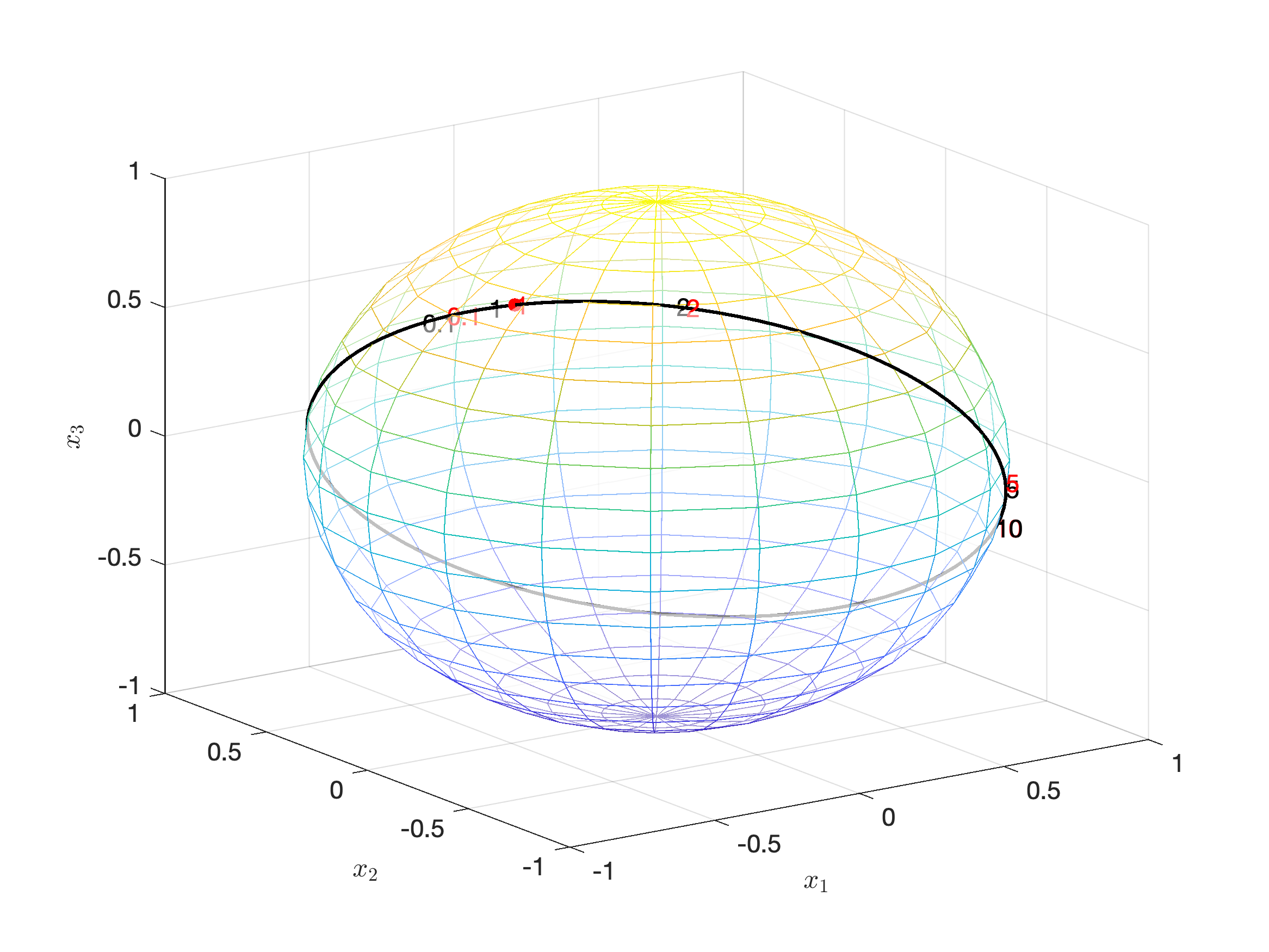}}
	\subcaptionbox{\label{subfig:3-stock_simulation_mu_b}}
	{\includegraphics[width=0.48\textwidth]{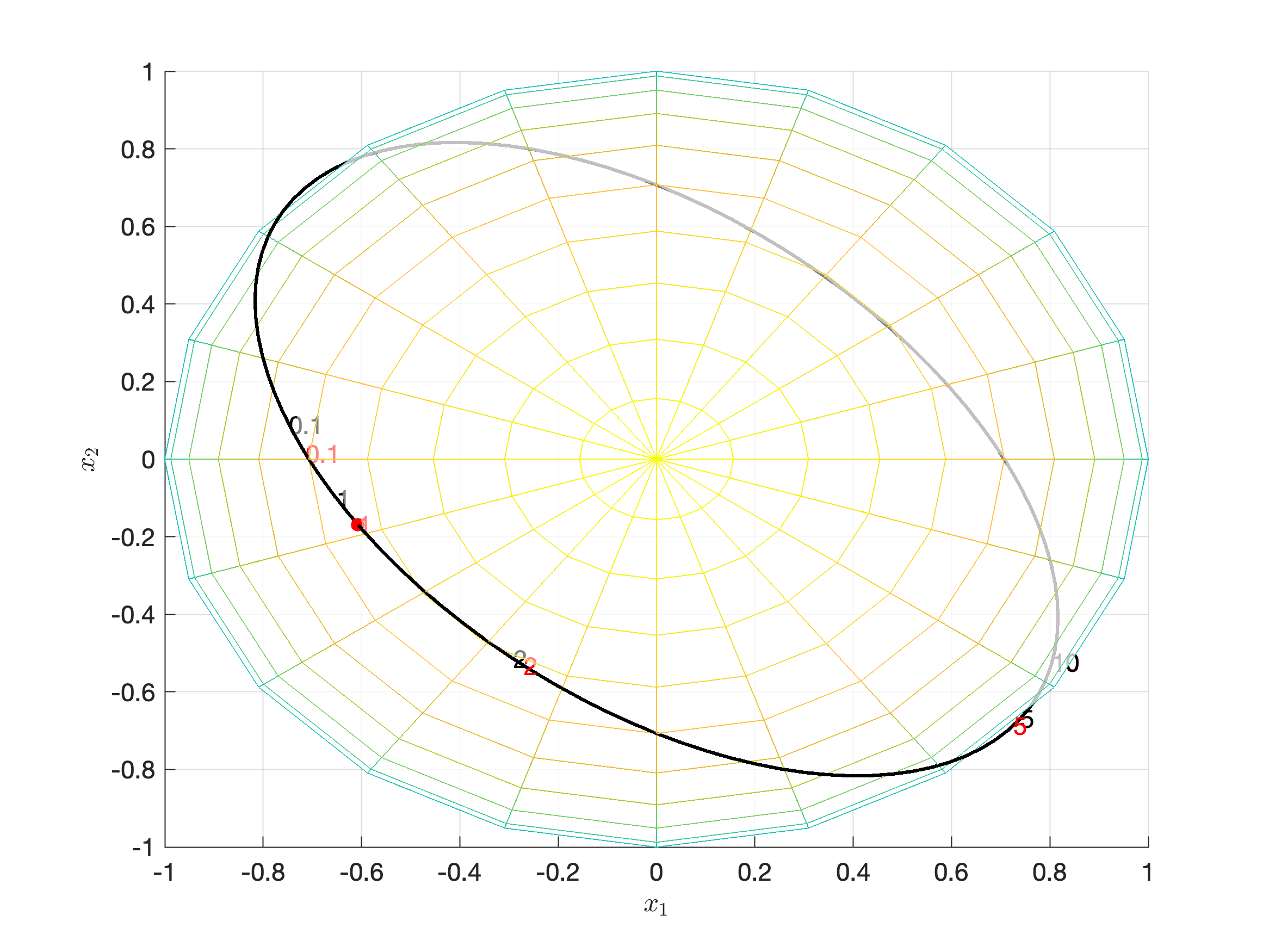}}
	\caption{3-stock simulation on the change of $ \mu_{1} $--- (\subref{subfig:3-stock_simulation_mu_b}) is a top view of (\subref{subfig:3-stock_simulation_mu_a}).}
	\label{fig:3-stock_simulation_mu}
\end{figure}

We interpret Figure~\ref{fig:3-stock_simulation} and \ref{fig:3-stock_simulation_mu}.
First, $ \bm{\chi}(\bm{\mu}_{3}) $ is different from $ \mathrm{MD}\left(\big.\bm{\chi}(\bm{Z})\right) $, as the red point is different from the number $ 1 $.
It accords that the structure of $ \Sigma $ in Theorem~\ref{thm:MD_MRL_ndim} is a sufficient condition that renders the $ \mathrm{MD} $ equal to $ \bm{\chi}(\bm{\mu}) $.
Second, the more heteroscedastic the $ \Sigma_{3} $, the more the difference between $ \bm{\chi}(\bm{\mu}) $ and $ \mathrm{MD}\left(\big.\bm{\chi}(\bm{Z})\right) $ that is represented by number from $ 2 $, $ \cdots $, to $ 10 $.
Third, 
the distance between the red and black $ 1 $ is broader than $ 2, 5, 10 $.
Thus the higher the $ \mu_{1} $, the smaller the difference between the $ \mathrm{MD}\left (\big. \bm{\chi}(\bm{Z}) \right ) $ and $ \bm{\chi}(\bm{\mu}) $.

\subsection{The Distribution of High-Dimensional \texorpdfstring{$ T_{\bm{\chi}(\bm{Z})}(\bm{\theta})| _{\bm{\theta} = \bm{\chi}(\bm{\mu}), \mathrm{MD}\left ( \bm{\chi}(\bm{Z}) \right )} $}{$  T_{chi(Z)}({theta}) | _{{theta} = \chi({mu}), {MD}({chi}({Z})) } $}}
\label{subsec:simulation_T(w)}

To get more realistic results, 
we consider high-dimensional cases
with parameters estimated from the real market
---
a part of the component stocks of CSI 300 Index of Chinese stock market. 
Specifically, we calculate the sample mean and covariance matrix of the simple daily returns from January 2017 to November 2018 as the parameters $ \bm{\mu}, \Sigma $.
In detail, 
we choose the first $ 50 $ stocks to generate $ \bm{\mu}_{50} $ and $ \Sigma_{50} $, and the first $ 10 $ stocks to generate $ \bm{\mu}_{10} $ and $ \Sigma_{10} $.
All the simulation parameters are in \ref{apx:sim_para}.

We simulate the sample distribution of $ T_{\bm{\chi}(\bm{Z})}(\bm{\theta}) $, when $ \bm{\theta} = \bm{\chi}(\bm{\mu}) $ and $ \mathrm{MD}\left ( \bm{\chi}(\bm{Z}) \right ) $, on $ \bm{Z} \sim N\left (\bm{\mu}_{10}, \Sigma_{10} \right ) $, $ N\left (\bm{\mu}_{50}, \Sigma_{50} \right ) $, 
and $ N\left (\bm{\mu}_{10}, \Sigma^{\prime}_{10} \right ) $. 
By amplifying the maximum of the standard deviance by $ 3 $, we derive the heteroscedastic $ \Sigma^{\prime}_{10} $ from $ \Sigma_{10} $.

The simulated kernel smoothing empirical p.d.f.'s of $ T_{\bm{\chi}(\bm{Z})}(\bm{\theta})| _{\bm{\theta} = \bm{\chi}(\bm{\mu}), \mathrm{MD}\left ( \bm{\chi}(\bm{Z}) \right )} $ are in Figure~\ref{fig:icfigstocks}
where the red vertical line is $ \mathbb{E}  T_{\bm{\chi}(\bm{Z})}\left (\Big.\mathrm{MD}\left (\big. \bm{\chi}(\bm{Z}) \right )\right ) $, i.e., the $ \mathrm{MRL} $ by Proposition~\ref{prop:ET}.

\begin{figure}[!h]
	\centering
	\begin{subfigure}{0.32\textwidth}
		\centering
		\includegraphics[width=\textwidth]{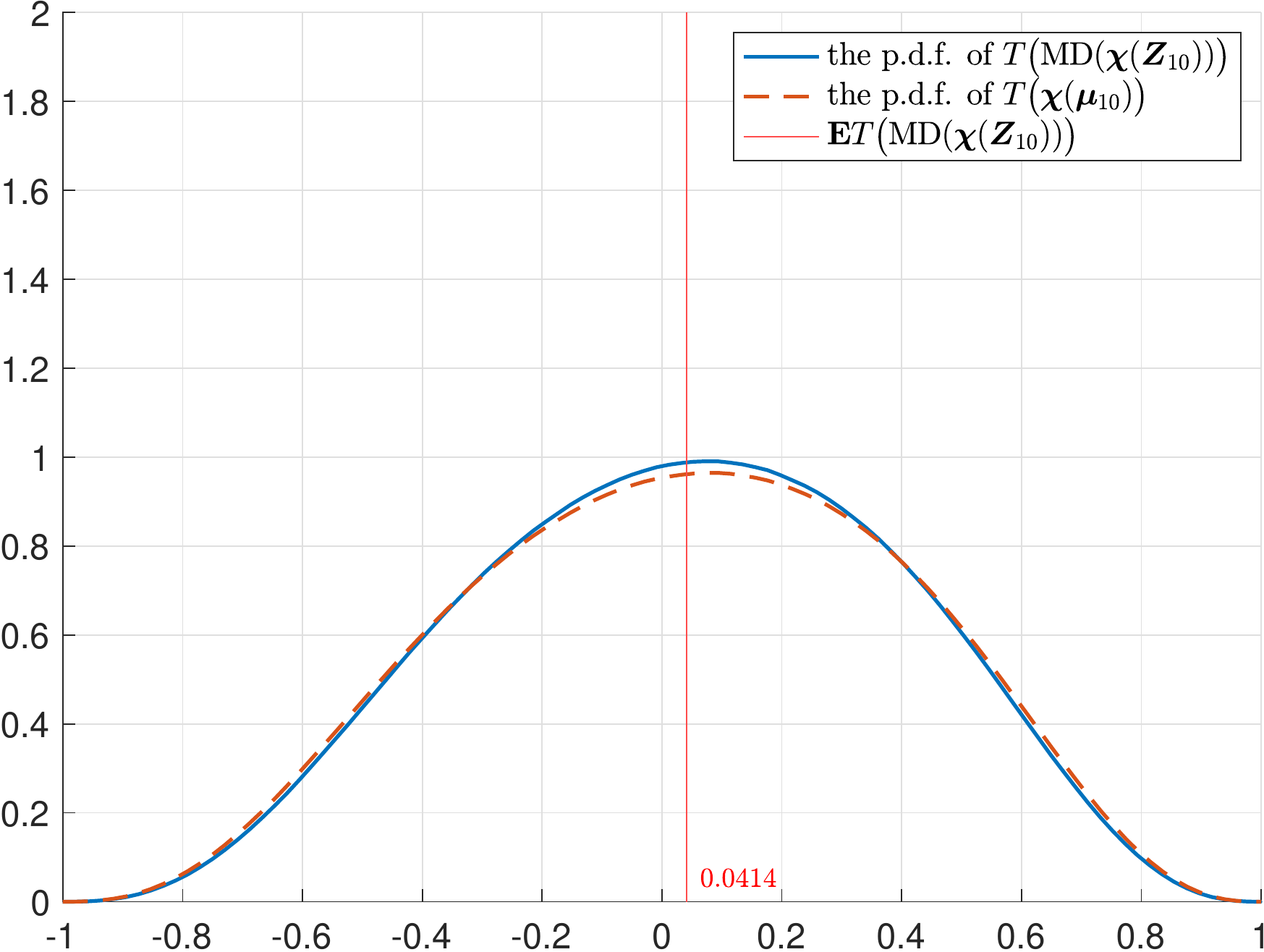}
		\caption[10 stocks.]{10 stocks.}
		\label{subfig:icfigstocks_10}
	\end{subfigure}
	\begin{subfigure}{0.32\textwidth}
		\centering
		\includegraphics[width=\textwidth]{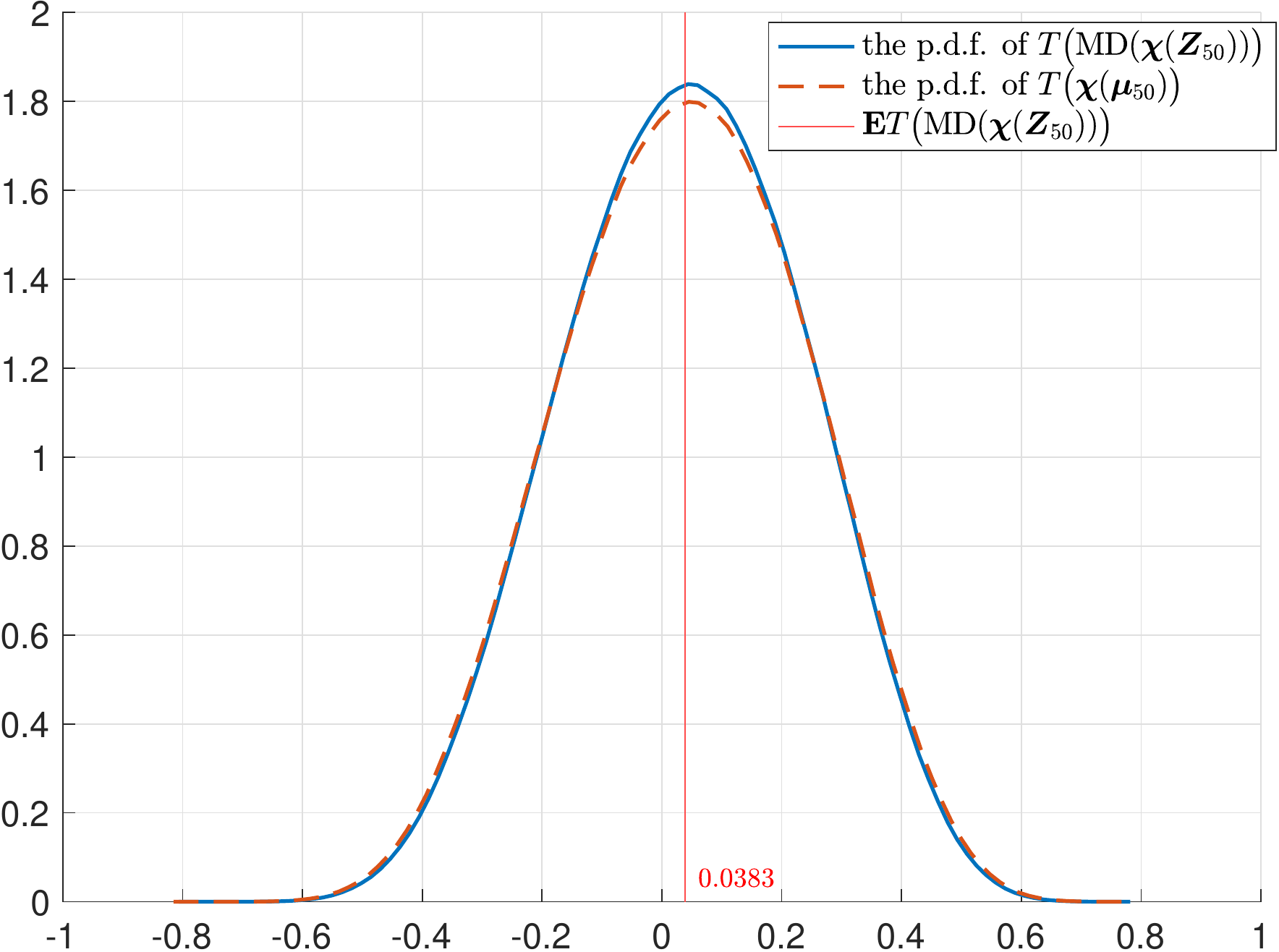}
		\caption[50 stocks.]{50 stocks.}
		\label{subfig:icfigstocks_50}
	\end{subfigure}
	\begin{subfigure}{0.32\textwidth}
		\centering
		\includegraphics[width=\textwidth]{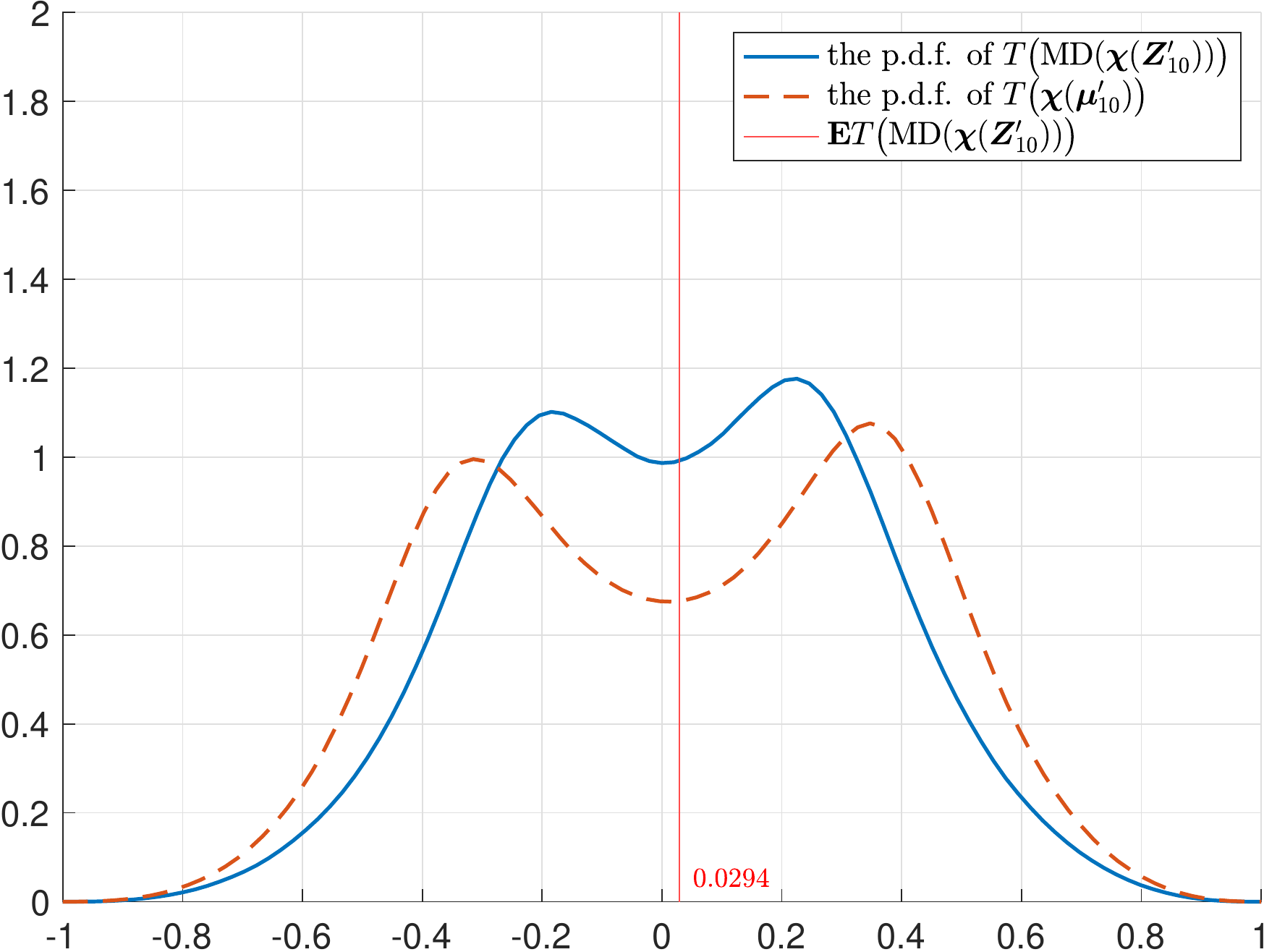}
		\caption{10 stocks with $ \Sigma'_{10} $.}
		\label{subfig:icfigstocks_10_ASA}
	\end{subfigure}
	\caption[The p.d.f.'s of $ T_{\bm{\chi}(\bm{Z})}({\theta}) $ w.r.t. different parameters.]{
		The p.d.f.'s of $ T_{\bm{\chi}(\bm{Z})}(\bm{\theta}) $ w.r.t. different parameters. 
	}
	\label{fig:icfigstocks}
\end{figure}

Some points in Figure~\ref{fig:icfigstocks} are discussed below in detail.
First, there is little difference between 
the $ \mathrm{MRL} $
in subfigure (\subref{subfig:icfigstocks_10}) and (\subref{subfig:icfigstocks_50}).
It seems that  
the $ \mathrm{MRL} $
reveals some basic facts of the financial market --- the value of $ \mathrm{IC} $ in expectation with a known future in the real market is usually around $ 0.03-0.04 $.
It follows \citet[p. 10-11]{grinold1994alpha}: ``A reasonable IC for an outstanding (top $ 5\% $) manager forecasting the returns on $ 500 $ stocks is about $ 0.06 $. If the manager is good (top quartile), $ 0.04 $ is a reasonable number.''
Second, the difference of volatility of the p.d.f.'s between the two subfigures (\subref{subfig:icfigstocks_10}) and (\subref{subfig:icfigstocks_50}) is significant, where $ T_{\bm{\chi}(\bm{Z})}(\bm{\theta}_{10})| _{\bm{\theta}_{10} = \bm{\chi}(\bm{\mu}_{10}), \mathrm{MD}(\bm{\chi}(\bm{Z}_{10}))} $ is more volatile than $ T_{\bm{\chi}(\bm{Z})}(\bm{\theta}_{50}) | _{\bm{\theta}_{50} = \bm{\chi}(\bm{\mu}_{50}), \mathrm{MD}(\bm{\chi}(\bm{Z}_{50}))}$.
It shows that the dimension $ n $ mainly impacts the variance,
rather than the expectation.
In portfolio management, the increase in the number of stocks does not impact $ \mathrm{IC} $'s mean, but significantly decreases $ \mathrm{IC} $'s variance.
Third, the difference between subfigures (\subref{subfig:icfigstocks_10}) and (\subref{subfig:icfigstocks_10_ASA}) is significant, 
suggesting that heteroscedasticity leads to the distinction of the distribution between $ T_{\bm{\chi}(\bm{Z})}(\bm{\theta}) | _{\bm{\theta} = \bm{\chi}(\bm{\mu}), \mathrm{MD}(\bm{\chi}(\bm{Z}))} $.
The heteroscedasticity renders the p.d.f. of $ \mathrm{IC} $ no longer a bell-shaped unimodal one.
Fourth, the difference between the p.d.f.'s of $ T_{\bm{\chi}(\bm{Z})}\left (\Big.\mathrm{MD}\left (\big. \bm{\chi}(\bm{Z}) \right )\right ) $ and $ T_{\bm{\chi}(\bm{Z})}(\bm{\chi}(\bm{\mu})) $ in subfigure (\subref{subfig:icfigstocks_10_ASA}) is more significant than that in other subfigures. 
It enlightens us that, in some cases, replacing  $ \mathrm{MD}\left (\big. \bm{\chi}(\bm{Z}) \right ) $ with $ \bm{\chi}(\bm{\mu}) $ may lead to fallacy.

\subsection{Comparing the Simulation and the Theoretical Results of High-Dimensional \texorpdfstring{$ \mathrm{MRL}\left (\big. \bm{\chi}(\bm{Z}) \right ) $}{$ MRL(chi(Z)) $}}
\label{subsec:simulation_ETw*}

In this subsection, 
we propose an approach to estimate $ \mathrm{MRL}\left (\big. \bm{\chi}(\bm{Z}) \right ) $ using estimated parameters of $ \bm{Z} $
and compare the approximate result with the true value by simulation.

According to Theorem~\ref{thm:MD_MRL_ndim}, $ \mathrm{MRL}\left (\big. \bm{\chi}(\bm{Z}_{10}) \right ) $ can be calculated approximately by regarding the $ \Sigma_{10} $ as the $ \sigma^{2}\Xi $.
In the $ n=10 $ simulation, we have $ \Vert P\bm{\mu}_{10} \Vert = 0.0027 $, $ \widehat{\sigma} := \sqrt{\frac{1}{10}\sum_{i=1}^{10}\sigma_{i}^{2}} = 0.0224 $, and $ \widehat{\rho} := \frac{1}{10\times 9 / 2} \sum_{1\leq i < j \leq 10} \rho_{ij} = 0.1243 $.
Therefore, $ 
\mathrm{MRL}\left (\big. \bm{\chi}(\bm{Z}_{10}) \right )
= \varrho_{9}\left(\frac{\Vert P\bm{\mu}_{10}\Vert}{\widehat{\sigma} \sqrt{1 - \widehat{\rho}}}\right) 
= \varrho_{9}(0.1288) = 0.0417 $,
while the simulated true value is $ 0.0414 $.
Furthermore, by a similar approach, we can calculate
the $ 50 $-dim cases where 
$
\mathrm{MRL}\left (\big. \bm{\chi}(\bm{Z}_{50}) \right )
\approx
0.0392
$,
while the simulated true value is $ 0.0383 $.

Thus, in some stationary scenarios, we can use Theorem~\ref{thm:MD_MRL_ndim} to approximate $ \mathrm{MRL}\left (\big. \bm{\chi}(\bm{Z}) \right ) $ quickly.

\section{Empirical Study}\label{sec:empirical}

In this section, we aim at exploring the empirical facts about the high-dimensional directional statistics of $ \bm{X} = \bm{\chi}(\bm{Z}) $, 
analyzing the connection between $ \bm{Z} $ and $ \bm{X} $ statically,
and clarifying their time-series properties.

We collect the historical daily return of the component stocks of the CSI $ 300 $ Index at the end of $ 2018 $. In detail, the daily data ranges from $ 2014 $ to $ 2018 $, including $ 1220 $ trading days in $ 5 $ years. However, not all stocks were traded throughout the time interval, so we remove those stocks not traded more than $ 10\% $ of all days, leaving $ n=185 $ stocks.
As a result, a matrix of $ 1220 \times 185 $ representing the returns is generated, called $ Z $.
The $ t $-th row of $ Z $ is a $ 185 $-dim vector representing the return of $ n = 185 $ stock returns at time $ t $, and we denote the vector as $ \bm{z}_{t} $.
With the pre-standardized $ \bm{z}_{t} $ at hand, we can calculate $ \bm{x}_{t} := \bm{\chi}(\bm{z}_{t}) $ easily.

The section is composed of two parts.
First, we carry out the exploratory data analysis
by assuming the sample points to be i.i.d..
Second, we show the time-series properties.

\subsection{Exploratory Data Analysis}

In this subsection, we explore the data $ \left \{ \bm{x}_{t} \right \}_{1\leq t\leq 1220} $, to have a basic understanding of $ \bm{X} = \bm{\chi}(\bm{Z}) $.
We assume the $ 185 $-dim time-series sample $ \left \{ \bm{x}_{t} \right \}_{1\leq t\leq 1220} $ to be i.i.d..

The analysis is composed of two parts.
One of them is the high-dimensional exploration of $ \bm{x}_{t} $.
We calculate the $ \mathrm{MD} $ and $ \mathrm{MRL} $ of the $ 185 $-dim sample $ \bm{x}_{t} $ 
and generate a $ 1 $-dim sample of 
$ \widehat{\bm{\iota}}^{\T}\bm{x}_{t} $ given an estimated $ \widehat{\bm{\iota}} $,
discussing their characteristics and interpretations.
The other one is the analysis of the connection between $ \bm{z}_{t} $ and $ \bm{x}_{t} $. 
We list the descriptive statistics of the selected $ z_{it} $ and $ x_{it} $.
The covariance matrix of $ \mathrm{cov}(\bm{Z}) $ and $ \mathrm{cov}(\bm{X}) $ are estimated and compared.

\subsubsection{Exploration of High-Dimensional Directional Statistics of \texorpdfstring{$ \bm{x}_{t} $}{$ {x}_{t} $}}

\paragraph{We show the sample $ \mathrm{MD} $ and $ \mathrm{MRL} $ of $ \bm{x}_{t} $ first}

The sample $ \mathrm{MRL} $
is the norm of the mean vector $ \frac{1}{t_{2}-t_{1}} \sum_{t\in [t_{1}, t_{2})} \bm{x}_{t} $. 
The sample $ \mathrm{MD} $
is estimated as $ \frac{1}{\|\sum_{t\in [t_{1}, t_{2})} \bm{x}_{t}\|}\sum_{t\in [t_{1}, t_{2})} \bm{x}_{t} $, which is the unitization of the mean vector.

We show
the $ 185 $-dim sample $ \mathrm{MD} $'s as bar charts in the upper part of Figure~\ref{fig:bar_MD_hist},
where the x-axis shows the values of every components.
In detail, the upper subfigure on the left, (\subref{subfig:bar_5year_hist}) $ \mathrm{MD}_{5y} $, is the sample $ \mathrm{MD} $ of the whole $ 5 $ years,
and we sort its order of components by their values from smallest to biggest.\footnote{We give specific details about the meanings of the bar graph. 
	For instance, the sorted $ \mathrm{MD}_{5y} $'s first component is $ -0.1412 $, so the top bar in (\subref{subfig:bar_5year_hist}) is toward left with a length of $ 0.1412 $. 
	The second component is $ -0.1318 $, so the second top bar in (\subref{subfig:bar_5year_hist}) is toward left with a length of $ 0.1318 $. And forth on.
	The last component is $ 0.2517 $, so the bottom bar in (\subref{subfig:bar_5year_hist}) is toward right with a length of $ 0.2517 $.}
The $ 5 $ upper subfigures on the right, (\subref{subfig:bar_2014_hist} to \subref{subfig:bar_2018_hist}), are the sample $ \mathrm{MD} $'s of $ 2014 $ to $ 2018 $, 
and their components are in the same order of $ \mathrm{MD}_{5y} $.
The red dotted line in each one lies between the $ 107 $th and $ 108 $th components of the column vector $ \mathrm{MD} $,
the watershed between positive and negative components in $ \mathrm{MD}_{5y} $.
The downer part of Figure~\ref{fig:bar_MD_hist} is the histograms of the components of the sample $ \mathrm{MD} $ as a $185$ sample,
and they share the same x-axis with the upper part.

\begin{figure}[htpb]
	\centering
	{\includegraphics[width=0.48\textwidth]{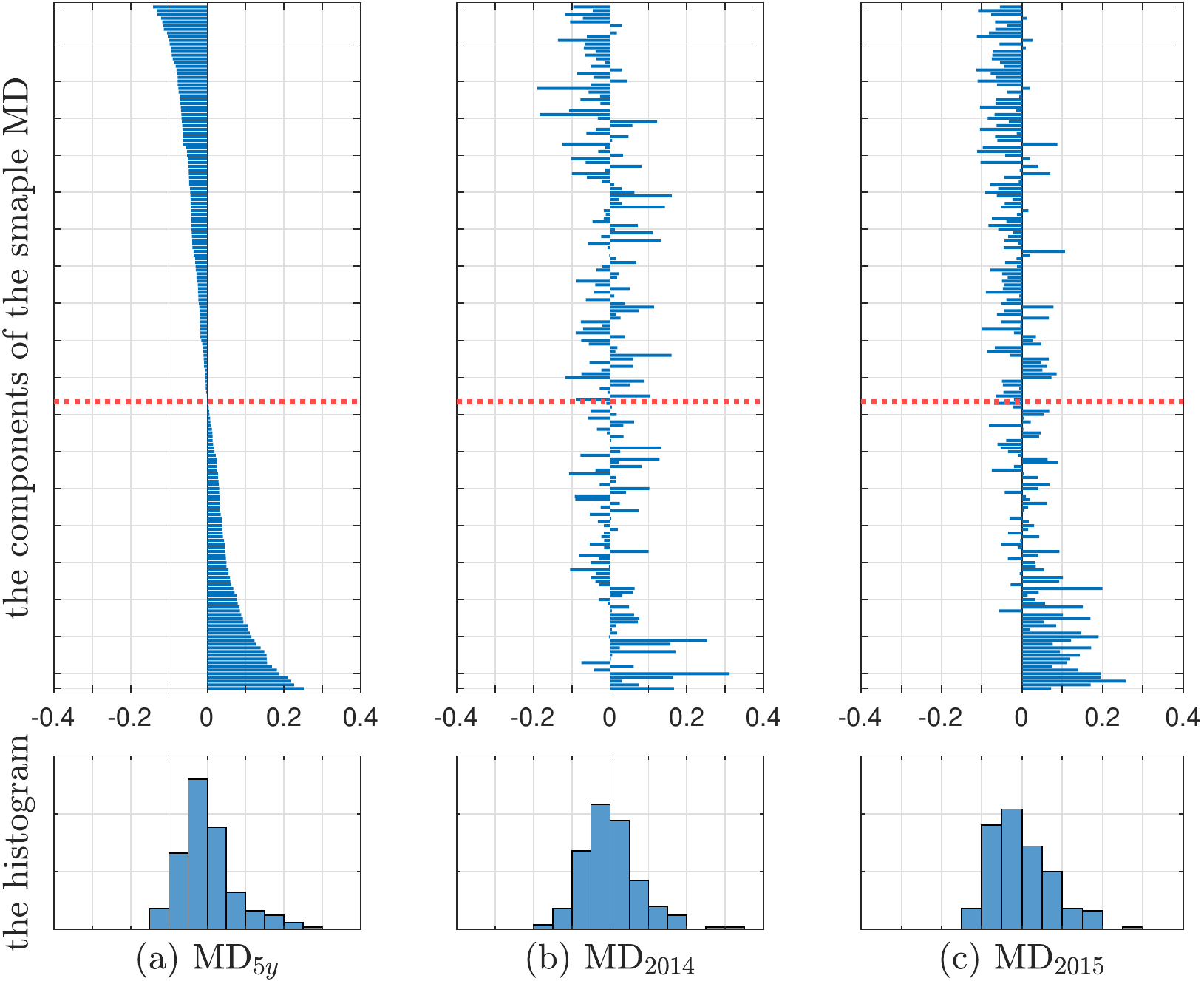}\phantomsubcaption\label{subfig:bar_5year_hist}\phantomsubcaption\label{subfig:bar_2014_hist}\phantomsubcaption\label{subfig:bar_2015_hist}}
	{\includegraphics[width=0.48\textwidth]{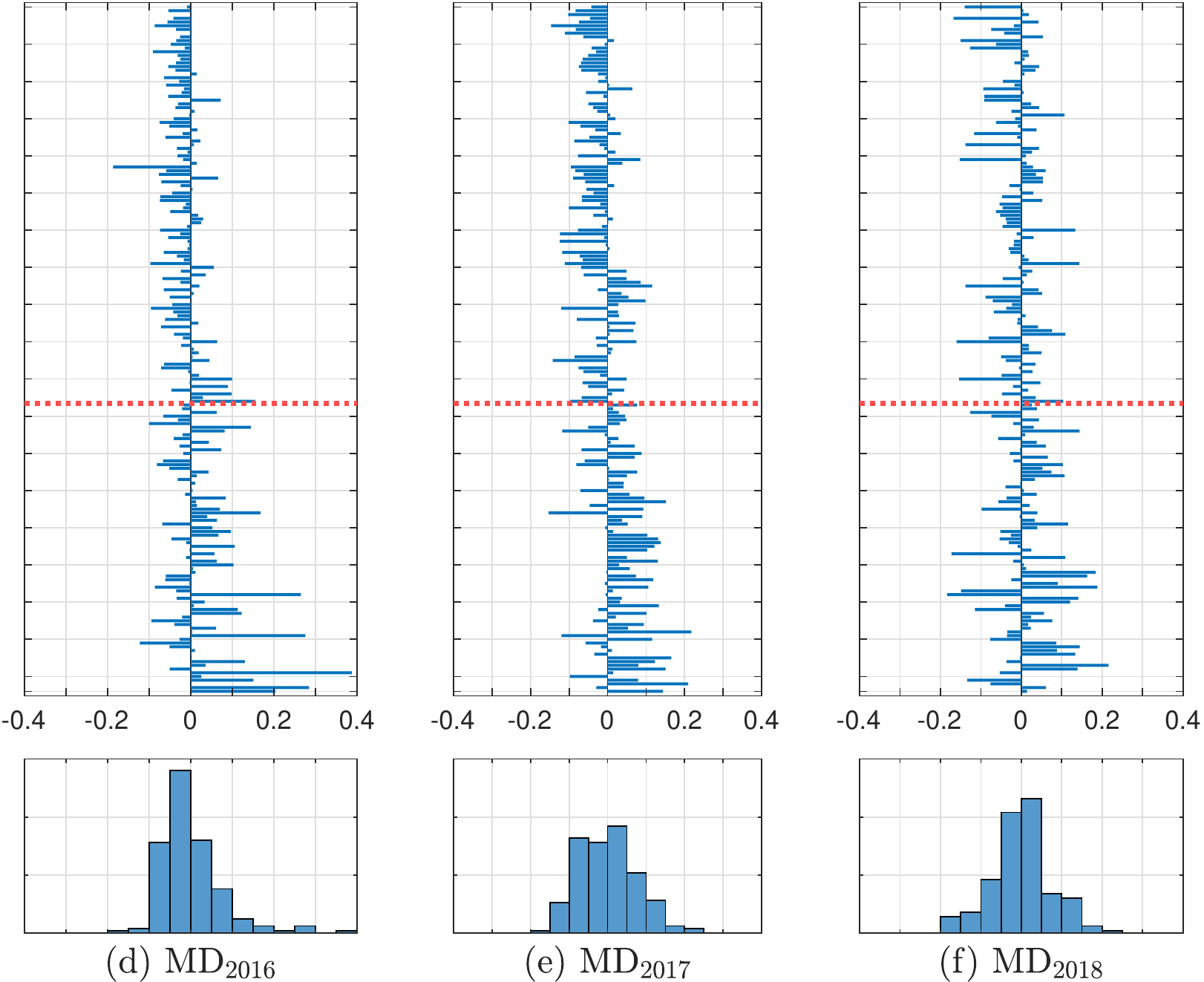}\phantomsubcaption\label{subfig:bar_2016_hist}\phantomsubcaption\label{subfig:bar_2017_hist}\phantomsubcaption\label{subfig:bar_2018_hist}}
	\caption{The sample $ \mathrm{MD} $'s and their components' histograms of different time-window $ \bm{x}_{t} $'s.}
	\label{fig:bar_MD_hist}
\end{figure}

We draw some intuitive conclusions from Figure~\ref{fig:bar_MD_hist}.
First, the six sample $ \mathrm{MD} $'s are different from each other.
Each bar chart has different shapes from others, and we can hardly put them in categories.
We calculate the angle among the $ 6 $ vectors, 
where the smallest is about $ 43^{\circ} $ and the biggest is about $ 106^{\circ} $.
Second, 
the first $ 107 $ components above the red dotted line are different from 
the last $ 78 $ ones below, 
in different time window samples.
It means that the behaviors of
the top-ranked and the bottom-ranked stock returns are different, and we need to treat them separately. 
Compared with $ \mathrm{MD}_{5y} $, the most matched year is $ 2015 $
---
in fact, $ \mathrm{MD}_{2015} $ has the smallest angle with $ \mathrm{MD}_{5y} $, i.e., $ 43^{\circ} $.
The components are more stable than the others every year.
Third, the $ 6 $ histograms are different.
In order to clarify the difference clearly, we need to describe their characteristics.

\begin{table}[htpb]
	\centering
	\caption{Descriptive statistics of the components of the sample $ \mathrm{MD} $'s of different time-window $ \bm{x}_{t} $.}
	\begin{tabularx}{0.8\textwidth}{c||c|Y|Y|Y|Y|Y}
		\hline\hline
		time window & $ 2014$-$ 18 $ & $ 2014 $   & $ 2015 $    & $ 2016 $   & $ 2017 $    & $ 2018 $   \\ \hline\hline 
		median & $ -0.0111 $ & $ -0.0068 $ & $ -0.0126 $ & $ -0.0134 $ & $ -0.0035 $ & $ 0.0053 $ \\ \hline
		skewness & $ 0.9082 $ & $ 0.7680 $ & $ 0.8777 $ & $ 1.8529 $ & $ 0.3477 $ & $ 0.0300 $ \\ \hline
		kurtosis & $ 3.9725 $ & $ 4.7518 $ & $ 3.4797 $ & $ 8.9126 $ & $ 2.7173 $ & $ 3.3746 $ \\ \hline
		positive counts & $ 78 $    & $ 86 $    & $ 79 $    & $ 73 $    & $ 91 $    & $ 94 $ \\ \hline
		negative counts & $ 107 $   & $ 99 $    & $ 106 $   & $ 112 $   & $ 94 $    & $ 91 $ \\ \hline\hline
	\end{tabularx}
	\label{tab:descriptive_statistics_of_components_of_MD}
\end{table}
Table~\ref{tab:descriptive_statistics_of_components_of_MD} lists the descriptive statistics of the components of the sample $ \mathrm{MD} $'s.
It reveals the characteristics of the sample $ \mathrm{MD} $'s from a different viewpoint.
First, 
from $ 2014 $ to $ 2018 $, 
the median changes from negative to positive, 
and the skewness varies from positive to almost $ 0 $.
The number of positive numbers gradually increases to a half with the negative decreases.
All the phenomena can be interpreted as the results of the development of the market.
Second, the samples of $ 2015 $ and $ 2016 $ can be regarded as special ones. 
The median is more negative than others, 
and the skewness and kurtosis are very large.
On the contrary, the samples $ 2014 $, $ 2017 $, and $ 2018 $ are general ones.
The numbers of their positive and negative components are very close.
Their median and skewness are close to $ 0 $.

We list the sample $ \mathrm{MRL} $'s in Table~\ref{tab:sample_MRL}.
The $ 5 $-year sample $ \mathrm{MRL} $ is smaller than other $ 1 $-year ones because of its large sample size and the bias of the statistics of small samples.
For a $ 1 $-year time window,
the sample $ \mathrm{MRL} $ is around $0.06 \sim 0.08$ and less than $0.1$. 
Combined with the information in Table~\ref{tab:descriptive_statistics_of_components_of_MD},
$ 2016 $ is the most different year from others in terms of the cross-section.
\begin{table}[htbp]
	\centering
	\caption{The sample $ \mathrm{MRL} $ of different time-window $ \bm{x}_{t} $'s.}
	\begin{tabularx}{0.8\textwidth}{c||Y|Y|Y|Y|Y|Y}
		\hline\hline
		time window    & $ 2014$-$ 18 $ & $ 2014 $  & $ 2015 $  & $ 2016 $  & $ 2017 $  & $ 2018 $ \\ \hline\hline
		sample size  & 1220  & 245   & 244   & 244   & 244   & 243 \\ \hline
		$\mathrm{MRL}$     & 0.0331 & 0.0658 & 0.0849 & 0.0592 & 0.0804 & 0.0601 \\ \hline\hline
	\end{tabularx}%
	\label{tab:sample_MRL}%
\end{table}%

The scatter matrix $ \overline{\bm{T}} $ about the origin measures the dispersion,
where $ \overline{\bm{T}}_{[t_{1}, t_{2})} := \frac{1}{t_{2}-t_{1}}\sum_{t \in [t_{1}, t_{2})} \bm{x}_{t}\bm{x}_{t}^{\T} $.
We sort eigenvalues of $ \overline{\bm{T}} $ of different time windows samples and show them in Figure~\ref{fig:T_eig}.
If we denote the eigenvalues of $ \overline{\bm{T}} $ from large to small as $ t_{1}, t_{2}, \cdots, t_{i}, \cdots, t_{185} $, and mark one of their corresponding unit eigenvectors as $ \bm{t}_{i} $,
then the larger the $ t_{i} $, the higher the volatility along $ \pm \bm{t}_{i} $.
\begin{figure}[htpb]
	\centering
	\includegraphics[width=0.48\linewidth]{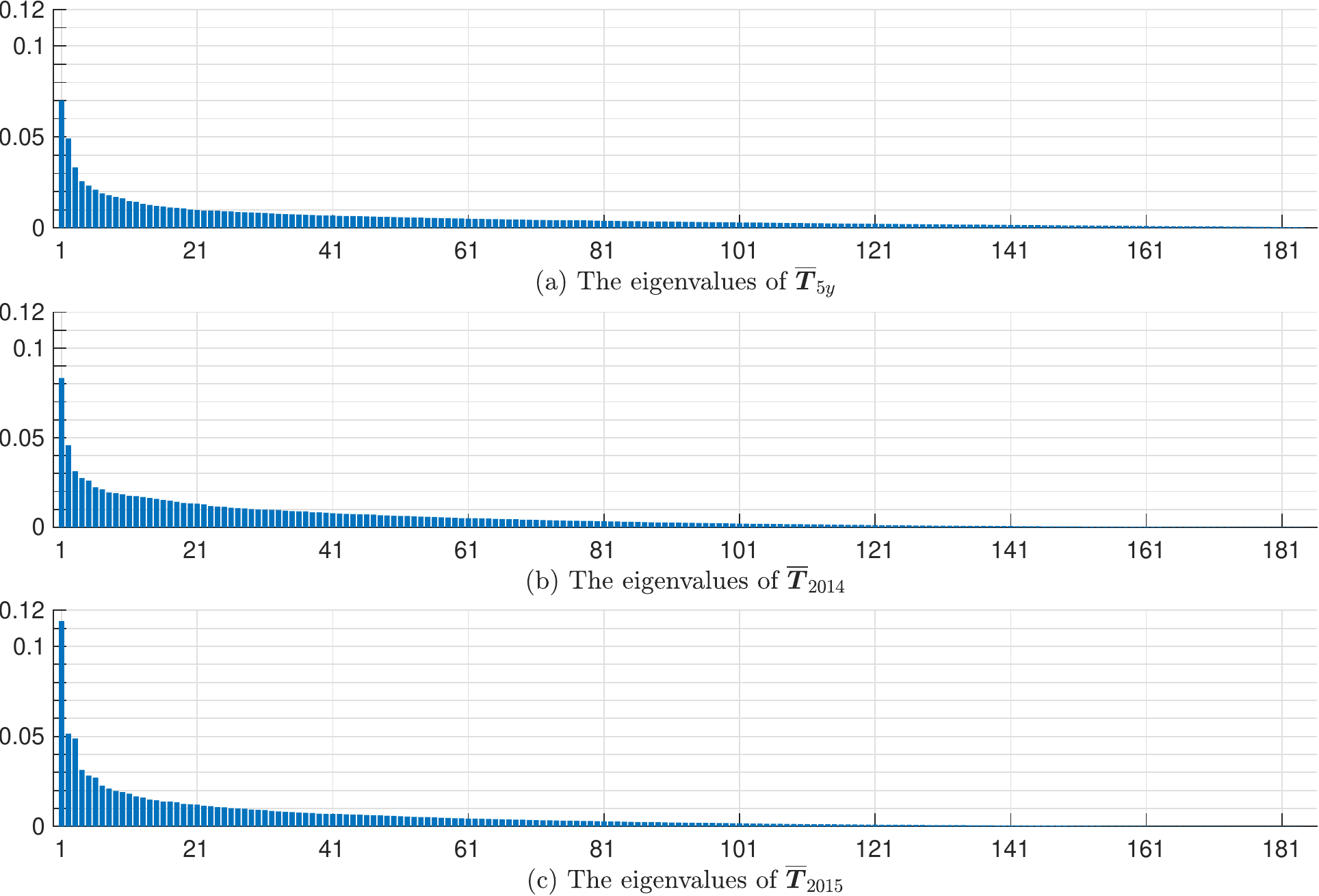}
	\includegraphics[width=0.48\linewidth]{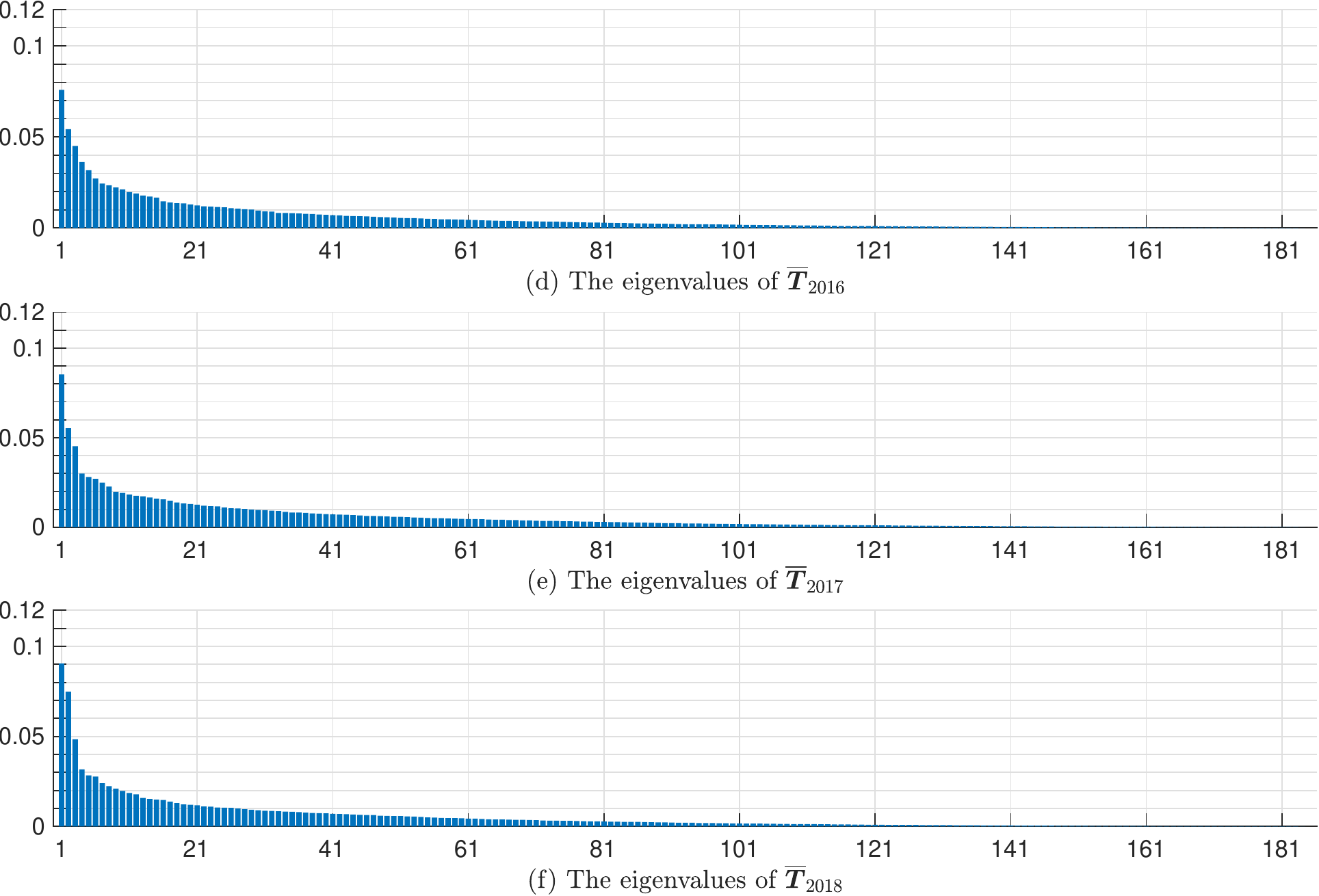}
	\caption{The sorted eigenvalues of the scatter matrix $ \overline{\bm{T}} $ of different time-window $ \bm{x}_{t} $'s.}
	\label{fig:T_eig}
\end{figure}
We interpret the figure.
First, the eigenvalues in each subfigure are all not evenly distributed.
It can be interpreted as that
$ \bm{x}_{t} $ stabilizes in some direction but fluctuates in some others.
Furthermore, the sum of the $ 20 $ largest eigenvalues accounts for more than $ 50\% $ of the sum of all $ 185 $ ones.
The sum of the smallest $ 20 $ ones in each subfigure is about $ 1\% $.
These phenomena indicate that the principal component analysis method may be applicable in the directional statistics of stock returns.
Second, subfigure (c) of $ 2015 $ has the largest maximum eigenvalue, $ t_{1} = 0.1140 $,
which is the only one larger than $ 0.1 $.
Thus, in $ 2015 $, along $ \pm\bm{t}_{1} $, there is the highest volatility.
Meanwhile, in $ 2015 $, the Chinese stock market experienced tremendous ups and downs.
Third, the minimum eigenvalue in each subfigure is $ 0 $, 
and the eigenvectors corresponding to $ 0 $ are $ k\bm{1} ( k\neq 0 )$, which is implied in the definition of the standardization $ \bm{\chi} $.

\paragraph{We then discuss the properties of  $ \widehat{\bm{\iota}}^{\T}\bm{x}_{t}$}

``Non-parametric techniques are almost non-existent on higher-dimensional spheres.'' \citet[\S 10.1, p. 193]{mardia2000directional}.
$ 185 $ is a quite large dimension in statistics.
We can hardly make any further inferences from this high-dimension sample directly. 
Therefore, we take the inner product between a given vector $ \bm{\iota} $ and the $ 185 $-dim sample 
$ \left\{ \bm{x}_{t} \right\}_{t\in [t_1, t_2)} $ to generate a new $1$-dim sample $ \left\{ \bm{\iota}^{\T}\bm{x}_{t} \right\}_{t\in [t_1, t_2)} $.
One of the most natural selections of $ \bm{\iota} $ is the sample $ \mathrm{MD} $ of $ \bm{x}_{t} $,
so we denote the generated sample as $ \left\{ \widehat{\bm{\iota}}^{\T}\bm{x}_{t} \right\}_{t\in [t_1, t_2)} $, where $ \widehat{\bm{\iota}} :=  \frac{1}{\|\sum_{t\in [t_{1}, t_{2})} \bm{x}_{t}\|}\sum_{t\in [t_{1}, t_{2})} \bm{x}_{t} $.

Table~\ref{tab:descriptive_statistics_of_muTx} shows $ \left \{ \widehat{\bm{\iota}}^{\T}\bm{x}_{t} \right \} $'s descriptive statistics.
Figure~\ref{fig:descrpitve_statistics_of_muTx} illustrates $ \left \{ \widehat{\bm{\iota}}^{\T}\bm{x}_{t} \right \} $'s empirical distribution,
where (\subref{subfig:hist_5years}) is the histogram of the
$ 2014$-$2018 $ time window, 
and (\subref{subfig:boxplot_6windows}) is the box plots.
In (\subref{subfig:hist_5years}), we use the orange dashed line to show the p.d.f. of the normal distribution with the same sample mean and standard deviation.

\begin{table}[htpb]
	\centering
	\caption{Descriptive statistics of different time-window $ \left \{ \widehat{\bm{\iota}}^{\T}\bm{x}_{t} \right \} $'s.}
	\begin{tabularx}{0.8\textwidth}{c||Y|Y|Y|Y|Y|Y}
		\hline\hline
		time window & $ 2014$-$ 18 $ & $ 2014 $   & $ 2015 $    & $ 2016 $   & $ 2017 $    & $ 2018 $   \\ \hline \hline
		  {mean}    & $ 0.0331 $  & $ 0.0658 $ & $ 0.0849 $  & $ 0.0592 $ & $ 0.0804 $  & $ 0.0601 $ \\ \hline
		   {standard deviation}    & $ 0.1786 $  & $ 0.1549 $ & $ 0.2823 $  & $ 0.1404 $ & $ 0.1809 $  & $ 0.1479 $ \\ \hline
		{skewness}  & $ -0.1050 $ & $ 0.0658 $ & $ -0.3734 $ & $ 0.2436 $ & $ -0.3272 $ & $ 0.0306 $ \\ \hline
		{kurtosis}  & $ 2.8450 $  & $ 2.9246 $ & $ 2.5179 $  & $ 2.9896 $ & $ 3.0134 $  & $ 2.5283 $ \\ \hline\hline
	\end{tabularx}
	\label{tab:descriptive_statistics_of_muTx}
\end{table}

\begin{figure}[htpb]
	\centering
	\subcaptionbox{The histogram of $ 5 $-year $ \left \{ \widehat{\bm{\iota}}^{\T}\bm{x}_{t} \right \} $.\label{subfig:hist_5years}}
	{\includegraphics[width=0.48\textwidth]{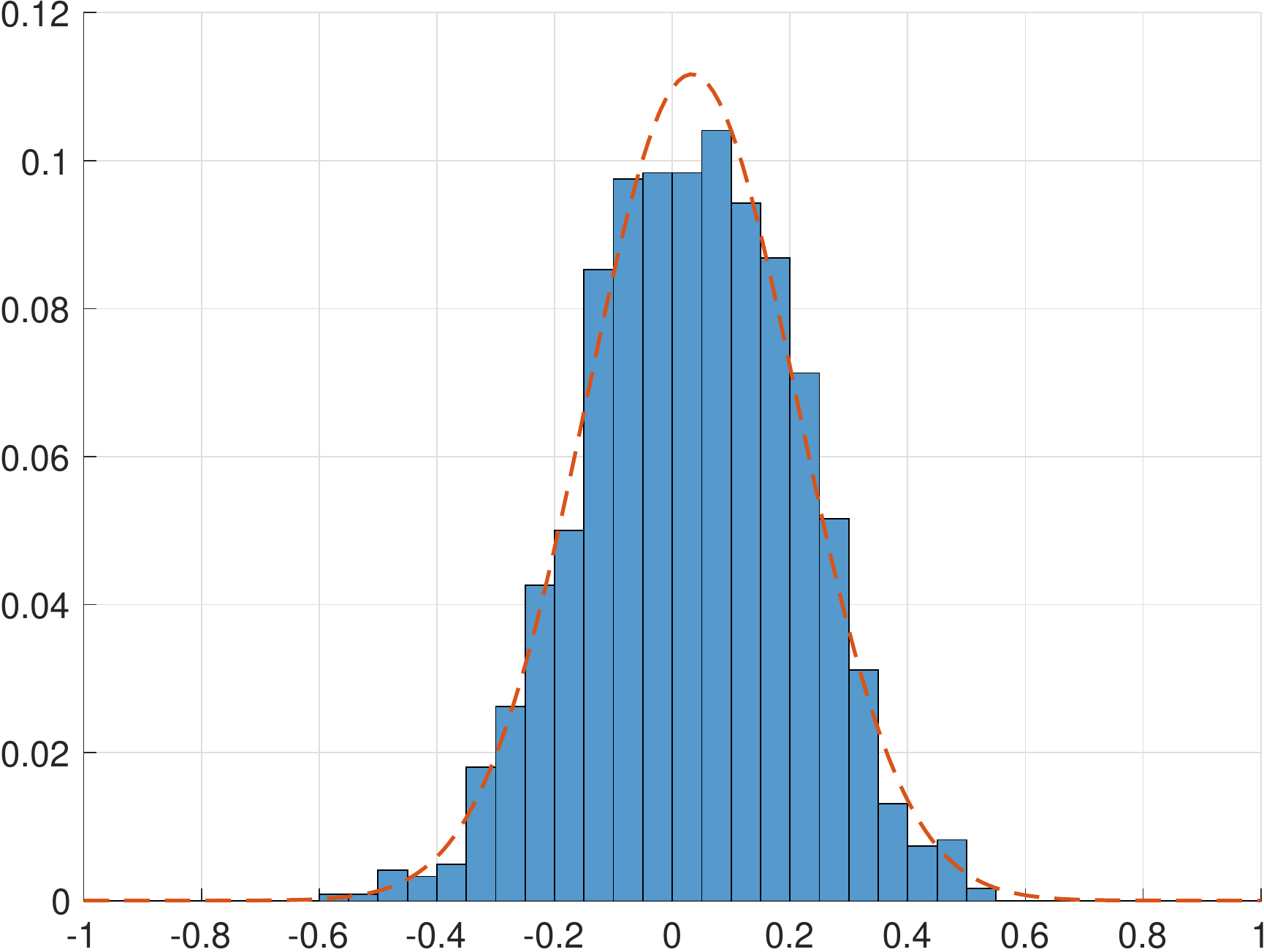}}
	\subcaptionbox{The box plots of different time-window $ \left \{ \widehat{\bm{\iota}}^{\T}\bm{x}_{t} \right \} $'s.\label{subfig:boxplot_6windows}}
	{\includegraphics[width=0.48\textwidth]{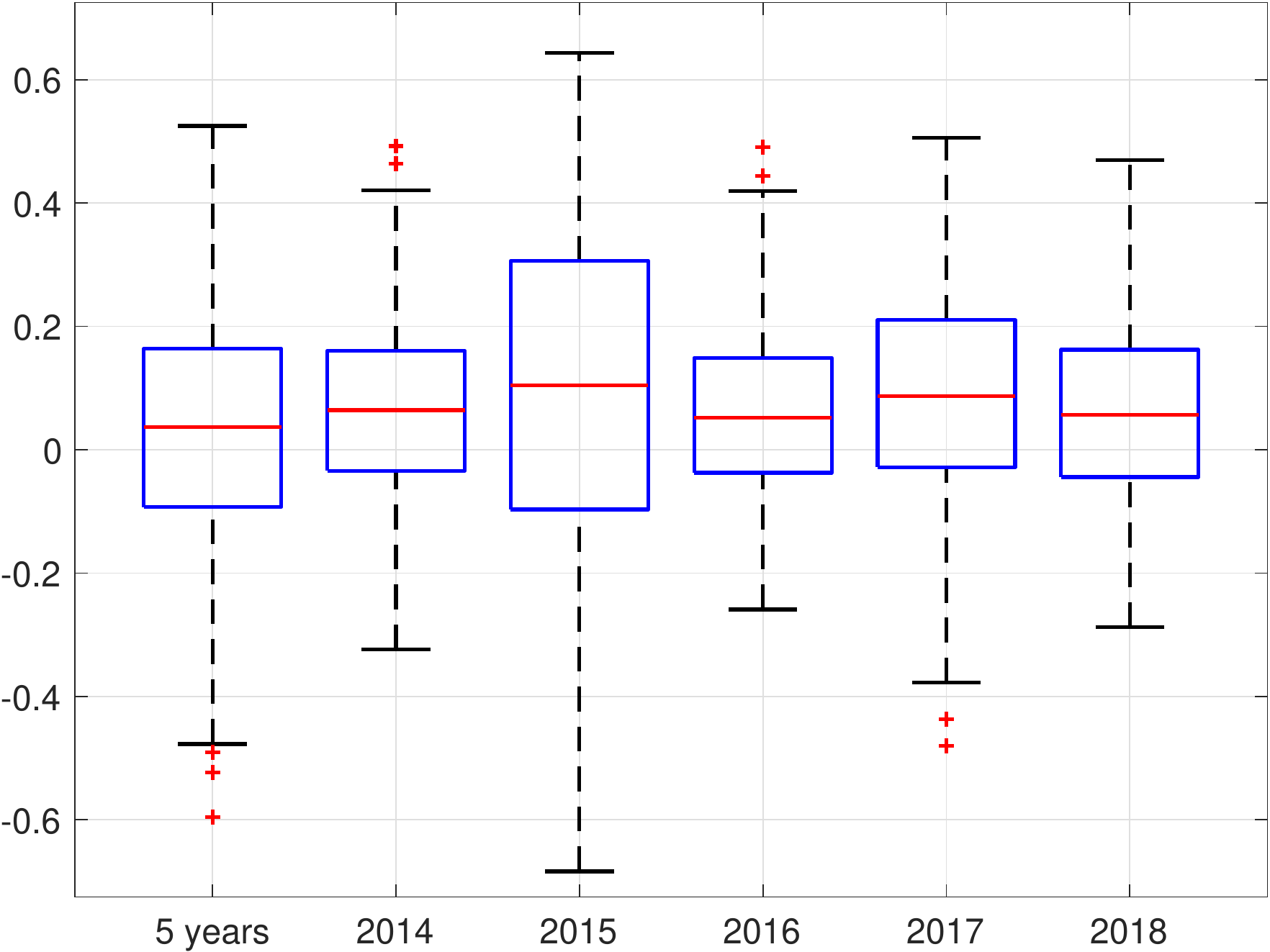}}
	
	\caption{Descriptive statistics of different time-window $ \left \{ \widehat{\bm{\iota}}^{\T}\bm{x}_{t} \right \} $'s.}
	\label{fig:descrpitve_statistics_of_muTx}
\end{figure}

To begin with, we interpret the sample of a $ 5 $-year time window $2014$-$2018$.
First, the mean of the $ 5 $-year sample is $ 0.0331 $, and its sample standard deviation is $ 0.1786 $, indicating
a non-concentrating distribution 
---
there are many sample points whose angles to the sample $ \mathrm{MD} $ are greater than $ 90^{\circ} $.
Its skewness is $ -0.1050 $, implying that the left tail is thicker than the right. 
Its kurtosis is $ 2.8450 $, 
so its distribution has slightly lighter tails than a normal distribution, 
consistent with the intrinsic bounded property of the inner product of two unit vectors.
Second,
in the central part of the subfigure (\subref{subfig:hist_5years}),
the histogram is lower than the p.d.f. of the normal distribution.
It means that the concentration around the mean of $ \bm{\iota}^{\T}\bm{X} $ is less than that of the normal distribution.
The phenomenon may mean that the samples are generated from a multimodal distribution.
Third, the most left box plot in (\subref{subfig:boxplot_6windows}) shows the quartiles and the extreme values of the $ 5 $-year sample.
The median (i.e., the solid red line) is $ 0.0366 $, and there are almost half of the sample points in the blue box,
which is between the two quartiles $ [-0.0930, 0.1637] $.
The number $ 0.0366 $ corresponds to IC in an active investment strategy.
The extreme values all lie on the left tail, as the red +'s lie on the bottom of the box.

Furthermore, we interpret the descriptive statistics of each $ 1 $-year sample $ \left \{ \widehat{\bm{\iota}}^{\T}\bm{x}_{t} \right \} $.
First, $ 2015 $ is a particular year.
$ 2015 $ has the highest sample mean and standard deviation, 
while the smallest sample skewness and kurtosis in Table~\ref{tab:descriptive_statistics_of_muTx}.
Also, the box plot of $ 2015 $ in Figure~\ref{subfig:boxplot_6windows} is the widest.
We suggest that one of the reasons is that in $ 2015 $ the notorious stock market crash broke out.
Second, $ 2014 $, $ 2016 $, and $ 2018 $ are quite similar. Their sample means are all about $ 0.06 $, and their standard deviations are around $ 0.15 $. 
They are positively skewed: in $ 2014 $, the skewness is about $ 0.0658 $; in $ 2016 $, $ 0.2436 $; and in $ 2018 $, $ 0.0306 $.
Their box plots also have similar shapes, no matter from the viewpoint of the quartiles, or of the range of each sample.
The extreme values in $ 2014 $ and $ 2016 $ are all lies on the right tail.
These similar years imply the possible stationarity of $ \bm{\iota}^{\T}\bm{X} $ in most time.
Third, we interpret the sample of $ 2017 $ year.
Its sample mean is $ 0.0804 $, one of the only two greater than $ 0.08 $ besides $ 2015 $.
Its skewness is $ -0.3272 $, which is also one of the only two that smaller than $ -0.30 $.
However, its standard deviation $ 0.1809 $ is lower than the special $ 2015 $ and very close to that in the $ 5 $-year sample.
The $ 2017 $ sample has the largest sample kurtosis $ 3.0134 $, almost identical to that of a normal distribution.
The $ 2017 $ box plot is most similar to the $ 5 $-year box plot --- their quartiles and extreme values are almost the same.

Last but least, 
the sample $ \mathrm{MRL} $ 
of high-dimensional $ \left \{ \bm{x}_{t} \right \} $
and the standard deviation of projected $1$-dim $ \left\{ \widehat{\bm{\iota}}^{\T}\bm{x}_{t} \right\} $ 
interpret $\bm{X}$'s concentration from different viewpoints.
The sample mean in Table~\ref{tab:descriptive_statistics_of_muTx} is equal to the sample $ \mathrm{MRL} $ in Table~\ref{tab:sample_MRL}, which is not a coincidence but a corollary of their definition.
The $ \mathrm{MRL} $
of $ \bm{X} $ plays the opposing role of variance, that the higher the $ \mathrm{MRL} $, the higher the concentration.
So the higher the sample mean in Table~\ref{tab:descriptive_statistics_of_muTx}, the more concentrated $ \bm{X} $ is.
However, $ \bm{\iota}^{\T}\bm{X} $'s sample standard deviation varies consistent with the sample mean in Table~\ref{tab:descriptive_statistics_of_muTx},
so the higher the sample mean, the more volatile the $ \bm{\iota}^{\T}\bm{X} $.

\subsubsection{The Statistics Analysis from \texorpdfstring{$ \bm{z}_{t} $}{$ z_{t} $} to \texorpdfstring{$ \bm{x}_{t} $}{$ x_{t} $}}
\label{subsec:statistics_analysis_z_to_x}

First,
we compare the similarity and difference between the representative components of $ x_{it} $ and their corresponding $ z_{it} $ with basic descriptive statistics. Table~\ref{tab:descriptive_statistics_of_z_y} is the descriptive statistics about two representative stocks:  One is 000001.SZ, Ping An Bank Co., Ltd.; The other is 600018.SH, Shanghai International Port (Group) Co., Ltd.
For simplicity, let $ z_{1t} $ an $ x_{1t} $ represent the returns and the standardized value of 000001.SZ, and $ z_{2t}, x_{2t} $ 600018.SH.

\begin{table}[htpb]
	\centering
	\begin{threeparttable}
		\caption{Descriptive statistics of the representative components.}
		\label{tab:descriptive_statistics_of_z_y}
		\begin{tabularx}{0.8\textwidth}{c||Z|Z||Z|Z}
			\hline\hline
			\multirow{2}{*}{statistics} &                \multicolumn{2}{c||}{000001.SZ}                 &                \multicolumn{2}{c}{600018.SH}                 \\ \cline{2-5}
			& \multicolumn{1}{c|}{$z_{1t}$} & \multicolumn{1}{c||}{$x_{1t}$} & \multicolumn{1}{c|}{$z_{2t}$} & \multicolumn{1}{c}{$x_{2t}$} \\ \hline\hline
			mean               & 0.0005                        & -0.0022                        & 0.0004                        & -0.0026                      \\ \hline
			standard deviation                 & 0.0205                        & 0.0508                         & 0.0235                        & 0.0649                       \\ \hline
			coefficient of variance                 & 41.3871                         & 22.9746			                          & 65.0917                         & 25.3970                        \\ \hline
			minimum$ ^{\text{(a)}} $              & -0.1002                       & -0.1810                        & -0.0999                       & -0.2509                      \\ \hline
			maximum$ ^{\text{(a)}} $                & 0.1000                        & 0.3274                         & 0.1007                        & 0.4661                       \\ \hline\hline
		\end{tabularx}
		\begin{tablenotes}{
				\footnotesize
				\item $ ^{\text{(a)}} $ The minimum and maximum of $ z_{1t} $ and $ z_{2t} $ are resulted from the price limit system, $ \pm 10\% $, in Chinese stock market, leading to the inefficacy of the information expression of daily simple returns.
			}
		\end{tablenotes}
	\end{threeparttable}
\end{table}%

We discuss Table~\ref{tab:descriptive_statistics_of_z_y} in two points.
First, the descriptive statistics of $ z_{i \cdot} $ and $ x_{i \cdot} $ are significantly different,
where the standardization increases the absolute value of sample means individually.
Also, the standard deviation of $ {x}_{i \cdot} $ are larger than that of $ {z}_{i \cdot} $ significantly.
The sample coefficient of variance decreases through the standardization.
Second, we interpret some facts.
The range of $ x_{i \cdot} $, worth emphasizing here, is broader than that of $ z_{i \cdot} $ restricted by the $ 10\% $ price limit.
 Thus, the standardization reveals the hidden dynamic information. 

Second, the standardization almost eliminates the cross-sectional correlation between each stock.
In detail, we plot two heat maps of the sample cross-sectional correlation coefficient matrix between each stock in Figure~\ref{fig:corr_xy}:
The left one (\subref{subfig:cov_z}) is of $ \bm{z}_{t} $, and the right (\subref{subfig:cov_x}) of $ \bm{x}_{t} = \bm{\chi}(\bm{z}_{t}) $.
\begin{figure}[htpb]
	\centering
	\begin{subfigure}[t]{0.48\textwidth}
		\centering
		\includegraphics[width=\textwidth]{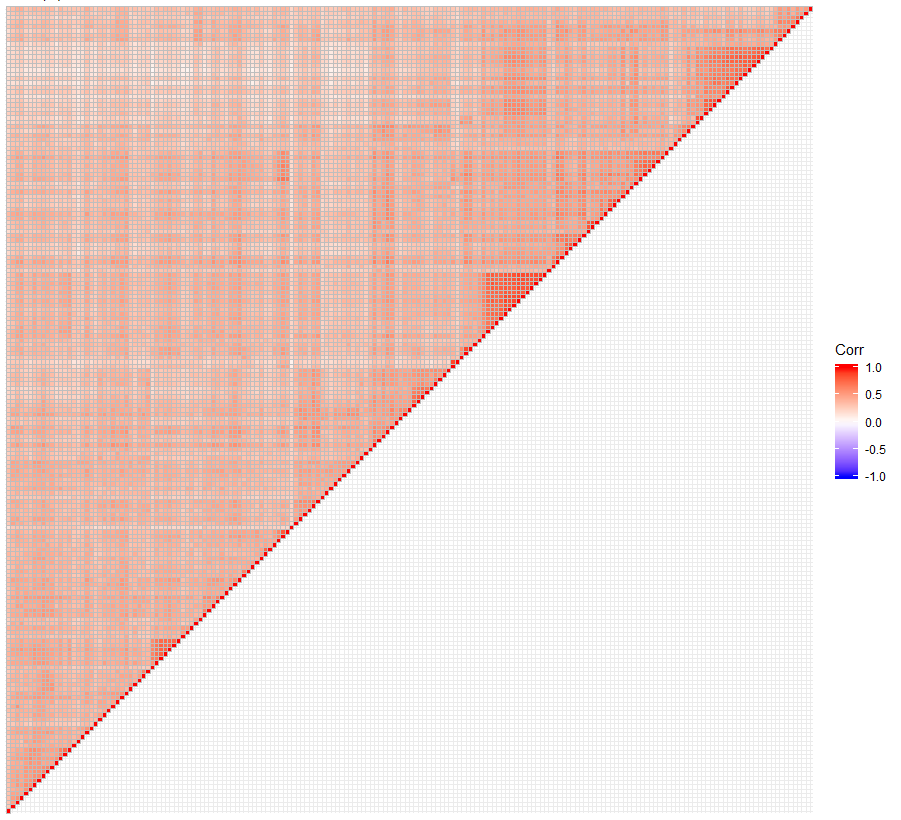}
		\caption{sample covariance matrix of $ \bm{z}_{t} $.}
		\label{subfig:cov_z}
	\end{subfigure}
	\begin{subfigure}[t]{0.48\textwidth}
		\centering
		\includegraphics[width=\textwidth]{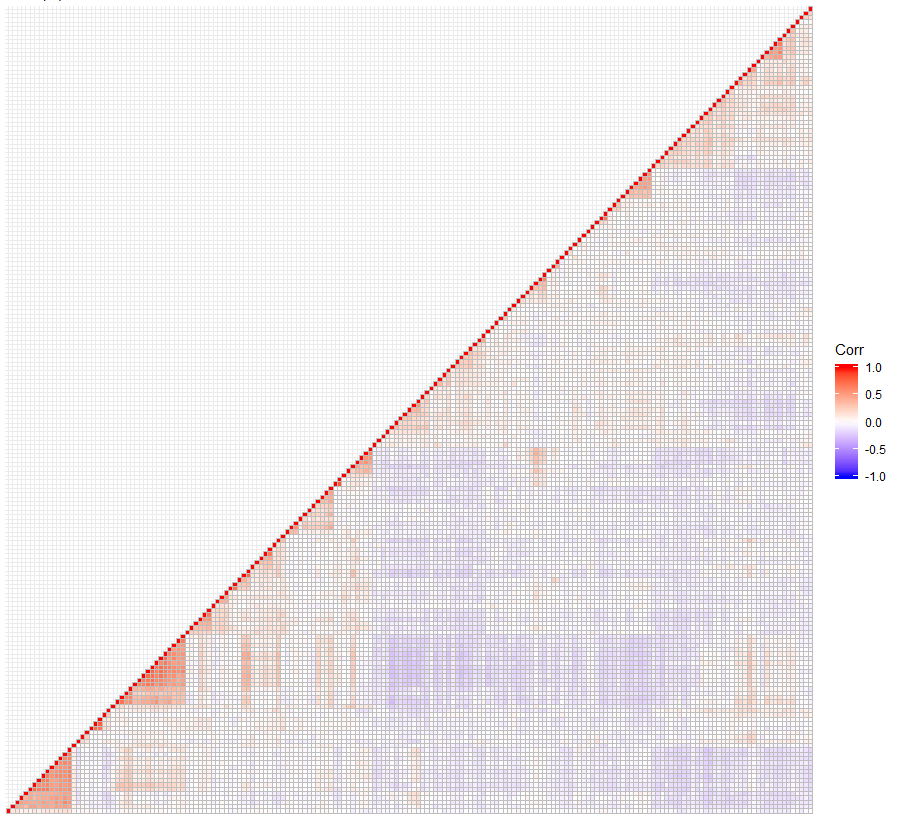}
		\caption{sample covariance matrix of $ \bm{x}_{t} $.}
		\label{subfig:cov_x}
	\end{subfigure}
	\caption[The sample cross-sectional correlation coefficient matrix of $ \bm{z}_{t} $ and $ \bm{x}_{t} $.]{
		The heat maps of the cross-sectional correlation coefficient matrix of $ \bm{z}_{t} $ and $ \bm{x}_{t} $: two 185-by-185 matrix.
		For the purpose of a clear demonstration, these two correlation matrix are processed after permutation along axes, resulting in that the $ (i, j) $-th element in (\subref{subfig:cov_z}) is not necessary that in (\subref{subfig:cov_x}).
		But the permutation does not affect the results.
		What's more, due to the symmetry of correlation matrix, (\subref{subfig:cov_z}) just shows the upper triangular one, while (\subref{subfig:cov_x}) lower.
	}
	\label{fig:corr_xy}
\end{figure}

We interpret Figure~\ref{fig:corr_xy}.
It is difficult to ignore the difference that $ \mathrm{corr}(\bm{z}) $ is greater than $ \mathrm{corr}(\bm{x}) $. In fact, by simple calculation, the mean of	 sample correlation coefficients of $ \bm{z}_{t} $ is $ \mathrm{mean}(\mathrm{corr}(\bm{z})) \approx 0.3855 $, while that of $ \bm{x}_{t} $ is $ \mathrm{mean}(\mathrm{corr}(\bm{x})) \approx -0.0034 $. In addition, the standard deviation of the sample correlation coefficients of $ \bm{z} $, $ \mathrm{std}(\mathrm{corr}(\bm{z})) \approx 0.1043 $, is close to that of $ \bm{x}_{t} $, $ \mathrm{std}(\mathrm{corr}(\bm{x})) \approx 0.1109 $. As a consequence, the standardization from $ \bm{z}_{t} $ to $ \bm{x}_{t} $ eliminates the cross-sectional correlation significantly.\footnote{
	$ \mathrm{mean}(\mathrm{corr}(\bm{z})) := \frac{1}{n(n-1)/2}\sum_{1 \leq i < j \leq n} \mathrm{corr}_{ij}(\bm{z}) $, 
	and $ \mathrm{std}(\mathrm{corr}(\bm{z})) := \frac{1}{\frac{n(n-1)}{2} - 1} \sum_{1 \leq i < j \leq n} \left(\big.\mathrm{corr}_{ij}(\bm{z}) - \mathrm{mean}(\mathrm{corr}(\bm{z}))\right)^{2} $.
}
In finance, it can be interpreted as that the standardization $ \bm{\chi} $ from $ \bm{z}_{t} $ to $ \bm{x}_{t} $ eliminates the \textit{beta} part, to a certain degree.

\subsection{The Time Series of $z_{it}$, $x_{it}$, and \texorpdfstring{$ \mathrm{MRL}\left (\big. \bm{\chi}(\bm{Z}) \right ) $}{$ MRL(X(Z)) $}}
\label{subsec:time_series}

In this subsection, 
we discuss the time-series properties.
Because $ \bm{x}_{t} $ is a $ 185 $-dim vector, 
then its time series could not be directly shown.
We first show the time series of two selected representative components
and interpret their characteristics in finance.
Then the time series of $ \mathrm{MRL}\left (\big. \bm{\chi}(\bm{Z}) \right ) $ are given with discussion.

\subsubsection{The Time Series of $ z_{it} $ and $ x_{it} $}

Corresponding to the above \S~\ref{subsec:statistics_analysis_z_to_x}, 
we take the time series analysis of the two representatives: $ {z}_{it} $ and $ {x}_{it} $, $ i = 1, 2 $, 
in Figure~\ref{fig:xy_time_series_histogram}.

\begin{figure}[htpb]
	\centering
	\begin{subfigure}[t]{0.48\textwidth}
		\centering
		\includegraphics[width=\textwidth]{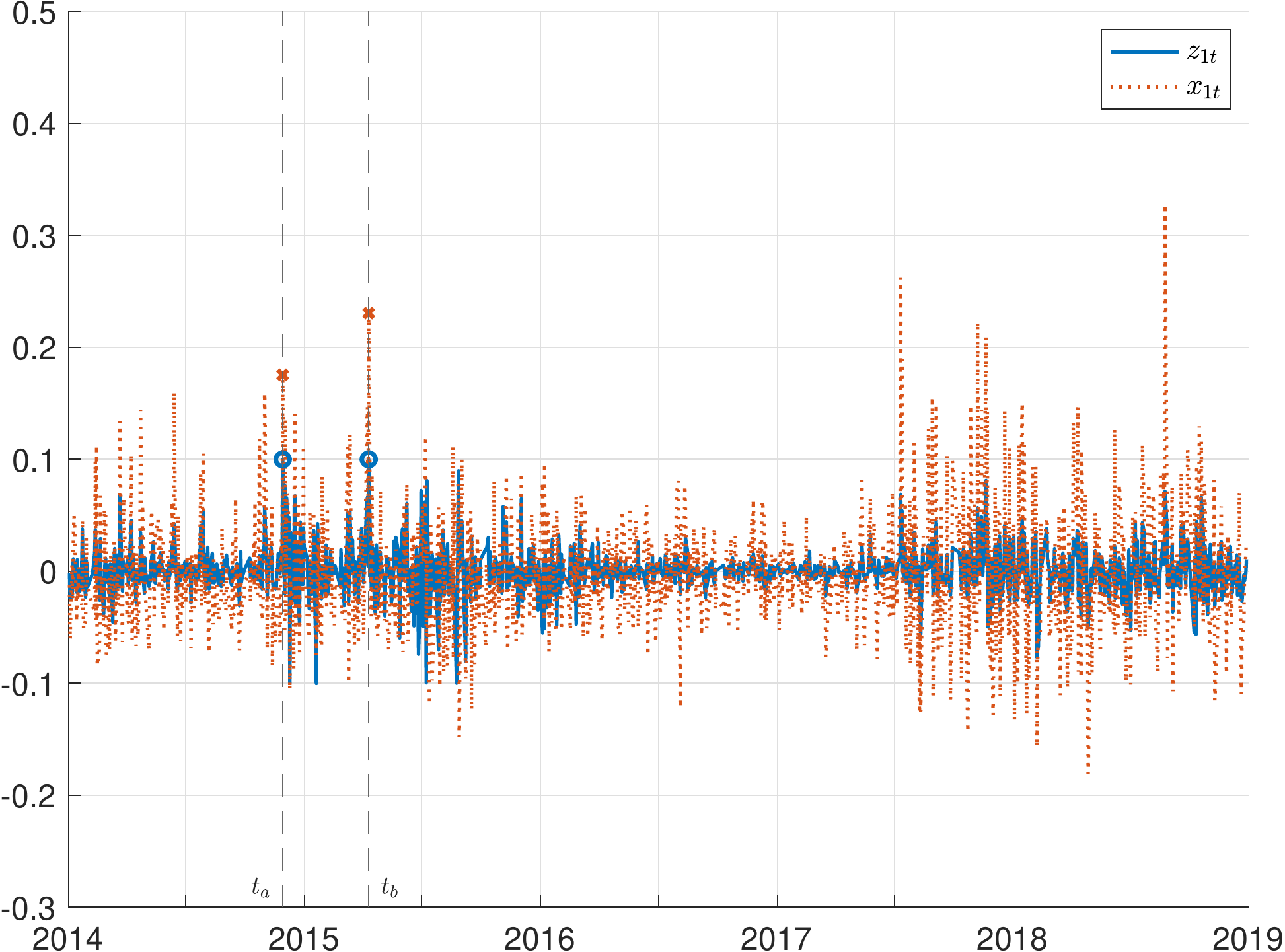}
		\caption{The time series of $ z_{1t} $ and $ x_{1t} $.}
		\label{subfig:xy_time_series_1}
	\end{subfigure}
	\begin{subfigure}[t]{0.48\textwidth}
		\centering
		\includegraphics[width=\textwidth]{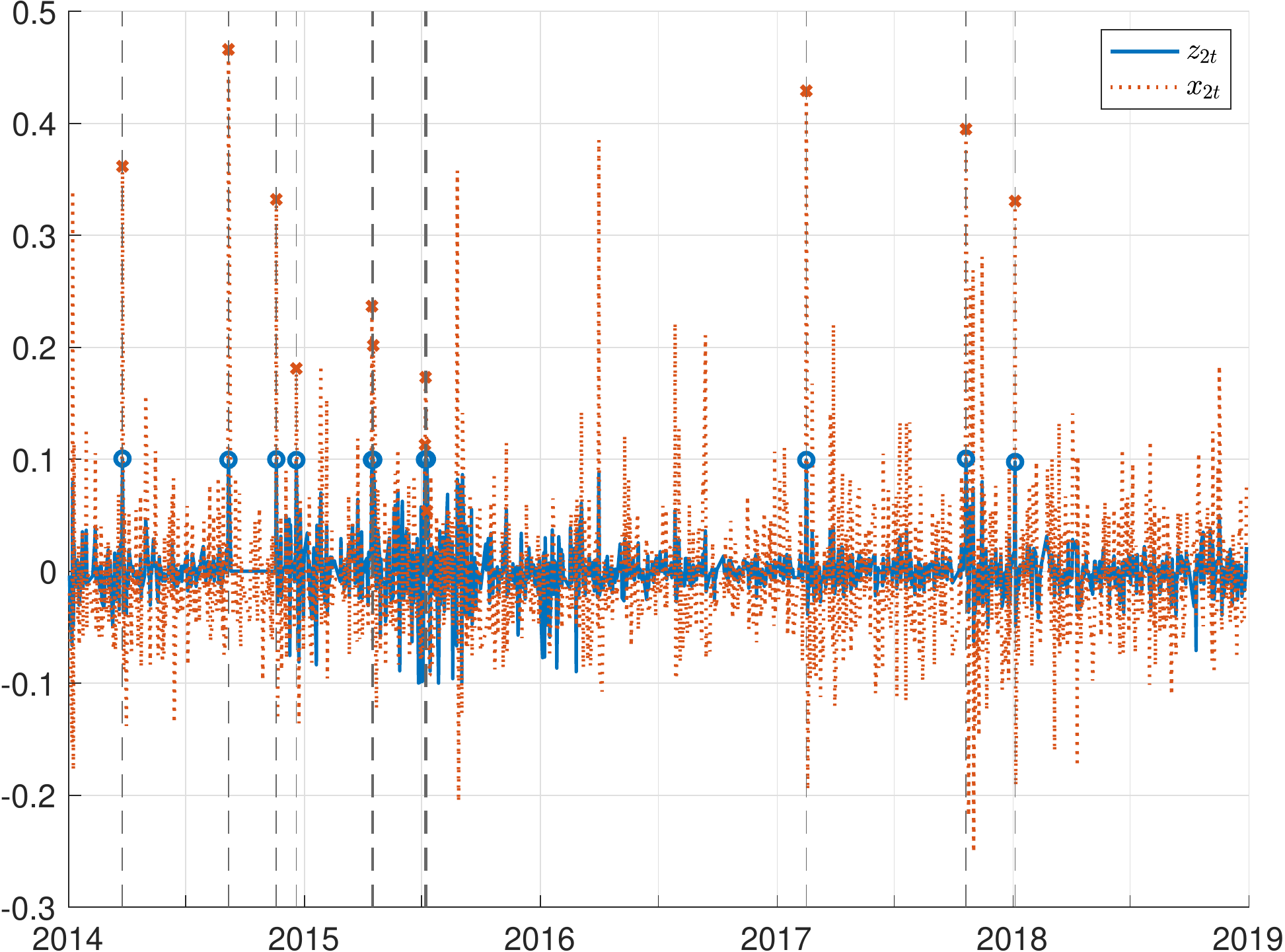}
		\caption{The time series of $ z_{2t} $ and $ x_{2t} $.}
		\label{subfig:xy_time_series_2}
	\end{subfigure}
	\caption[The time series of $ z_{it}, x_{it}, i = 1, 2 $.]{
		The time series of $ z_{it}, x_{it}, i = 1, 2 $.
		In detail, the solid blue lines represent $ z_{it} $ while the dotted orange ones $ x_{it} $. 
		The blue circles indicate the daily up limit of $ z_{it} $, while the orange crosses are the corresponding in  $ x_{it} $.
		The vertical black dashed lines are on the $ t $'s when $ x_{it} $ reaches the daily limits.
	}
	\label{fig:xy_time_series_histogram}
\end{figure}

Three points about the properties of the time series in Figure~\ref{fig:xy_time_series_histogram} are clarified here.
First, fundamental time series analysis shows that the classical hypothesis tests of autoregression and moving average are not significant, and the BIC chooses ARMA($ 0, 0 $).
Second, the time series of $ x_{it} $ has more information than that of $ z_{it} $. 
For instance, the solid blue lines of $ z_{it} $ are strictly bounded in $ [-0.1, 0.1] $, while the dotted orange lines of $ x_{it} $ are not.
Third, it is easy to see similar volatility clustering.
The standardization does not change the property of volatility clustering.

We interpret Figure~\ref{fig:xy_time_series_histogram} in finance.
First, $ z_{it} $'s and $ x_{it} $'s reveal different aspects of security returns.
In detail,
$ z_{it} $'s are absolute, while $ x_{it} $'s are relative.
$ z_{it} $'s faithfully represent the returns of security, but it requires additional information (such as the market index) to attribute the returns.
In contrast, one can not infer the exact returns from $ x_{it} $'s, but she can speculate its relative rank in the whole market.
Second, $ x_{it} $'s can break through some limitation implied $ z_{it} $'s in the financial market: such as the price limit.
The price limit sometimes suppresses the market liquidity, so that $ z_{it} $'s cannot adequately express market information.
However, $ x_{it} $'s can give different explanations for the same $ z_{it} $'s that reach the daily limit and reveal more market information to a certain degree.
For instance, in Figure~\ref{subfig:xy_time_series_1}, on 
$ t_{a} = $ 2014-11-28 and $ t_{b} = $ 2015-04-10, 
$ z_{1t_{a}} $ and $ z_{1t_{b}} $ reach the up limit resulting in $ z_{1t_{a}} = z_{1t_{b}} $, 
marked as dark circles. 
However, $ x_{1t_{a}} < x_{1t_{b}} $, as the orange crosses. 
Then one can conclude that the stock ranked higher on $ t_{b} $ than on $ t_{a} $. The number of stocks that reach the up limits on $ t_{a} $ is about twice that on $ t_{b} $, so on $ t_{b} $ the stock is more superior.

\subsubsection{The Time Series of \texorpdfstring{$ \mathrm{MRL}\left (\big. \bm{\chi}(\bm{Z}) \right ) $}{$ {MRL} ( {\chi}({Z})  ) $}}

The rolling calculation $ \left\Vert \overline{\bm{x}}(T) \right\Vert_{t} := \left\Vert \frac{1}{T}\sum_{s=t-T+1}^{t} \bm{x}_{s} \right\Vert $ estimates $ \mathrm{MRL}\left (\big. \bm{\chi}(\bm{Z}) \right ) $ at $ t $,
where $ T $ is the length of the rolling period and $ t $ is the time.
According to Proposition~\ref{prop:ET},
$ \mathrm{MRL}\left (\big. \bm{\chi}(\bm{Z}) \right ) = \mathbb{E}  T_{\bm{\chi}(\bm{Z})}\left (\Big.\mathrm{MD}\left (\big. \bm{\chi}(\bm{Z}) \right )\right ) $,
so we can estimate the maximum of $ \mathrm{IC} $'s expectation by the $ \mathrm{MRL} $.
To compare the time series properties, 
we also calculate the CSSD of the rolled stock returns.\footnote{$ \mathrm{CSSD}_{t} := \frac{1}{\sqrt{n}} \left\Vert P\widetilde{\bm{z}}_{t} \right\Vert = \sqrt{\frac{1}{n} \sum_{i=1}^{n} \left(\widetilde{z}_{it} - \overline{\widetilde{z}}_{t} \right)^{2}} $, where $ \widetilde{\bm{z}}_{t} := \frac{1}{T}\sum_{s=t-T+1}^{t} \bm{z}_{s} $ and $ \overline{\widetilde{z}}_{t} := \frac{1}{n} \sum_{i=1}^{n} \widetilde{z}_{it} $.}
Taking $ T = 20 $, the time series of 
$ \mathrm{MRL}\left (\big. \bm{\chi}(\bm{Z}) \right ) $ and CSSD are in Figure~\ref{fig:rolling20}.

\begin{figure}[htpb]
	\centering
	\includegraphics[width=0.5\textwidth]{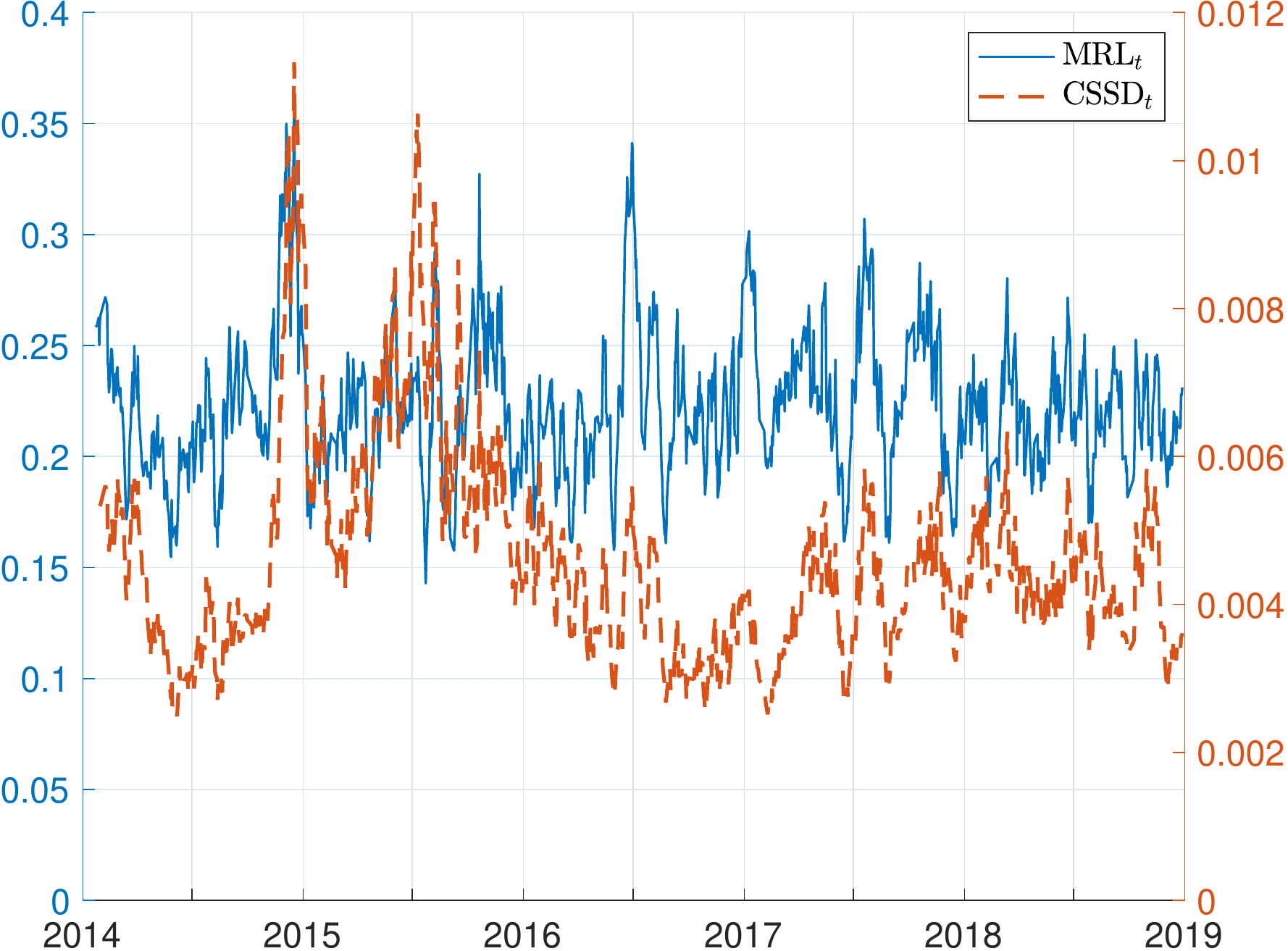}
	\caption[The time series of rolling $ \mathrm{MRL} $ and CSSD.]{
		The time series of rolling $ \mathrm{MRL} $ and CSSD.
		The solid blue line is the estimated $ \mathrm{MRL}_{t} $ with y-axis ticks on the left, while the dashed orange is the CSSD$ _{t} $ on the right.
		CSSD$ _{t} $ is also estimated from the mean returns of rolling period $ T = 20 $.
	}
	\label{fig:rolling20}
\end{figure}

We interpret Figure~\ref{fig:rolling20}.
First, we provide some descriptive statistics: For $ \mathrm{MRL}\left (\big. \bm{\chi}(\bm{Z}) \right ) $, the sample mean is $ 0.2224 $, and the standard deviation is $ 0.0323 $.
For CSSD, the sample mean is $ 0.0046 $, and the standard deviation is $ 0.0014 $. 
Furthermore, it is difficult to ignore the mean-reverting property of them.
Last, there is some co-movement between $ \mathrm{MRL}\left (\big. \bm{\chi}(\bm{Z}) \right ) $ and CSSD. The correlation coefficient of these two time series is about $ 0.4574 $. Moreover, the most correlated periods are about January 2015, when the Chinese stock market suffered the notorious meltdown crash.

\section{Conclusion}\label{sec:conclusion}

The paper focuses on some probabilistic properties of high-dimensional directional statistics and their applications in financial investment.

$ T_{\bm{\chi}(\bm{Z})}(\bm{\theta}) $ is a linear combination of the components of $ \bm{\chi}(\bm{Z}) $.
In definition, $ \bm{\chi}(\bm{Z}) $ is the standardized random vector generated by linear and directional projections, which represents the standardization of next-period cross-sectional returns.
Theoretically, we first give a representation theorem on $ \bm{X}(\bm{Z}) $ and its $ \mathrm{MD} $ and $ \mathrm{MRL} $.
Then, the covariance matrix of $ \bm{\chi}(\bm{Z}) $ is expressed and simplified by a projected particular random vector.
We derive the closed-form expression of the matrix by expanding the methodology in \citet{presnell2008MRL_PN}.
Last, we prove that the solutions to the maximization of $ \mathbb{E} T_{\bm{\chi}(\bm{Z})}(\bm{\theta}) $ could be expressed as the $ \mathrm{MD} $ and $ \mathrm{MRL} $ in directional statistics.
The solutions to the minimization of $ \mathrm{var} \left( T_{\bm{\chi}(\bm{Z})}(\bm{\theta}) \right) $ are the eigenvectors corresponding to the second smallest eigenvalue of $ \bm{\chi}(\bm{Z}) $.
According to the theoretical results, $ \mathrm{IC} $ is influenced by the dispersion and correlation of cross-sectional returns positively, while volatility negatively.
These conclusions are consistent with those in the finance literature.

Our simulation analysis supplements the theoretical results. 
We intuitively simulate $ \mathrm{MD}\left(\big.\bm{\chi}(\bm{Z})\right) $ by different parameters $ \bm{\mu} $ and $ \Sigma $.
We also illustrate $ T_{\bm{\chi}(\bm{Z})}(\bm{\theta}) $'s distributions, 
showing the impact of stock number $ n $ and heteroscedasticity: The more stocks, the less dispersive the $ T_{\bm{\chi}(\bm{Z})}(\bm{\theta}) $; In some heteroscedastic condition, the distribution of $ T_{\bm{\chi}(\bm{Z})}(\bm{\theta}) $ may be multimodal.
Then, we compare the theoretical and simulation results of the $ \mathrm{MRL}\left (\bm{\chi}(\bm{Z})\right ) $.

The empirical studies reveal that the standardization of $ 185 $-dim returns behaves significantly different statistical characteristics from the original data and excavates more information about the whole market. 
For one thing,
the sample $ \mathrm{MD} $'s of $ \bm{x}_{t} $ vary significantly during different time windows, whose components' descriptive statistics could hardly be categorized, and the smallest angle between them is $ 43^{\circ} $.
Notably, the components of the top- and bottom-ranked of the sample $ \mathrm{MD} $ also behave differently.
The sample $ \mathrm{MRL} $ and scatter matrix $ \overline{\bm{T}} $ of $ \bm{x}_{t} $ perform similarly during different time windows.
The $ \widehat{\bm{\iota}}^{\T}\bm{x}_{t} $ reveals not only the above difference but also the possible multimodality of the distribution of the sample. 
It provides a different view of the concentration of the sample rather than the $ \mathrm{MRL} $.
For another, $ x_{it} $ keeps most information of $ z_{it} $: $ \bm{x}_{t} $ and $ \bm{z}_{t} $ are highly correlated, and $ {x}_{it} $ maintains the similar volatility clustering property of $ {z}_{it} $.
The sample covariance matrix of $ \bm{z}_{t} $ and $ \bm{x}_{t} $ is significantly different, implying that the standardization eliminates the correlation.
Meanwhile, the time series of $ x_{it} $ is bounded, and mean-reverting, different from that of $ z_{it} $ and revealing more information. 
The time series about $ \mathrm{MRL}\left (\bm{\chi}(\bm{Z})\right ) $ also interpret the market condition.
These observations in the Chinese stock market were rarely found and revealed.

This paper is the first to apply high-dimensional directional statistics to financial investment strategy. 
We believe that this is a very potential research area. 
The cross-sectional standardization keeps the cross-rank while eliminating noise, having a higher signal-to-noise ratio for better prediction. 
High-dimensional directional statistics has become a significantly theoretical topic because of its complicated support set and few available tools.

\section*{Acknowledgments}
The authors report no conflicts of interest.
The authors alone are responsible for the content and writing of the paper.

\appendix

\section{Simulation Parameters}\label{apx:sim_para}

The parameters of the $ 50 $-dim $ \bm{\mu}_{50} $ and $ \Sigma_{50} $ are omitted due to the length. 
But it is worth mentioning, that the first $ 10 $ components of the $ \bm{\mu}_{50} $ is just $ \bm{\mu}_{10} $.
$ \Sigma_{10} $ is the $ 10 $th order leading principal submatrix of $ \Sigma_{50} $.
The parameters of the $ 10 $-dim $ \bm{\mu}_{10} $ and $ {\Sigma}_{10} $ in Section~\ref{sec:simulation} are as follows:
\begin{align*}
\bm{\mu}_{10}
=
10^{-4}\times
\begin{pmatrix}
4.60 \\7.83 \\14.78 \\-16.32 \\4.50 \\10.26 \\-3.22 \\0.39 \\-3.99 \\-4.41
\end{pmatrix}
,
\Sigma_{10}
=
10^{-4}\times
\begin{pmatrix}
3.67  & 2.26  & 0.98  & 0.75  & 1.54  & 0.72  & 0.16  & 1.48  & -0.05 & -0.08 \\
& 6.60  & 0.96  & 1.31  & 1.57  & 1.01  & 0.11  & 1.16  & -0.32 & -0.04 \\
&       & 5.72  & 1.29  & 0.97  & 1.60  & -0.20 & 0.41  & -0.38 & 0.19 \\
&       &       & 4.74  & 1.69  & 0.96  & 0.06  & 0.35  & 0.41  & 0.19 \\
&       &       &       & 5.11  & 0.97  & -0.04 & 0.72  & -0.11 & -0.06 \\
&       &       &       &       & 11.86 & 0.20  & 0.01  & 0.81  & 0.59 \\
&       &       &       &       &       & 1.47  & 0.07  & 0.50  & 0.23 \\
&       &       &       &       &       &       & 1.32  & -0.12 & -0.01 \\
&       &       &       &       &       &       &       & 6.10  & 0.64 \\
&       &       &       &       &       &       &       &       & 3.77 \\
\end{pmatrix}
.
\end{align*}
Particularly,
$ \Sigma'_{10} = \Lambda \Sigma_{10} \Lambda, \Lambda = \mathrm{diag}(1, 1, 1, 1, 1, 3, 1, 1, 1, 1) $.
It is a heteroscedastic case where a stock has high volatility.
The $ 3 $-dim $ \bm{\mu}_{3} $ is the first three components of $ \bm{\mu}_{10} $,
and $ \Sigma_{3} $ are the $ 3 $rd order leading principal submatrix of $ \Sigma_{10} $.

\newpage
\section{Proofs as Online Supplement}\label{apx:proof}

We list all necessary proofs.

\subsection{The Proof of Theorem~\ref{thm:chi_Z_UZ}}
\label{proof:chi_Z_UZ}

\begin{proof}[The proof of Theorem~\ref{thm:chi_Z_UZ}.]
We prove the theorem by three steps.

First,
one can show that
$ U, V,$ and $ W $ are all orthogonal matrix.
By the definition, we have
$ 
V^{\T}PV 
= 
\begin{psmallmatrix}
I_{n-1} & \\
& 0
\end{psmallmatrix}
$ 
and
$
U^{\T}PU
= 
\begin{psmallmatrix}
I_{n-1} & \\
& 0
\end{psmallmatrix},
$.

Second, we prove \eqref{equ:chi_Z_Uxi}.
\begin{align*}
\bm{\chi}(\bm{Z})
=&\ 
\frac{P\bm{Z}}{\|P\bm{Z}\|}
=
U\frac{U^{\T}P\bm{Z}}{\|U^{\T}P\bm{Z}\|}
.
\end{align*}
Note that $ 
U^{\T} P
= 
W^{\T}V^{\T} P
= 
W^{\T}
\begin{psmallmatrix}
I_{n-1} & \\
& 0
\end{psmallmatrix} 
V^{\T} 
=
\begin{psmallmatrix}
I_{n-1} & \\
& 0
\end{psmallmatrix} 
W^{\T}
V^{\T} 
=
\begin{psmallmatrix}
I_{n-1} & \\
& 0
\end{psmallmatrix} 
U^{\T}
$,
so
\begin{align*}
\bm{\chi}(\bm{Z})
=&\
U
\frac{\begin{pmatrix}
	I_{n-1} & \\
	& 0
	\end{pmatrix} 
	U^{\T} \bm{Z}}{\left \|\begin{pmatrix}
	I_{n-1} & \\
	& 0
	\end{pmatrix} 
	U^{\T} \bm{Z}\right \|}
=
U
\frac{\bm{\xi}}{\|\bm{\xi}\|}
.
\end{align*}
Because the last component of $ \bm{\xi} $ is $ 0 $, we have
$ \frac{\bm{\xi}}{\|\bm{\xi}\|} = \begin{psmallmatrix}
\frac{\bm{\xi}_{n-1}}{\|\bm{\xi}_{n-1}\|} \\
0
\end{psmallmatrix} $.

Third, we just need to show the diagonal of the covariance matrix of $ \bm{\xi}_{n-1} $.
\begin{align*}
\mathrm{cov}\left (
\bm{\xi}_{n-1}
\right )
=&\
\mathrm{cov}\left (
\begin{psmallmatrix}
I_{n-1} & \bm{0}_{n-1}
\end{psmallmatrix}
U^{\T}\bm{Z} 
\right )
\\
=&\
\begin{psmallmatrix}
I_{n-1} & \bm{0}_{n-1}
\end{psmallmatrix}
U^{\T}
\Sigma
U
\begin{psmallmatrix}
I_{n-1} & \bm{0}_{n-1}
\end{psmallmatrix}
\\
=&\
\begin{psmallmatrix}
I_{n-1} & \bm{0}_{n-1}
\end{psmallmatrix}
W^{\T}
V^{\T}
\Sigma
V
W
\begin{psmallmatrix}
I_{n-1} & \bm{0}_{n-1}
\end{psmallmatrix}
\\
=&\
\begin{psmallmatrix}
W_{n-1} & \bm{0}_{n-1}
\end{psmallmatrix}
V^{\T}
\Sigma
V
\begin{psmallmatrix}
W_{n-1} & \bm{0}_{n-1}
\end{psmallmatrix}
.
\end{align*}
By the definition of $ W_{n-1} $, we prove that $ \mathrm{cov}\left (
\bm{\xi}_{n-1}
\right ) $ is diagonal.
\end{proof}

\subsection{The Proof of Theorem~\ref{thm:MD_MRL_ndim}}
\label{proof:MD_MRL_ndim}

\begin{proof}[The proof of Theorem~\ref{thm:MD_MRL_ndim}]
	We first transform $ \bm{\chi}(\bm{Z}) $ into $ \bm{\xi} $ and show the distribution of $ \bm{\xi} $ by the construction of $ \Xi $.
	Then use Proposition~\ref{prop:presnell} to complete the proof.
	
	To begin with, by Theorem~\ref{thm:chi_Z_UZ} and Corollary~\ref{cor:numerical_characteristics_xi}, we take $ W_{n-1}=I_{n-1} $, so $ U = V $ in \eqref{equ:V_short}.
	Then \eqref{equ:MD_chi_Z} and \eqref{equ:MRL_chi_Z} show that
	\begin{align*}
	\mathrm{MD}\left(\big.\bm{\chi}(\bm{Z})\right)
	=&\ 
	U
	\begin{pmatrix}
	\mathrm{MD}
	\left(
	{
		\bm{\xi}_{n-1}
	}/{
		\left\|
		\bm{\xi}_{n-1}
		\right\|
	}
	\right)
	\\
	0
	\end{pmatrix}
	,
	\\
	\mathrm{MRL}\left(\big.\bm{\chi}(\bm{Z})\right)
	=&\ 
	\mathrm{MRL}
	\left(
	{
		\bm{\xi}_{n-1}
	}/{
		\left\|
		\bm{\xi}_{n-1}
		\right\|
	}
	\right)
	,
	\end{align*}
	where 
	\begin{align*} 
	\bm{\xi}_{n-1}
	=
	\begin{pmatrix}
	I_{n-1} & \bm{0}_{n-1}
	\end{pmatrix}
	U^{\T}\bm{Z} 
	\sim
	N
	\left(
	\begin{pmatrix}
	I_{n-1} & \bm{0}_{n-1}
	\end{pmatrix}
	U^{\T}\bm{\mu}
	,
	\sigma^{2}
	\begin{pmatrix}
	I_{n-1} & \bm{0}_{n-1}
	\end{pmatrix}
	U^{\T}
	\Xi
	U
	\begin{pmatrix}
	I_{n-1} & \bm{0}_{n-1}
	\end{pmatrix}^{\T}
	\right)
	.
	\end{align*}
	Note that
	$ 
	U^{\T}\Xi U = \begin{pmatrix}
	(1-\rho)I_{n-1} & \\
	& 1 + (n-1)\rho
	\end{pmatrix} 
	$,
	so
	\begin{align*}
	\bm{\xi}_{n-1} \sim N
	\left(
	\begin{pmatrix}
	I_{n-1} & \bm{0}_{n-1}
	\end{pmatrix}
	U^{\T}\bm{\mu}
	,
	\sigma^{2}
	(1-\rho)
	I_{n-1}
	\right)
	.
	\end{align*}
	
	Last, Proposition~\ref{prop:presnell} shows that
	\begin{align*}
	\mathrm{MD}\left({\bm{\xi}_{n-1}}/{\Vert \bm{\xi}_{n-1} \Vert}\right) 
	:=&\
	\frac{\begin{pmatrix}
		I_{n-1} & \bm{0}_{n-1}
		\end{pmatrix}
		U^{\T}\bm{\mu}}
	{\left \|
		\begin{pmatrix}
		I_{n-1} & \bm{0}_{n-1}
		\end{pmatrix}
		U^{\T}\bm{\mu}
		\right \|},
	\\
	\mathrm{MRL}\left({\bm{\xi}_{n-1}}/{\Vert \bm{\xi}_{n-1} \Vert}\right) 
	:=&\
	\varrho_{n-1}\left(\frac{\Vert \begin{pmatrix}
		I_{n-1} & \bm{0}_{n-1}
		\end{pmatrix}
		U^{\T}\bm{\mu} \Vert}{\sigma \sqrt{1-\rho}}\right)
	.
	\end{align*}
	So
	\begin{align*}
	\mathrm{MD}\left(\big.\bm{\chi}(\bm{Z})\right)
	=&\ 
	U
	\frac{\begin{pmatrix}
		I_{n-1} & \\
		& \bm{0}
		\end{pmatrix}
		U^{\T}\bm{\mu}}
	{\left \|
		\begin{pmatrix}
		I_{n-1} & \\
		& \bm{0}
		\end{pmatrix}
		U^{\T}\bm{\mu}
		\right \|}
	=
	\frac{P\bm{\mu}}{\|P\bm{\mu}\|}
	,
	\\
	\mathrm{MRL}\left(\big.\bm{\chi}(\bm{Z})\right)
	=&\ 
	\varrho_{n-1}\left(\frac{\Vert P\bm{\mu} \Vert}{\sigma \sqrt{1-\rho}}\right)
	.
	\end{align*}
\end{proof}

\subsection{The Complete Proof of Theorem~\ref{thm:covariance_matrix}}
\label{proof:covariance_matrix}

\begin{proof}[The details of the proof of Theorem~\ref{thm:covariance_matrix}]

We first show $ \mathrm{cov}\left(\frac{\bm{\xi}}{\|\bm{\xi}\|}\right) $ is a diagonal matrix like \eqref{equ:cov_xi}, and then prove its closed-form expressions.

$ \mathrm{cov}\left(\frac{\bm{\xi}}{\|\bm{\xi}\|}\right) = \begin{psmallmatrix}
\mathrm{var}\left(\frac{\xi_{1}}{\|\bm{\xi}\|}\right)  & \\
& \mathrm{var}\left(\frac{\xi_{2}}{\|\bm{\xi}\|}\right)  I_{n-1}
\end{psmallmatrix} $.
The diagonal is deducted by the symmetry and independence.
Specifically,
for one thing, $ \mathrm{cov}\left(\frac{\bm{\xi}}{\|\bm{\xi}\|}\right) $ is a diagonal matrix.
It is equivalent to $ \forall i \neq j, \mathrm{cov}\left(\frac{\xi_{i}}{\|\bm{\xi}\|}, \frac{\xi_{j}}{\|\bm{\xi}\|}\right) = 0 $.
For another, the component of the diagonal components of $ \mathrm{cov}\left(\frac{\bm{\xi}}{\|\bm{\xi}\|}\right) $ have the form like \eqref{equ:cov_xi}.
It is because $ \mathrm{var}\left(\frac{\xi_{2}}{\|\bm{\xi}\|}\right) = \mathrm{var}\left(\frac{\xi_{i}}{\|\bm{\xi}\|}\right), i\geq 2 $,
due to the i.i.d. of $ \xi_{i}, i\geq 2 $.

Furthermore, 
we show that $ \mathrm{var}\left(\frac{\xi_{1}}{\|\bm{\xi}\|}\right) $ and $ \mathrm{var}\left(\frac{\xi_{2}}{\|\bm{\xi}\|}\right) $ are all based on $ \mathbb{E} \frac{\xi_{1}^{2}}{\|\bm{\xi}\|^{2}} $.
By the definition of variance, we have
\begin{align}
\mathrm{var}\left(\frac{\xi_{1}}{\|\bm{\xi}\|}\right)
=&\  
\mathbb{E} \frac{\xi_{1}^{2}}{\|\bm{\xi}\|^{2}} - \left(\mathbb{E} \frac{\xi_{1}}{\|\bm{\xi}\|}\right)^{2}
,
\label{equ:var_xi_1_normed_append}
\end{align}
in which $ \mathbb{E} \frac{\xi_{1}}{\|\bm{\xi}\|} = \varrho_{n}\left(\frac{\|\bm{\nu}\|}{\lambda}\right) $.
Note that $ \sum_{i=1}^{n} \frac{\xi_{i}^{2}}{\|\bm{\xi}\|^{2}} = 1 $, so $ \mathbb{E} \frac{\xi_{2}^{2}}{\|\bm{\xi}\|^{2}} = \frac{1}{n-1}\left ( 1 - \mathbb{E} \frac{\xi_{1}^{2}}{\|\bm{\xi}\|^{2}} \right ) $.
Supplemented with $ \mathbb{E} \frac{\xi_{2}}{\|\bm{\xi}\|} = 0$, we have
\begin{align}
\mathrm{var}\left(\frac{\xi_{2}}{\|\bm{\xi}\|}\right)
=&\
\frac{1}{n-1}
\left(
1 - \mathbb{E} \frac{\xi_{1}^{2}}{\|\bm{\xi}\|^{2}}
\right)
.
\label{equ:var_xi_2_normed_append}
\end{align}

Below,
the closed-form expression of $ \mathbb{E} \frac{\xi_{1}^{2}}{\|\bm{\xi}\|^{2}} = 1
-
\frac{n-1}{n}
M\left(1, \frac{n}{2} + 1,  -\frac{\|\bm{\nu}\|^{2}}{2\lambda^{2}}\right)
$ is solved in three steps.

First, we give the simplified integral expression of $ \mathbb{E} \frac{\xi_{1}^{2}}{\|\bm{\xi}\|^{2}} $.
Define the two independent random variables:
$
X := \xi_{1}^{2},
Y := \xi_{2}^{2}+\cdots+\xi_{n}^{2},
$
where $ X \sim 
\chi^{2}(1, \frac{\|\bm{\nu}\|^{2}}{\lambda^{2}}) $ has the non-central chi-squared distribution of the degrees of freedom $ 1 $ and $ Y \sim \chi^{2}(n-1, 0) $ has the central one of $ n-1 $.
Their p.d.f.'s are
\begin{align*}
f_{X}(x)
=&\ 
\frac{1}{\sqrt{2\pi x}} 
\mathrm{e}^{-\frac{x}{2}}
\mathrm{e}^{-\frac{\|\bm{\nu}\|^{2}}{2\lambda^{2}}}
\cosh\left (\frac{\|\bm{\nu}\|}{\lambda}\sqrt{x}\right )
\mathbbm{1}\{x>0\}
,
\\
f_Y(y)
=&\
\frac{1}{2^{\frac{n-1}{2}} \Gamma\left(\frac{n-1}{2}\right)}
y^{\frac{n-3}{2} } e^{-\frac{y}{2}}
\mathbbm{1}\{y>0\}
.
\end{align*}
By the integral transformation 
$ x = rt $ and $ y = r - rt $, we have
\begin{align}
\mathbb{E} \frac{\xi_{1}^{2}}{\|\bm{\xi}\|^{2}}
=&\
\iint_{(0, \infty)^{2}}
\frac{x}{x+y}
f_{X}(x)f_Y(y)
\mathrm{d}x
\mathrm{d}y
= 
\int_{0}^{\infty}
r
\mathrm{d}
r
\int_{0}^{1} 
tf_{X}(rt)f_Y(r - rt)
\mathrm{d}
t,
\notag
\\
=&\ 
\frac{{e}^{-\frac{\|\bm{\nu}\|^{2}}{2\lambda^{2}}}}{\sqrt{\pi} 2^{\frac{n}{2}} \Gamma\left(\frac{n-1}{2}\right)}
\int_{0}^{\infty}
r^{\frac{n-2}{2}}
e^{-\frac{r}{2}}
{\rm d}r
\int_{0}^{1} 
t^{\frac{1}{2}}
(1-t)^{\frac{n-3}{2}}
\cosh\left (\frac{\|\bm{\nu}\|}{\lambda}\sqrt{rt}\right )
{\rm d}t
\label{equ:int_r_t}
.
\end{align}

Second, the integral can be simplified by modified Bessel function $ I_{\alpha}(\cdot) $.
In detail, from \citet[(9.6.18)]{AS1964handbook} we can derive that
\begin{align}
\int_{0}^{1} 
t^{\frac{1}{2}}
(1-t)^{\frac{n-3}{2}} 
\cosh\left (\frac{\|\bm{\nu}\|}{\lambda}\sqrt{rt}\right )
{\rm d}t
=&\ 
\frac{\sqrt{\pi}2^{\frac{n-2}{2}}}{\left(\frac{\|\bm{\nu}\|}{\lambda}\right)^{\frac{n}{2}}r^{\frac{n}{4}}}
\Gamma\left(\frac{n-1}{2}\right)
\left[
I_{\frac{n}{2}}\left (\frac{\|\bm{\nu}\|}{\lambda} \sqrt{r}\right )
+
\frac{\|\bm{\nu}\|}{\lambda} \sqrt{r}
I_{\frac{n+2}{2}}\left (\frac{\|\bm{\nu}\|}{\lambda} \sqrt{r}\right )
\right]
\label{equ:int_t_solved}
.
\end{align}
Substituting \eqref{equ:int_t_solved} into \eqref{equ:int_r_t}, we have
\begin{align}
\mathbb{E} \frac{\xi_{1}^{2}}{\|\bm{\xi}\|^{2}}
=&\ 
\frac{e^{-\frac{\|\bm{\nu}\|^{2}}{2\lambda^{2}}}}{2\left (\frac{\|\bm{\nu}\|}{\lambda}\right )^{\frac{n}{2}}}
\int_{0}^{\infty}
e^{-\frac{r}{2}}
r^{\frac{n}{4}-1}
\left[
I_{\frac{n}{2}}\left (\frac{\|\bm{\nu}\|}{\lambda} \sqrt{r}\right )
+
\frac{\|\bm{\nu}\|}{\lambda} \sqrt{r}
I_{\frac{n+2}{2}}\left (\frac{\|\bm{\nu}\|}{\lambda} \sqrt{r}\right )
\right]
\mathrm{d}r
.
\label{equ:int_r}
\end{align}

Third, it can be concisely expressed by confluent hypergeometric function. 
Specifically,
by the integral transformation $ x = \sqrt{r} $ and \citet[(9.6.3) and (11.4.28)]{AS1964handbook}, we have
\begin{align}
\int_{0}^{\infty}
e^{-\frac{x^{2}}{2}}
x^{\frac{n}{2} - 1}
I_{\frac{n}{2}}\left (\frac{\|\bm{\nu}\|}{\lambda} x\right )
\mathrm{d}x
=&\ 
\frac
{
	\left(\frac{\|\bm{\nu}\|}{\lambda}\right)^{\frac{n}{2}}
}
{n}
M
\left(
\frac{n}{2},
\frac{n}{2} + 1,
\frac{\|\bm{\nu}\|^{2}}{2\lambda^{2}}
\right)
,
\label{equ:int_I_to_M1}
\\
\int_{0}^{\infty}
e^{-\frac{x^{2}}{2}}
x^{\frac{n}{2}}
I_{\frac{n}{2}+1}\left (\frac{\|\bm{\nu}\|}{\lambda} x\right )
\mathrm{d}x
=&\ 
\frac
{
	\left (\frac{\|\bm{\nu}\|}{\lambda}\right )^{\frac{n}{2}+1}}
{n+2}
M
\left(
\frac{n}{2}+1,
\frac{n}{2}+2,
\frac{\|\bm{\nu}\|^{2}}{2\lambda^{2}}
\right)
.
\label{equ:int_I_to_M2}
\end{align}
Substituting \eqref{equ:int_I_to_M1} and \eqref{equ:int_I_to_M2} into \eqref{equ:int_r}, we have
\begin{align*}
\mathbb{E} \frac{\xi_{1}^{2}}{\|\bm{\xi}\|^{2}}
=&\ 
e^{-\frac{\|\bm{\nu}\|^{2}}{2\lambda^{2}}}
\left[
\frac
{
	1
}
{n}
M
\left(
\frac{n}{2},
\frac{n}{2} + 1,
\frac{\|\bm{\nu}\|^{2}}{2\lambda^{2}}
\right)
+
\frac
{
	\left ( \frac{\|\bm{\nu}\|}{\lambda} \right )^{2}}
{n+2}
M
\left(
\frac{n}{2}+1,
\frac{n}{2}+2,
\frac{\|\bm{\nu}\|^{2}}{2\lambda^{2}}
\right)
\right]
.
\end{align*}
By \citet[(13.1.27)]{AS1964handbook}, we can eliminate
$ e^{-\frac{\|\bm{\nu}\|^{2}}{2\lambda^{2}}} $:
\begin{align*}
\mathbb{E} \frac{\xi_{1}^{2}}{\|\bm{\xi}\|^{2}}
=&\ 
\frac
{1}
{n}
M
\left(
1,
\frac{n}{2} + 1,
-\frac{\|\bm{\nu}\|^{2}}{2\lambda^{2}}
\right)
+
\left ( \frac{\|\bm{\nu}\|}{\lambda} \right )^{2}
\frac
{1}
{n+2}
M
\left(
1,
\frac{n}{2}+2,
-\frac{\|\bm{\nu}\|^{2}}{2\lambda^{2}}
\right)
.
\end{align*}
By \citet[(13.4.4)]{AS1964handbook}, we can combine the two confluent hypergeometric function into a single one:
\begin{align*}
\mathbb{E} \frac{\xi_{1}^{2}}{\|\bm{\xi}\|^{2}}
=&\ 
1
-
\frac{n-1}{n}
M\left(1, \frac{n}{2} + 1,  -\frac{\|\bm{\nu}\|^{2}}{2\lambda^{2}}\right)
.
\end{align*}

Consequently, the closed-form expressions of
$ \mathrm{var}\left(\frac{\xi_{1}}{\|\bm{\xi}\|}\right) $
and 
$ \mathrm{var}\left(\frac{\xi_{2}}{\|\bm{\xi}\|}\right) $ in \eqref{equ:var_xi_1_normed_append} and \eqref{equ:var_xi_2_normed_append}
are solved:
\begin{align*}
\mathrm{var}\left(\frac{\xi_{1}}{\|\bm{\xi}\|}\right)
=&\  
1
-
\frac{n-1}{n}
M\left(1, \frac{n}{2} + 1,  -\frac{\|\bm{\nu}\|^{2}}{2\lambda^{2}}\right)
- 
\left[\varrho_{n}\left(\frac{\|\bm{\nu}\|}{\lambda}\right)\right ]^{2}
,
\\
\mathrm{var}\left(\frac{\xi_{2}}{\|\bm{\xi}\|}\right)
=&\
\frac{1}{n}
M\left(1, \frac{n}{2} + 1,  -\frac{\|\bm{\nu}\|^{2}}{2\lambda^{2}}\right)
,
\end{align*}
We regard $ \frac{\|\bm{\nu}\|}{\lambda} $ as $ x $ and define them as $ f_{n}(x) $ and $ g_{n}(x) $, so we can get \eqref{equ:f} and \eqref{equ:g}.
\end{proof}

\subsection{The Proof of Theorem~\ref{thm:cov_X_Z}}
\label{proof:cov_X_Z}

We first list and prove Lemma~\ref{lem:OTZ}, 
and then prove Theorem~\ref{thm:cov_X_Z}.

\begin{lemma}\label{lem:OTZ}
Given $ \bm{Z} \sim N(\bm{\mu}, \sigma^{2} \Xi), \sigma>0 $, 
where $ \Xi $ is defined in \eqref{equ:Xi},
there exists an orthogonal matrix $ O $ such that
\begin{align}
O^{\T}
P\bm{Z}
\sim
N
\left (
\begin{pmatrix}
\|P\bm{\mu}\| \\
\bm{0} _{n-1} \\
\end{pmatrix}
,
\begin{pmatrix}
(1-\rho)\sigma^{2}I_{n-1} & \\
& 0
\end{pmatrix}
\right )
.
\label{equ:OTZ}
\end{align}
$ O := VQ $. 
$ V $ is given in \eqref{equ:V_short}. 
$ Q := \begin{pmatrix}
Q_{n-1} & \\
& 1
\end{pmatrix} $ where $ Q_{n-1} $ is an $ (n-1) $-dim orthogonal matrix whose first column is the unitization of the first $ (n-1) $ components of $ V^{\T}\bm{\mu} $.
\end{lemma}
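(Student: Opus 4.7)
The plan is to compute the distribution of $O^{\T}P\bm{Z}$ by applying $V^{\T}$ and $Q^{\T}$ in succession, exploiting the compound-symmetry structure of $\Xi$ to simplify $P\Xi P$ and the construction of $Q_{n-1}$ to rotate the residual mean vector onto the first coordinate.

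First I would write $\Xi = (1-\rho)I + \rho\bm{1}\bm{1}^{\T}$ and use $P\bm{1} = \bm{0}$ together with the idempotency of $P$ to obtain $P\Xi P = (1-\rho)P$. Since $P\bm{Z}\sim N(P\bm{\mu},\sigma^{2}P\Xi P)$, this immediately gives $P\bm{Z}\sim N(P\bm{\mu},(1-\rho)\sigma^{2}P)$. Next, from the proof of Theorem~\ref{thm:chi_Z_UZ} the orthogonal matrix $V$ satisfies $V^{\T}PV = \begin{psmallmatrix}I_{n-1}&\\&0\end{psmallmatrix}$, so applying $V^{\T}$ yields
\begin{align*}
V^{\T}P\bm{Z}\sim N\!\left(V^{\T}P\bm{\mu},\;(1-\rho)\sigma^{2}\begin{pmatrix}I_{n-1}&\\&0\end{pmatrix}\right).
\end{align*}
A short computation using the explicit form \eqref{equ:V_short} shows $V^{\T}\bm{1} = (0,\ldots,0,\sqrt{n})^{\T}$, from which the last component of $V^{\T}P\bm{\mu}$ equals zero and its first $n-1$ components coincide with those of $V^{\T}\bm{\mu}$; denote this $(n-1)$-vector by $\bm{\nu}_{n-1}$.

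Second I would apply $Q^{\T}$. Since $Q$ is block-diagonal with orthogonal blocks, $Q^{\T}\begin{psmallmatrix}I_{n-1}&\\&0\end{psmallmatrix}Q = \begin{psmallmatrix}I_{n-1}&\\&0\end{psmallmatrix}$, so the covariance structure in \eqref{equ:OTZ} is preserved. For the mean, the first column of $Q_{n-1}$ is $\bm{\nu}_{n-1}/\|\bm{\nu}_{n-1}\|$ by definition, so orthogonality of the remaining columns gives $Q_{n-1}^{\T}\bm{\nu}_{n-1} = (\|\bm{\nu}_{n-1}\|,0,\ldots,0)^{\T}$, and hence $O^{\T}P\bm{\mu} = (\|\bm{\nu}_{n-1}\|,\bm{0}_{n-1}^{\T})^{\T}$.

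Finally, I would verify $\|\bm{\nu}_{n-1}\| = \|P\bm{\mu}\|$. Since $V$ is orthogonal, $\|V^{\T}\bm{\mu}\|^{2} = \|\bm{\mu}\|^{2}$, and the last component of $V^{\T}\bm{\mu}$ is $\frac{1}{\sqrt{n}}\bm{1}^{\T}\bm{\mu}$, so $\|\bm{\nu}_{n-1}\|^{2} = \|\bm{\mu}\|^{2}-\frac{1}{n}(\bm{1}^{\T}\bm{\mu})^{2}$; on the other hand $\|P\bm{\mu}\|^{2} = \bm{\mu}^{\T}P\bm{\mu}$ gives the same value. This assembles into \eqref{equ:OTZ}. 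No step is truly hard: the only delicate part is bookkeeping the block structure of $V$ and $Q$ and confirming the column-sum identity $V^{\T}\bm{1} = \sqrt{n}\,\bm{e}_{n}$ from \eqref{equ:V_short}, which is what makes the last coordinate of the pushed-forward mean vanish.
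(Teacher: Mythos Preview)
Your proposal is correct and follows essentially the same route as the paper: reduce the covariance via the action of $V$ on $P$ and $\Xi$, then rotate the mean onto the first coordinate using the block-diagonal $Q$. The only cosmetic difference is that you first simplify $P\Xi P = (1-\rho)P$ from the decomposition $\Xi = (1-\rho)I + \rho\bm{1}\bm{1}^{\T}$ and then apply $V^{\T}PV = \begin{psmallmatrix}I_{n-1}&\\&0\end{psmallmatrix}$, whereas the paper instead uses $V^{\T}P = \begin{psmallmatrix}I_{n-1}&\\&0\end{psmallmatrix}V^{\T}$ together with the explicit diagonalization $V^{\T}\Xi V = \begin{psmallmatrix}(1-\rho)I_{n-1}&\\&1+(n-1)\rho\end{psmallmatrix}$; the two computations are equivalent and equally short.
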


\begin{proof}[The proof of Lemma~\ref{lem:OTZ}]
By the definition of $ V $ and $ P $, 
$ V^{\T}P\bm{\mu} $ can be expressed as
\begin{align*}
V^{\T}P\bm{\mu} 
=&\
\begin{pmatrix}
I_{n-1} & \\
& 0
\end{pmatrix}
V^{\T}
\bm{\mu}
.
\end{align*}
We have the expectation in \eqref{equ:OTZ}.
\begin{align*}
\mathbb{E}
\left (
O^{\T}P\bm{Z}
\right )
=&\
\begin{pmatrix}
Q_{n-1}^{\T} & \\
& 0
\end{pmatrix}
V^{\T} \bm{\mu}
=
\begin{pmatrix}
\|P\bm{\mu}\| \\
\bm{0}_{n-1}
\end{pmatrix}
.
\end{align*}
The last equation is because the construction of the first column of $ Q_{n-1} $ is the unitization of the first $ (n-1) $ components of $ V^{\T}\bm{\mu} $.

Furthermore, we show the expression of the covariance matrix in \eqref{equ:OTZ}.
\begin{align*}
\mathrm{cov}
\left (
O^{\T}
P
\bm{Z}
\right )
=&\
Q^{\T}
\begin{pmatrix}
I_{n-1} & \\
& 0
\end{pmatrix}
V^{\T}
\Xi
V
\begin{pmatrix}
I_{n-1} & \\
& 0
\end{pmatrix}
Q
,
\\
=&\
\begin{pmatrix}
Q_{n-1}^{\T} & \\
& 0
\end{pmatrix}
\left [
\sigma^{2}
\begin{pmatrix}
(1-\rho)I_{n-1} & \\
& 1 + (n-1)\rho
\end{pmatrix}
\right ]
\begin{pmatrix}
Q_{n-1} & \\
& 0
\end{pmatrix}
,
\\
=&\
\begin{pmatrix}
(1-\rho)\sigma^{2}I_{n-1} & \\
& 0
\end{pmatrix}
.
\end{align*}

Therefore, the expectation and the covariance matrix of $ O^{\T}\bm{Z} $ are proved.
\end{proof}

\begin{proof}[The proof of Theorem~\ref{thm:cov_X_Z}]
	
	We first convert the expression of $ \mathrm{var}\left(T_{\bm{\chi}(\bm{Z})}(\bm{\theta})\right) $ by Lemma~\ref{lem:OTZ}, 
	and then prove the theorem by Theorem~\ref{thm:covariance_matrix}.
	
	First, 
	we show that 
	$ 	\mathrm{var}\left(T_{\bm{\chi}(\bm{Z})}(\bm{\theta})\right) $
	can be expressed by $ 	\mathrm{cov}
	\left(
	\frac{\bm{\eta}_{n-1}}{\|\bm{\eta}_{n-1}\|}
	\right) $, where $ \bm{\eta}_{n-1} $ is independent and has identical variance.
	In detail,
	by Lemma~\ref{lem:OTZ},
	there exists an orthogonal matrix $ O $ such that
	\begin{align*}
	\mathrm{var}\left(T_{\bm{\chi}(\bm{Z})}(\bm{\theta})\right)
	=&\ 
	\bm{\theta}^{\T}
	\mathrm{cov}
	\left(
	\frac{P\bm{Z}}{\|P\bm{Z}\|}
	\right)
	\bm{\theta}
	=
	\left (
	O^{\T}
	\bm{\theta}
	\right )^{\T}
	\mathrm{cov}
	\left(
	\frac{O^{\T}P\bm{Z}}{\|O^{\T}P\bm{Z}\|}
	\right)
	O^{\T}
	\bm{\theta}
	.
	\end{align*}
	We denote $ 
	\bm{\eta} := O^{\T}P\bm{Z},
	\bm{\delta} := O^{\T}\bm{\theta}.
	 $
	Then the last components of $ \bm{\eta} $ and $ \bm{\delta} $ are always $ 0 $, 
	so we have
	\begin{align*}
	\mathrm{var}\left(T_{\bm{\chi}(\bm{Z})}(\bm{\theta})\right)
	=&\
	\begin{pmatrix}
	\bm{\delta}_{n-1} \\
	0
	\end{pmatrix}^{\T}
	\begin{pmatrix}
	\mathrm{cov}
	\left(
	\frac{\bm{\eta}_{n-1}}{\|\bm{\eta}_{n-1}\|}
	\right) & \\
	& 0
	\end{pmatrix}
	\begin{pmatrix}
	\bm{\delta}_{n-1} \\
	0
	\end{pmatrix}
	,
	\\
	=&\
	\bm{\delta}_{n-1}^{\T}
	\mathrm{cov}
	\left(
	\frac{\bm{\eta}_{n-1}}{\|\bm{\eta}_{n-1}\|}
	\right)
	\bm{\delta}_{n-1}
	,
	\end{align*}
	where $ \bm{\eta}_{n-1} $ is the first $ (n-1) $ components of $ \bm{\eta} $ and $ \bm{\delta}_{n-1} $ is the first $ (n-1) $ components of $ \bm{\delta} $.
	The distribution of $ \bm{\eta}_{n-1} $ is
	\begin{align*}
	\bm{\eta}_{n-1}
	\sim
	N
	\left (
	\begin{pmatrix}
	\|P\bm{\mu}\| \\
	\bm{0} _{n-2} \\
	\end{pmatrix}
	,
	(1-\rho)\sigma^{2}I_{n-1}
	\right ).
	\end{align*}
	
	Second,
	we give the expression of 
	$ \mathrm{var}\left(T_{\bm{\chi}(\bm{Z})}(\bm{\theta})\right) $.
	By Theorem~\ref{thm:covariance_matrix},
	we have 
	\begin{align*}
	\mathrm{cov}\left(\frac{\bm{\eta}_{n-1}}{\|\bm{\eta}_{n-1}\|}\right)
	=&\ 
	\begin{pmatrix}
	f_{n-1}\left(\frac{\|P\bm{\mu}\|}{\sigma\sqrt{1-\rho}}\right)  &  \\
	& g_{n-1}\left(\frac{\|P\bm{\mu}\|}{\sigma\sqrt{1-\rho}}\right)I_{n-2}
	\end{pmatrix}
	,
	\end{align*}
	where $ f_{n-1} $ and $ g_{n-1} $ are in \eqref{equ:f} and \eqref{equ:g}.
	So
	\begin{align*}
	\mathrm{var}\left(T_{\bm{\chi}(\bm{Z})}(\bm{\theta})\right)
	=&\
	\bm{\delta}_{n-1}^{\T}
	\begin{pmatrix}
	f_{n-1}\left(\frac{\|P\bm{\mu}\|}{\sigma\sqrt{1-\rho}}\right)  & \\
	& g_{n-1}\left(\frac{\|P\bm{\mu}\|}{\sigma\sqrt{1-\rho}}\right)
	I_{n-2}
	\end{pmatrix}
	\bm{\delta}_{n-1}
	.
	\end{align*}
	Note that $ \bm{\delta}_{n-1} $ is a unit vector, so
	\begin{align*}
	\bm{\delta}_{n-1}^{\T}
	\begin{pmatrix}
	1 &  \\
	& 0_{n-2}   \\
	\end{pmatrix}
	\bm{\delta}_{n-1}
	+
	\bm{\delta}_{n-1}^{\T}
	\begin{pmatrix}
	0 &  \\
	& I_{n-2}   \\
	\end{pmatrix}
	\bm{\delta}_{n-1}
	=
	1
	.
	\end{align*}
	Therefore,
	\begin{align*}
	\mathrm{var}\left(T_{\bm{\chi}(\bm{Z})}(\bm{\theta})\right)
	=&\ 
	f_{n-1}\left(\frac{\|P\bm{\mu}\|}{\sigma\sqrt{1-\rho}}\right)\cdot
	\bm{\delta}_{n-1}^{\T}
	\begin{pmatrix}
	1 &  \\
	& 0_{n-2}   \\
	\end{pmatrix}
	\bm{\delta}_{n-1}
	+
	g_{n-1}\left(\frac{\|P\bm{\mu}\|}{\sigma\sqrt{1-\rho}}\right)\cdot
	\left[
	1
	-
	\bm{\delta}_{n-1}^{\T}
	\begin{pmatrix}
	1 &  \\
	& 0_{n-2}   \\
	\end{pmatrix}
	\bm{\delta}_{n-1}
	\right]
	,
	\\
	=&\
	\left[
	f_{n-1}\left(\frac{\|P\bm{\mu}\|}{\sigma\sqrt{1-\rho}}\right)
	-
	g_{n-1}\left(\frac{\|P\bm{\mu}\|}{\sigma\sqrt{1-\rho}}\right)
	\right]\cdot
	\bm{\delta}_{n-1}^{\T}
	\begin{pmatrix}
	1 &  \\
	& 0_{n-2}   \\
	\end{pmatrix}
	\bm{\delta}_{n-1}
	+
	g_{n-1}\left(\frac{\|P\bm{\mu}\|}{\sigma\sqrt{1-\rho}}\right)
	,
	\\
	=&\
	\left[
	f_{n-1}\left(\frac{\|P\bm{\mu}\|}{\sigma\sqrt{1-\rho}}\right)
	-
	g_{n-1}\left(\frac{\|P\bm{\mu}\|}{\sigma\sqrt{1-\rho}}\right)
	\right]\cdot
	\left(
	\begin{pmatrix}
	1 \\
	\bm{0}_{n-2}
	\end{pmatrix}^{\T}
	\bm{\delta}_{n-1}
	\right)^{2}
	+
	g_{n-1}\left(\frac{\|P\bm{\mu}\|}{\sigma\sqrt{1-\rho}}\right)
	.
	\end{align*}
	
	Last, we convert $ 	\begin{psmallmatrix}
	1 \\
	\bm{0}_{n-2}
	\end{psmallmatrix}^{\T}
	\bm{\delta}_{n-1} $ into
	$ \bm{\chi}(\bm{\mu})^{\T}\bm{\theta} $.
	To be more specific,
	\begin{align*}
	\begin{pmatrix}
	1 \\
	\bm{0}_{n-2}
	\end{pmatrix}^{\T}
	\bm{\delta}_{n-1}
	=&\
	\begin{pmatrix}
	1 \\
	\bm{0}_{n-1}
	\end{pmatrix}^{\T}
	\bm{\delta}
	=
	\begin{pmatrix}
	1 \\
	\bm{0}_{n-1}
	\end{pmatrix}^{\T}
	O^{\T}
	\bm{\theta}
	=
	\left (
	Q
	\begin{pmatrix}
	1 \\
	\bm{0}_{n-1}
	\end{pmatrix}
	\right )^{\T}
	V^{\T}
	\bm{\theta}
	.
	\end{align*}
	Note the construction of $ Q = \begin{pmatrix}
	Q_{n-1} & \\
	& 1
	\end{pmatrix} $, and the first column of $ Q_{n-1} $ is the unitization of the first $ (n-1) $ components of $ V^{\T}\bm{\mu} $,
	i.e.
	$
	\frac{1}{\left \|
		\begin{pmatrix}
		I_{n-1} & \bm{0}_{n-1}
		\end{pmatrix}V^{\T}\bm{\mu}\right \|} 
	\begin{pmatrix}
	I_{n-1} & \bm{0}_{n-1}
	\end{pmatrix}V^{\T}\bm{\mu}
	$,
	so
	\begin{align*}
	Q
	\begin{pmatrix}
	1 \\
	\bm{0}_{n-1}
	\end{pmatrix}
	=
	\frac{V^{\T}P\bm{\mu}}{\|V^{\T}P\bm{\mu}\|}
	.
	\end{align*}
	Therefore,
	\begin{align*}
	\left (
	Q
	\begin{pmatrix}
	1 \\
	\bm{0}_{n-1}
	\end{pmatrix}
	\right )^{\T}
	V^{\T}
	\bm{\theta}
	=&\
	\left (
	\frac{P\bm{\mu}}{\|P\bm{\mu}\|}
	\right )^{\T}
	\bm{\theta}
	=
	\bm{\chi}(\bm{\mu})^{\T}
	\bm{\theta}
	.
	\end{align*}
	
	Consequently,
	\begin{align*}
	\mathrm{var}\left(T_{\bm{\chi}(\bm{Z})}(\bm{\theta})\right)
	=&\
	\left[
	f_{n-1}\left(\frac{\|P\bm{\mu}\|}{\sigma\sqrt{1-\rho}}\right)
	-
	g_{n-1}\left(\frac{\|P\bm{\mu}\|}{\sigma\sqrt{1-\rho}}\right)
	\right]\cdot
	\left(
	\bm{\chi}(\bm{\mu})^{\T}\bm{\theta}
	\right)^{2}
	+
	g_{n-1}\left(\frac{\|P\bm{\mu}\|}{\sigma\sqrt{1-\rho}}\right)
	.
	\end{align*}
\end{proof}

\end{document}